\DeclareMathOperator{\Id}{Id}              
\DeclareMathOperator{\Ker}{Ker}           
\DeclareMathOperator{\Lin}{Lin}           
\DeclareMathOperator{\Res}{Res}         
\DeclareMathOperator{\Tad}{Tad}                 
\DeclareMathOperator{\Tr}{Tr}                 
\newtheorem{assumption}{Assumption}[section]
\newtheorem{theorem}[assumption]{Theorem}
\newtheorem{corollary}[assumption]{Corollary}
\newtheorem{lemma}[assumption]{Lemma}
\newtheorem{definition}[assumption]{Definition}
\newtheorem{prop}[assumption]{Proposition}
\newtheorem{remark}[assumption]{Remark}
\newcommand{\Th}{\Theta}
\newcommand{\A}{\mathcal{A}}              
\renewcommand{\a}{\alpha}                    
\renewcommand{\b}{\beta}                    
\newcommand{\B}{\mathcal{B}}              
\newcommand{\C}{\mathbb{C}}              
\newcommand{\CC}{\mathcal{C}}              
\newcommand{\del}{\partial}                    
\newcommand{\DD}{\mathcal{D}}           
\newcommand{\eps}{\varepsilon}          
\newcommand{\Ga}{\Gamma}                
\newcommand{\ga}{\gamma}                 
\renewcommand{\H}{\mathcal{H}}          
\newcommand{\half}{{\mathchoice{\thalf}{\thalf}{\shalf}{\shalf}}}
\newcommand{\hideqed}{\renewcommand{\qed}{}} 
\renewcommand{\L}{\mathcal{L}}          
\newcommand{\la}{\lambda}                   
\newcommand{\M}{\mathcal{M}}          
\newcommand{\N}{\mathbb{N}}            
\newcommand{\om}{\omega}                  
\newcommand{\ox}{\otimes}                  
\newcommand{\pa}{\partial}
\newcommand{\R}{\mathbb{R}}             
\newcommand{\set}[1]{\{\,#1\,\}}              
\newcommand{\shalf}{{\scriptstyle\frac{1}{2}}} 
\renewcommand{\SS}{\mathcal{S}}        
\newcommand{\thalf}{\tfrac{1}{2}}            
\newcommand{\wt}{\widetilde}                 
\newcommand{\Z}{\mathbb{Z}}                 
\def\<#1,#2>{\langle#1\,,\,#2\rangle}      
\newcommand{\norm}[1]{\left\lVert#1\right\rVert}    
\newcommand{\be}{\begin{enumerate}}
\newcommand{\ee}{\end{enumerate}}
\newbox\ncintdbox \newbox\ncinttbox
\newcommand{\ncint}{\mathop{\mathchoice{\copy\ncintdbox}
    {\copy\ncinttbox}{\copy\ncinttbox}
    {\copy\ncinttbox}}\nolimits}
\newcommand{\sg}{\sigma}                              
\begin{document}

\thispagestyle{empty}

\begin{center}

CENTRE DE PHYSIQUE TH\'EORIQUE$\,^1$\\
CNRS--Luminy, Case 907\\
13288 Marseille Cedex 9\\
FRANCE\\

\vspace{3cm}

{\Large\textbf{Tadpoles and commutative spectral triples}} \\
\vspace{0.5cm}

{\large  B. Iochum$^{1, 2}$, C. Levy$^{1, 2}$} \\

\vspace{1.5cm}

{\large\textbf{Abstract}}
\end{center}

\begin{quote}
Using the Chamseddine--Connes approach of the noncommutative action on spectral triples, we show that there are no tadpoles of any order for compact spin manifolds without boundary, and also consider a case of a chiral boundary condition. Using pseudodifferential techniques, we track zero terms in spectral actions.
\end{quote}

\vspace{2cm}

\noindent
PACS numbers: 11.10.Nx, 02.30.Sa, 11.15.Kc

MSC--2000 classes: 46H35, 46L52, 58B34

CPT-P005-2009

\vspace{5.5cm}

{\small
\noindent $^1$ UMR 6207

-- Unit\'e Mixte de Recherche du CNRS et des
Universit\'es Aix-Marseille I, Aix-Marseille II et de l'Universit\'e
du Sud Toulon-Var

-- Laboratoire affili\'e \`a la FRUMAM -- FR 2291\\
$^2$ Also at Universit\'e de Provence,
iochum@cpt.univ-mrs.fr, levy@cpt.univ-mrs.fr
}

\newpage

\section{Introduction}

The history of noncommutative residue is now rather long
\cite{Kassel}, so we sketch it only briefly: after some approaches by
Adler \cite{Adler} and  Manin \cite{Manin} on the Korteweg-de Vries
equation using a trace on the algebra of formal pseudodifferential
operators in one dimension, and of Guillemin with his "soft" proof of
Weyl's law on the eigenvalues of an elliptic operator
\cite{Guillemin}, the noncommutative residue  in any dimension was essentially initiated par Wodzicki in his thesis \cite{Wodzicki2}. This residue gives the unique non-trivial trace on the algebra of pseudodifferential operators. Then, a link between this residue and the Dixmier's trace was given by Connes in \cite{Connesaction}. Thanks to Connes again \cite{Book,Cgeom}, the setting of classical pseudodifferential operators on Riemannian
manifolds without boundary was extended to a noncommutative geometry
where the manifold is replaced by a non necessarily commutative
algebra $\A$ plus a Dirac-like operator $\DD$ via the notion of
spectral triple $(\A,\,\H,\,\DD)$ where $\H$ is the Hilbert space
acted upon by $\A$ and $\DD$. The previous Dixmier's trace is extended to the algebra of pseudodifferential operators naturally associated to the triple $(\A,\,\H,\,\DD)$. This spectral point of view appears quite natural in the general framework of noncommutative geometry which goes beyond Riemannian geometry. From a physicist point of view, this framework has many advantages: the spectral approach is motivated by quantum physics but not only since classical observables and infinitesimals are now on the same footing and even Dixmier's trace is related to renormalization.  It is amazing to observe that most of classical geometrical notions like those defined in relativity or particle physics can be extended in
this really noncommutative setting. Among others, some physical
actions still makes sense as in \cite{Connesaction}  where Dixmier's trace is used to compute the Yang--Mills action in the context of noncommutative differential geometry. Another example is the Einstein--Hilbert action: on a compact
spin Riemannian 4-manifold, $\ncint \DD^{-2}$ coincides (up to a
universal scalar) with the Einstein--Hilbert action, where $\ncint$ is
precisely the noncommutative residue, a point first noticed by Connes; then, there were some brute force proof \cite{Kastler} and generalization \cite{KW} (see also \cite{Ack}) of this fact which is particularly relevant here.

Since then, the case of compact manifolds with
boundary have been studied, making clearer the links between
noncommutative residues, Dixmier's trace and heat kernel expansion. This was made 
using Boutet de Monvel's algebra \cite{FGLS, Schrohe, GSc}, in the
case of conical singularities \cite{Schrohe1, Lescure} or when the
symbols are log-polyhomogeneous \cite{Lesch}. Besides, the
applications of noncommutative residues for such manifolds to classical gravity has begun
\cite{Wang}, and better, when the gravity is unified with fundamental
interactions \cite{CC2}. Needless to say that in field theory, the
one-loop calculation divergences, anomalies and different asymptotics
of the effective action are directly obtained from the heat kernel
method \cite{V}, so all of the above quoted mathematical results have
profound applications to physics. 
\medskip

The Chamseddine--Connes action \cite{CC} associated to a spectral
triple $(\A,\, \H, \,\DD)$ is, for a one-form $A=\sum_i a_i[\DD,b_i]$,
$a_i,\,b_i\in \A$
\begin{align}
   \label{formuleaction}
\SS(\DD_{A},\Phi,\Lambda) \, = \,\sum_{0<k\in Sd^+} \Phi_{k}\,
\Lambda^{k} \ncint \vert \DD_{A}\vert^{-k} + \Phi(0) \,
\zeta_{\DD_{A}}(0) +\mathcal{O}(\Lambda^{-1})
\end{align}
where $\DD_{A}:=\DD+A$ (or $\DD_{\tilde A}:=\DD +\wt A 
, \wt A:=A+\epsilon JAJ^{-1}$ in the real case),)  $\Phi_{k}=
\half\int_{0}^{\infty} \Phi(t) \, t^{k/2-1} \, dt$ and
$Sd^+$ is the strictly positive part of the dimension spectrum of the 
spectral triple. When $\DD_A$ is not invertible, we invert in \eqref{formuleaction} the invertible operator $\DD_A+P_A$ where $P_A$ is the projection on $\Ker \DD_A$ which is a finite dimensional space.

The coefficient $\zeta_{\DD_A}(0)$ related to the
constant term in (\ref{formuleaction}) can be computed from the unperturbed spectral
action since it has been proved in \cite{CC1} (with an invertible
Dirac operator and a 1-form $A$ such that $\DD+A$ is also invertible)
that
\begin{align}
   \label{constant}
\zeta_{\DD+A}(0)-\zeta_{\DD}(0)= \sum_{q=1}^{n}\tfrac{(-1)^{q}}{q}
\ncint (A\DD^{-1})^{q},
\end{align}
using $\zeta_X(s)=\Tr(|X|^{-s})$.

It is important to be able to compute \eqref{formuleaction} and here,
we look at possible cancellation of terms in this formula. We focus
essentially on commutative spectral triples, where we show that there
are no tadpoles, i.e. terms like $\ncint A\DD^{-1}$ are zero: in field
theory, $\DD^{-1}$ is the Feynman propagator and $A\DD^{-1}$ is a
one-loop graph with fermionic internal line and only one external
bosonic line $A$ looking like a tadpole. More generally, the tadpoles are the $A$-linear terms in \eqref{formuleaction}.
\begin{fmffile}{tadpolegraph}
\[
{\begin{picture}(100,60)
\put(0,0){\begin{fmfgraph}(50,60)
\fmfleft{l}
\fmfright{r}
\fmffreeze
\fmf{photon,tension=3}{r,w}
\fmf{photon,tension=3}{w,v}
\fmf{fermion,right}{v,l}
\fmf{fermion,right}{l,v}
\end{fmfgraph}}
\put(-10,45){\mbox{${\cal D}^{-1}$}}
\put(55,27){\mbox{$A$}}
\end{picture}} 
\]
\end{fmffile}
\vspace{-1cm}

In \cite{Ponge}, few computations of $\ncint \vert \DD \vert^{-k}$ are presented and formula like \eqref{constant} also appears in \cite{LP} in the context of pseudodifferential elliptic operators.

For examples of spectral action in the real noncommutative setting,
see \cite{CCM, Carminati, Knecht} for the case of almost commutative
cases which pops up in particle physics and \cite{GI2002} for the
Moyal plane (and few points for non compact manifolds \cite{GIV}),
\cite{GIVas, MCC} for the noncommutative torus and \cite{MC} for the
quantum group $SU_q(2)$. In this last case, there are tadpoles.

As a starting point, we investigate in section 2 the existence of tadpoles for manifolds with boundaries, considering after Chamseddine and Connes \cite{CC2} the case of a chiral boundary condition on the Dirac operator. One of their original motivations was to show that the first two terms in spectral action come with the right ratio and sign for their coefficients as in the modified Euclidean action used in gravitation. We generalize this approach to the perturbed Dirac operator by an internal fluctuation, ending up with no tadpoles up to order 5 (see definition \ref{Deftadpole}.)

However, this approach stems from explicit computations of first heat kernel coefficients, so we cannot conclude that other integrals of the same type as tadpoles are zero. It is then natural to restrict to manifolds without  boundary via a different method. 

We gather in section 3 some basic results concerning the use of the reality operator $J$. After some useful facts using the link between $\ncint$ and the Wodzicki residue, we conclude in section 4 that a lot of terms in \eqref{formuleaction} are zero, using pseudodifferential techniques. 

Few definitions about pseudodifferential operators, dimension spectrum have been postponed in the appendix.

\section{Tadpoles and compact spin manifolds with boundary}

Let $M$ be a smooth compact Riemannian $d$-dimensional manifold with
smooth boundary $\del M$ and $V$ be a given smooth vector bundle on
$M$. We denote $dx$ (resp. $dy$) the Riemannian volume form on $M$ (resp. on $\del M$.)

Recall that a differential operator $P$ is of Laplace type if it has locally the form
\begin{equation}
P = - (g^{\mu\nu} \del_\mu \del_\nu + \mathbb{A}^\mu\del_\mu +\mathbb{B})  \label{Lapl}
\end{equation}
where $(g^{\mu\nu})_{1\leq \mu,\nu\leq d}$ is the inverse matrix associated
to the metric $g$ on $M$, and $\mathbb{A}^\mu$ and $\mathbb{B}$ are smooth
$L(V)$-sections on $M$ (endomorphisms). A differential operator $D$ is of Dirac type if $D^2$ is of Laplace type, or equivalently if it has locally the following form
$$
D = -i \ga^\mu \del_\mu + \phi
$$
where $(\ga^\mu)_{1\leq \mu\leq d}$ gives $V$ a Clifford module structure: $\set{\ga^\mu,\ga^\nu}=2g^{\mu\nu}\Id_V$, ${(\ga^\mu)}^*=\ga^\mu$.

A particular case of Dirac operator is given by the following formula 
\begin{equation}
D = -i\ga^\mu (\del_\mu + \om_\mu) \label{phiDirac}
\end{equation}
where the $\om_\mu$ are in $C^\infty\big(L(V)\big)$.

If $P$ is a Laplace type operator of the form (\ref{Lapl}), then (see \cite[Lemma 1.2.1]{Gilkey2})
there is an unique connection $\nabla$ on $V$ and an unique
endomorphism $E$ such that $P = L(\nabla,E)$ where by definition
\begin{align*}
&L(\nabla,E) :=  -(\Tr_g \nabla^2  + E), \quad \nabla^2(X,Y):= [\nabla_X,\nabla_Y] -\nabla_{\nabla^{g}_X Y} \, ,
\end{align*}
$X,Y$ are vector fields on $M$ and $\nabla^g$ is the Levi-Civita connection on $M$. Locally 
$$
\Tr_g \nabla^2 := g^{\mu\nu}(\nabla_\mu \nabla_\nu -\Ga^{\rho}_{\mu\nu} \nabla_\rho)
$$
where $\Ga^{\rho}_{\mu \nu}$ are the Christoffel coefficients of $\nabla^g$.
Moreover (with local frames of $T^*M$ and $V$), $\nabla =
dx^\mu\ox (\del_\mu +\om_\mu)$ and $E$ are related to $g^{\mu\nu}$,
$\mathbb{A}^\mu$ and $\mathbb{B}$ through
\begin{align}
\om_\nu&=  \half g_{\nu\mu}(\mathbb{A}^\mu +g^{\sg\eps} \Ga_{\sg \eps}^{\mu}\Id )
\label{omeganu}\, ,\\
E&=  \mathbb{B}-g^{\nu\mu}(\del_{\nu} \om_\mu +\om_\nu\om_\mu -\om_\sg
\Ga_{\nu\mu}^\sg ) \label{EEquation}  \, .
\end{align}

Suppose that $P=L(\nabla,E)$ is a Laplace type operator on $M$, and assume that $\chi$ is an endomorphism of $V_{\del M}$ so that
$\chi^2=\Id_V$. We extend $\chi$ on a collar neighborhood $\CC$ of $\del M$
in $M$ with the condition $\nabla_d \,(\chi) = 0$ where the
$d^{th}$-coordinate here is the radial coordinate (the geodesic
distance of a point in $M$ to the boundary $\del M$.)

Let $V_\pm:=\Pi_{\pm} V$ the sub-bundles of $V$ on $\CC$ where $\Pi_\pm:=\half(\Id_V\pm \chi)$ are the projections on the $\pm1$ eigenvalues of $\chi$.
We also fix an auxiliary endomorphism $S$ on ${V_+}_{\del M}$ extended to $\CC$.

This allows to define the mixed boundary operator $\B=\B(\chi,S)$ as
\begin{equation}
\B s := \Pi_+ (\nabla_{d} +S)\Pi_+ s_{ |\del M} \oplus
\Pi_{-}s_{|\del M}\,, \quad s\in C^\infty(V) .\label{Mixed}
\end{equation}
These boundary conditions generalizes Dirichlet ($\Pi_-=\Id_V$) and Neumann--Robin ($\Pi_+=\Id_V$) conditions.

We define $P_\B$ as the realization of $P$ on $\B$, that is to say the closure of $P$ defined
on the space of smooth sections of $V$ satisfying the boundary condition $\B s=0$.
\medskip

We are interested in the behavior of heat kernel coefficients $a_{d-n}$ defined through its expansion as $\Lambda\to \infty$ (see \cite[Theorem 1.4.5]{Gilkey2})
$$
\Tr(e^{-\Lambda^{-2} D^2_\B}) \sim \sum_{n\geq 0} \Lambda^{d-n}\,
a_{d-n}(D,\B)
$$ 
where $D$ is a self-adjoint Dirac type operator. Moreover, we will use a perturbation $D\to D+A$, where $A$ is a 1-form (a linear combination of terms of the type $f[D,g]$, where $f$ and $g$ are smooth functions on $M$). More precisely, we investigate the linear
dependence of these coefficients with respect to $A$. It is clear
that, since $A$ is differential operator of order 0, a perturbation
$D\mapsto D+A$ transforms a Dirac type operator into another Dirac
type operator. 

This perturbation has consequences on the $E$ and $\nabla$ terms:

\begin{lemma}
\label{Perturbation}
Let $D$ be a Dirac type operator locally of the form
(\ref{phiDirac}) such that $\nabla_\mu:=\partial_\mu + \om_\mu$ is
connection compatible with the Clifford action $\ga$. Let $A$ be a 1-form associated to $D$, so that $A$ is locally of the form $-i\ga^\mu a_\mu$ with $a_\mu \in C^\infty(U)$, $(U,x_\mu)$ being a local coordinate frame on $M$.  

Then $(D+A)^2=L(\nabla^A,E^A)$ and $D^2=L(\nabla,E)$ where, 
\begin{align*}
&\om^A_{\mu}= \om_\mu + a_\mu\, ,\,\,  \text{thus }
\nabla^A_\mu=\nabla_\mu+a_\mu \,\text{Id}_V,\\
&E^A = E +\tfrac{1}{4}[\ga^\mu, \ga^\nu]F_{\mu \nu} , \quad E=\tfrac{1}{2}\ga^\mu \ga^\nu [\nabla_{\mu}, \nabla_\nu] , \quad F_{\mu\nu}:=\partial_\mu(a_\nu)-\partial_\nu(a_\mu)
\end{align*}
Moreover, the curvature of the connection $\nabla^A\,$is
$\Omega_{\mu \nu}^{A}=\Omega_{\mu \nu}+F_{\mu \nu}$, where $\Omega_{\mu\nu}=[\nabla_\mu,\nabla_\nu]$.

In particular $\Tr E^A=\Tr E$.
\end{lemma}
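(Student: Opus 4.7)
The plan is to proceed by a direct local computation, exploiting the observation that $A = -i\gamma^\mu a_\mu$ acts on $V$ as a scalar endomorphism whose components $a_\mu\Id_V$ commute with everything.

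First I would note that $D+A = -i\gamma^\mu(\del_\mu + \om_\mu + a_\mu)$, so $D+A$ already has the local form (\ref{phiDirac}) with shifted connection coefficients $\om^A_\mu := \om_\mu + a_\mu$. Consequently $\nabla^A_\mu = \nabla_\mu + a_\mu\Id_V$, and since $\om^A_\mu$ takes values in $L(V)$, the operator $(D+A)^2$ is of Laplace type with the same connection-building machinery as $D^2$; what remains is to identify the endomorphism $E^A$.

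Next I would establish the Lichnerowicz-type identity $D^2 = L(\nabla, E)$ with $E = \tfrac{1}{2}\ga^\mu\ga^\nu[\nabla_\mu,\nabla_\nu]$. Expanding $D^2 = -\ga^\mu\nabla_\mu(\ga^\nu\nabla_\nu)$ and using that the Clifford-compatibility of $\nabla$ forces $\nabla_\mu(\ga^\nu) = -\Ga^\nu_{\mu\rho}\ga^\rho$ in a coordinate basis, one obtains $D^2 = -\ga^\mu\ga^\nu\nabla_\mu\nabla_\nu + \ga^\mu\ga^\rho\Ga^\nu_{\mu\rho}\nabla_\nu$ modulo relabelling. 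Splitting $\ga^\mu\ga^\nu = g^{\mu\nu}\Id_V + \tfrac{1}{2}[\ga^\mu,\ga^\nu]$, the symmetric piece reconstructs $-g^{\mu\nu}(\nabla_\mu\nabla_\nu - \Ga^\rho_{\mu\nu}\nabla_\rho) = -\Tr_g\nabla^2$, while the antisymmetric piece yields $-\tfrac{1}{2}[\ga^\mu,\ga^\nu]\nabla_\mu\nabla_\nu = -\tfrac{1}{2}\ga^\mu\ga^\nu[\nabla_\mu,\nabla_\nu] = -E$, matching \eqref{EEquation}. This $\Ga$-bookkeeping is the main obstacle: one must check that every Christoffel symbol produced by commuting $\ga^\nu$ past $\nabla_\mu$ is absorbed into the $\Tr_g\nabla^2$ term and none survives in $E$.

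The same formula applied to $D+A$ gives $E^A = \tfrac{1}{2}\ga^\mu\ga^\nu[\nabla^A_\mu,\nabla^A_\nu]$. Now a two-line computation of the curvature of $\nabla^A$: because each $a_\mu$ is a central scalar, $[\nabla^A_\mu,\nabla^A_\nu] = [\nabla_\mu,\nabla_\nu] + [\nabla_\mu, a_\nu\Id_V] - [\nabla_\nu, a_\mu\Id_V] = \Om_{\mu\nu} + (\del_\mu a_\nu - \del_\nu a_\mu)\Id_V = \Om_{\mu\nu} + F_{\mu\nu}$. Since $F$ and $\Om$ are antisymmetric in $\mu,\nu$, their contractions with $\ga^\mu\ga^\nu$ and with $\tfrac{1}{2}[\ga^\mu,\ga^\nu]$ coincide; therefore $E^A - E = \tfrac{1}{4}[\ga^\mu,\ga^\nu]F_{\mu\nu}$, as claimed.

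Finally, $\Tr E^A = \Tr E$ is immediate from cyclicity of the fibrewise trace, which gives $\Tr[\ga^\mu,\ga^\nu] = 0$ for every pair $\mu,\nu$, killing the $F$-dependent contribution.
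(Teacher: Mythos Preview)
Your proof is correct and follows essentially the same route as the paper's: both expand $-(D+A)^2 = \ga^\mu\nabla^A_\mu\ga^\nu\nabla^A_\nu$, use Clifford compatibility $[\nabla_\mu,\ga^\nu]=-\Ga^\nu_{\mu\rho}\ga^\rho$ and the symmetry of $\Ga^\rho_{\mu\nu}$ to isolate the $\Tr_g\nabla^2$ part, and identify $E^A=\tfrac{1}{2}\ga^\mu\ga^\nu[\nabla^A_\mu,\nabla^A_\nu]$ before reading off the curvature shift $\Omega^A_{\mu\nu}=\Omega_{\mu\nu}+F_{\mu\nu}$. The only cosmetic difference is that you first establish the Lichnerowicz identity for $D$ and then invoke it for $D+A$, whereas the paper carries $A$ through the single computation from the start.
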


\begin{proof}
This is quoted in \cite[equation (3.27)]{V}.

$(D+A)^2=L(\nabla^A,E^A):=-g^{\mu\nu}(\nabla^A_\mu \nabla^A_\nu-\Gamma_{\mu\nu}^\rho \nabla^A_\rho)-E^A$ and we get with $\nabla^A_\mu:=\nabla_\mu+a_\mu \text{Id}_V$:
\begin{align}
-(D+A)^2&=\ga^\mu\nabla^A_\mu\ga^\nu\nabla^A_\nu=\ga^\mu[\nabla^A_\mu,\ga^\nu]\nabla^A_\nu+\ga^\mu\ga^\nu \nabla^A_\mu \nabla^A_\nu \nonumber \\
&=\ga^\mu[\nabla_\mu,\ga^\nu]\nabla^A_\nu+\tfrac{1}{2}(\ga^\mu\ga^\nu+\ga^\nu\ga^\nu)\nabla^A_\mu\nabla^A_\nu + \tfrac{1}{2}\ga^\mu \ga^\nu[\nabla^A_\mu,\nabla^A_\nu] \nonumber \\
&=-\ga^\mu\ga^\rho{\Gamma_{\mu \rho}}^\nu \nabla^A_\nu+g^{\mu
\nu}\nabla^A_\mu\nabla^A_\nu
+\tfrac{1}{2}\ga^\mu \ga^\nu[\nabla_\mu +a_\mu Id_V,\nabla_\nu +a_\nu \text{Id}_V]. \label{nabla}
\end{align}
Since $\Gamma_{\mu\nu}^\rho=\Gamma_{\nu\mu}^\rho$, we get by comparison, 
\begin{align*}
E^A&=\tfrac{1}{2}\ga^\mu \ga^\nu[\nabla_\mu +a_\mu \,\text{Id}_V,\nabla_\nu
+a_\nu \,\text{Id}_V]=\tfrac{1}{2}\ga^\mu \ga^\nu \big([\nabla_\mu ,\nabla_\nu
]+\partial_\mu(a_\nu)-\partial_\nu(a_\mu)\big) \\
&=\tfrac{1}{2}\ga^\mu \ga^\nu [\nabla_\mu ,\nabla_\nu
] +\tfrac{1}{4}[\ga^\mu, \ga^\nu] \big(\partial_\mu(a_\nu)-\partial_\nu(a_\mu)\big).
\tag*{\qed}
\end{align*}
\hideqed
\end{proof}
Remark that even if quadratic terms in $A^2$ appear in the local presentation of the perturbation $D^2\to (D+A)^2$ (in the $b$ term), these terms do not appear in the invariant formulation $(\nabla,E)$
since there are hidden in $\nabla^A_\mu\nabla^A_\nu$ of \eqref{nabla}. 

\medskip

In the following, $D$ and $A$ are fixed and satisfy the hypothesis of Lemma \ref{Perturbation}. Indices $i$, $j$, $k$, and $l$ range from 1 through
the dimension $d$ of the manifold and index a local orthonormal frame
$\{ e_1,...,e_d\}$ for the tangent bundle. Roman indices
$a$, $b$, $c$, range from 1 through $d-1$ and index a local orthonormal
frame for the tangent bundle of the boundary
$\partial M$. The vector field $e_d$ is chosen to be the inward-pointing unit normal
vector field. Greek indices are associated to coordinate frames. 

Let $R_{ijkl}$, $\rho_{ij}:=R_{ikkj}$ and $\tau:=\rho_{ii}$ be respectively the components of the Riemann tensor, Ricci tensor and scalar curvature of the Levi-Civita connection. 
Let $L_{ab}:=(\nabla_{e_{a}}e_{b},e_{d})$ be the second fundamental form of the hypersurface $\partial M$ in $M$. 
Let
``;"  denote multiple covariant differentiations with respect to $\nabla^{A}$ and 
``:" denote multiple covariant differentiations with respect to $\nabla$ and the Levi-Civita
connection of $M$.

We will look at a chiral boundary condition. This is a mixed boundary condition natural to consider in order to preserve the existence of chirality on $M$ and its boundary $\del M$ which are compatible with the (selfadjoint) Clifford action: we assume that the operator $\chi$ is selfadjoint and satisfies the following relations:
\begin{align}
\{\chi,\ga^d\} = 0 \,, \qquad  [\chi, \ga^a] = 0 \label{chigamma},\, \forall a \in \set{1,\cdots,d-1}\,.
\end{align}

This condition was shown in \cite{CC2} a natural assumption to enforce the hermiticity of the realization of the Dirac operator. It is known \cite[Lemma 1.5.3]{Gilkey2} that ellipticity is preserved.

Since $\ga^d$ is invertible, $\dim V_+ = \dim V_-$ and $\Tr\chi =0$. 

For an even-dimensional oriented manifold, there is a natural candidate $\chi$ satisfying \eqref{chigamma}, namely 
$$
\chi:={\chi_{}}_{\pa M}=(-i)^{d/2-1} \ga(e_1)\cdots \ga(e_{d-1})
$$
(this notation is compatible with \eqref{chi}.)
Recall that 
\begin{align}
\Tr ( \ga^{i_1} \cdots \ga^{i_{2k+1}} ) = 0 \, ,\,\, \forall k \in \N, \quad\Tr (\ga^{i} \ga^{j} ) = \dim V \, \delta ^{ij}\, . \label{tracegammaimpair}
\end{align}

The natural realization of this boundary condition for the Dirac type operator $D+A$ is the operator $(D+A)_\chi$ which acts as $D+A$ on the domain $\set{s\in C^\infty(V) \, : \, \Pi_- s_{|\del M} =0 }$. It turns out (see \cite[Lemma 7]{BG2}) that the natural boundary operator $B_\chi^A$ defined by 
$$
\B_\chi^A s:= \Pi_- (D+A)^2 s_{|\del M} \oplus \Pi_{-}s_{|\del M}\, 
$$  
is a boundary operator of the form (\ref{Mixed}) provided that 
$S = \half \Pi_+ (-i[\ga^d,A]-L_{aa}\chi)\Pi_+$. 

\begin{lemma}
Actually, $S$ and $\chi_{;a}$ are independent of the perturbation $A$:
\label{Schi}

(i) $S = -\half L_{aa}\, \Pi_+$\,.

(ii) $\chi_{;a} = \chi_{:a}$.
\end{lemma}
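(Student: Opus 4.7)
The plan is to exploit the commutation relations \eqref{chigamma} between $\chi$ and the $\gamma^\mu$ together with Lemma \ref{Perturbation} which tells us that the perturbation shifts the connection by $a_\mu\,\text{Id}_V$, a scalar (in the bundle-endomorphism sense).

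For part (i), I would start from the given formula $S=\half\Pi_+(-i[\gamma^d,A]-L_{aa}\chi)\Pi_+$ and show separately that $\Pi_+[\gamma^d,A]\Pi_+=0$ and $\Pi_+\chi\Pi_+=\Pi_+$. The second identity is immediate since $\chi\Pi_+=\Pi_+$. For the first, write $A=-i\gamma^\mu a_\mu$ and split the sum into the tangential part $\mu=a<d$ and the normal part $\mu=d$. The normal contribution vanishes because $[\gamma^d,\gamma^d]=0$. For the tangential part, use $\{\chi,\gamma^d\}=0$ to get $\Pi_+\gamma^d=\gamma^d\Pi_-$ and use $[\chi,\gamma^a]=0$ to get $\Pi_+\gamma^a=\gamma^a\Pi_+$; then both $\Pi_+\gamma^d\gamma^a\Pi_+$ and $\Pi_+\gamma^a\gamma^d\Pi_+$ reduce, after moving projections past the $\gamma$'s, to expressions containing $\Pi_-\Pi_+=0$. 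Combining these two observations yields (i).

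For part (ii), the key point is that $\chi$ is an endomorphism, so its covariant derivative with respect to a connection $\nabla'$ on $V$ is the commutator $[\nabla',\chi]$ acting on sections. By Lemma \ref{Perturbation} we have $\nabla^A_a=\nabla_a+a_a\,\text{Id}_V$, and the scalar piece $a_a\,\text{Id}_V$ commutes with $\chi$. Therefore
\[
\chi_{;a}=[\nabla^A_a,\chi]=[\nabla_a,\chi]+[a_a\,\text{Id}_V,\chi]=[\nabla_a,\chi]=\chi_{:a}.
\]

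The only conceptual subtlety will be to be careful about how the two colon/semicolon conventions are set up on a bundle-endomorphism like $\chi$ (the bundle indices versus the manifold indices) and to make sure the Levi-Civita piece of ``$:$" cancels consistently on both sides when $\chi$ is viewed as a section of $\text{End}(V)$; but since the extra piece $a_\mu\,\text{Id}_V$ is a scalar in the endomorphism sense, no genuine computation is needed beyond writing the covariant derivative as a commutator. Thus the whole lemma is essentially an algebraic consequence of \eqref{chigamma} and Lemma \ref{Perturbation}.
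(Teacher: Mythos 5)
Your proof is correct and follows essentially the same route as the paper: part (i) uses the relations $\{\chi,\ga^d\}=0$, $[\chi,\ga^a]=0$ to kill $\Pi_+[\ga^d,A]\Pi_+$ (the paper phrases this as $\chi$ anticommuting with $[\ga^d,A]$ so that $\Pi_+[\ga^d,A]\Pi_+=[\ga^d,A]\Pi_-\Pi_+=0$, while you move the projections through each $\ga$ factor, which is the same computation), and part (ii) is verbatim the paper's argument that $\nabla^A_a(\chi)=[\nabla_a+a_a\Id_V,\chi]=[\nabla_a,\chi]$.
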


\begin{proof} 
$(i)$ Since $A$ is locally of the form $-i \ga^j a_j$ with $a_j\in C^\infty(U)$,  we obtain from (\ref{chigamma}), 
\begin{align*}
\chi[\ga^d,A] = -i a_j \, \chi[\ga^d,\ga^j] = -i \sum_{j<d} a_j \,\chi[\ga^d,\ga^j] = i\sum_{j<d} a_j \,[\ga^d,\ga^j]\chi = -[\ga^d,A] \chi
\end{align*}
and the result as a consequence of $\Pi_+\,[\ga^d,A]=[\ga^d,A]\,\Pi_-$ and $\Pi_+\Pi_-=0$.

$(ii)$ We have $\nabla^A_{i} = \nabla_i+a_i \Id_V$ where $A=:-i\ga^j a_j$, and since  
$(\nabla^A_i \chi) s = \nabla^A_i (\chi s) - \chi( \nabla^A_i s)$ 
for any $s\in C^\infty(V)$, using Lemma \ref{Perturbation}, $\nabla^A_i (\chi) =[\nabla_i + a_i \Id_V, \chi]=[\nabla_i,\chi] = \nabla_i (\chi)$.
\end{proof}

While $S$ is not sensitive to the perturbation $A$, the boundary operator $\B_\chi^A$ depends a priori on $A$. We shall denote $\B_\chi$ the boundary operator $\B_\chi^A$ when $A=0$.

The coefficients $a_{d-k}$ for $0\leq k\leq 4$ have been computed in \cite{BG1} for general mixed boundary conditions in the case of Laplace type operators and in \cite[Lemma 8]{BG2} for Dirac type operators with chiral boundary conditions. We recall here these coefficients in our setting:

\begin{prop}
\label{ThmGilkey}\begin{align*}
& a_{d}(D+A,\B_\chi^A)=(4\pi)^{-d/2}\,\int_M \Tr_V 1 \, dx \, , \\
& a_{d-1}(D+A,\B_\chi^A)=0 \, ,  \\
& a_{d-2}(D+A,\B_\chi^A)=\tfrac{(4\pi)^{-d/2}}{6}\big\{
     \int_M \Tr_V(6E^A+\tau) \,dx + \int_{\del M}\Tr_V(2L_{ aa}+12S)\, dy\,\big\}, \\
& a_{d-3}(D+A,\B_\chi^A)=\tfrac{(4 \pi )^{
       -(d-1)/2}}{384}  \int_{\del M}\Tr_V\big\{96 \chi E^A + 3 L_{aa}^2 + 6 L_{ab}^2 
+ 96 S L_{aa} + 192 S^2 -12 \chi_{;a}^2 \} \, dy ,\\
& a_{d-4}(D+A,\B_\chi^A)=\tfrac{(4 \pi )^{-d/2}}{360} \big\{\int_M \Tr_V 
      \big \{ 60\tau E^A + 180 (E^A)^2 +30 (\Omega_{ij}^{A})^2 + 5\tau^2 -2\rho^2+2R^2 \big\} \, dx \\
      &\hspace{4.3cm} + \int_{\del M} \Tr_V \big\{ 180 \chi E^A_{;d}+120E^A L_{aa} +720 S E^A+60 \chi \chi_{;a} \Omega^A_{ad} +T\big \} \, dy\, \big\}.
\end{align*}
where 
\begin{align*}
&T:=20 \tau L_{aa}+  4 R_{adad} L_{bb} -12 R_{adbd}L_{ab}+4R_{abcb}L_{ac} + \tfrac{1}{21}\big(160 L_{aa}^3-48L_{ab}^2 L_{cc}+272 L_{ab}L_{bc}L_{ac}\\
&\hspace{1cm} +120 \tau S+144 S L_{aa}^2+48 S L_{ab}^2+480 (S^2 L_{aa}+S^3)-42\chi_{;a}^2 L_{bb}+6\chi_{;a}\chi_{;b}L_{ab}-120 \chi_{;a}^2 S\big)
\end{align*}
is independent of $A$.
\end{prop}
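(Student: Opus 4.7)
The plan is to apply the known heat kernel asymptotic formulas for Laplace type operators with mixed boundary conditions \cite{BG1}, as specialized to Dirac type operators with chiral boundary conditions in \cite[Lemma 8]{BG2}, to the perturbed operator $(D+A)^2$. By Lemma \ref{Perturbation}, $(D+A)^2 = L(\nabla^A,E^A)$ is of Laplace type, while the construction preceding Lemma \ref{Schi} identifies the chiral realization of $D+A$ with the Laplace type realization under the mixed boundary operator $\B_\chi^A$ of the form \eqref{Mixed}, with the same chirality $\chi$ as for $D$ and endomorphism $S = \half\Pi_+(-i[\ga^d,A]-L_{aa}\chi)\Pi_+$.

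I would then substitute into the Branson--Gilkey formulas the reductions supplied by our lemmas. By Lemma \ref{Schi}(i), $S = -\half L_{aa}\Pi_+$, so $S$ carries no $A$-dependence; by Lemma \ref{Schi}(ii), $\chi_{;a} = \chi_{:a}$, eliminating any hidden $A$-dependence through covariant derivatives of $\chi$. All remaining $A$-dependence is thus concentrated in the terms involving $E^A$ and $\Omega^A$, which by Lemma \ref{Perturbation} decompose as $E^A = E + \tfrac14[\ga^\mu,\ga^\nu]F_{\mu\nu}$ and $\Omega^A_{\mu\nu} = \Omega_{\mu\nu}+F_{\mu\nu}$. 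Keeping track of which universal invariants appear at each order in the Branson--Gilkey expansion then produces the stated formulas for $a_d,\,a_{d-1},\,\ldots,\,a_{d-4}$.

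The assertion that $T$ is independent of $A$ then follows by inspection: $T$ is a polynomial in $\tau$, $R_{ijkl}$, $L_{ab}$, $S$, and $\chi_{;a}$, and by Lemma \ref{Schi} none of these quantities depend on $A$. Hence all the $A$-sensitivity of $a_{d-3}$ and $a_{d-4}$ is captured by the remaining explicit terms involving $E^A$, $E^A_{;d}$, and $\Omega^A_{ad}$.

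The main point to verify is that our chirality hypothesis \eqref{chigamma} matches exactly the assumptions under which \cite[Lemma 8]{BG2} is proved, and that the perturbation by $A$ stays inside this framework. Since $A = -i\ga^\mu a_\mu$ with scalar $a_\mu$, the connection $\nabla^A$ differs from $\nabla$ only by $a_\mu\Id_V$ and hence remains compatible with the Clifford action. Consequently, the construction of \cite[Lemma 7]{BG2} translating the chiral condition into a mixed boundary operator of the form \eqref{Mixed} applies to $D+A$ with the same $\chi$, and the Branson--Gilkey formulas transfer directly with $E$, $\Omega$, $S$ replaced by $E^A$, $\Omega^A$, and the $S$ above.
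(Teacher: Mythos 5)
Your proposal is correct and follows essentially the same route as the paper: the paper states these coefficients without proof, simply recalling the Branson--Gilkey computations of \cite{BG1} and \cite[Lemma 8]{BG2} specialized to the present setting via Lemma \ref{Perturbation} (Laplace type structure of $(D+A)^2$) and the boundary operator identification from \cite[Lemma 7]{BG2}. Your additional remarks on why $T$ is $A$-independent (via Lemma \ref{Schi}) are consistent with how the paper uses this proposition afterwards.
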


The following proposition shows that there are no tadpoles in manifolds endowed with a chiral boundary condition.
\begin{theorem}
Let $M$ be an even $d$-dimensional compact oriented spin Riemannian manifold with smooth boundary $\del M$ and spin bundle $V$. Let $D:=-i\ga^j\nabla_j $ be the classical Dirac operator, and $\chi={\chi_{}}_{\pa M}=(-i)^{d/2-1} \ga(e_1)\cdots \ga(e_{d-1})$ where $(e_i)_{1\leq i\leq d}$ is a local orthonormal frame of $TM$.
 
The perturbation $D\to D+A$ where $A=-i\ga^j a_j$ is a 1-form for $D$, induces, under the chiral boundary condition, the following perturbations on the heat kernel coefficients where we set 
$c_{d-k}(A):=a_{d-k}(D+A,\B_\chi^A)-a_{d-k}(D,\B_{\chi})$:

(i) $c_{d}(A)=c_{d-1}(A)=c_{d-2}(A)=c_{d-3}(A) = 0$.

(ii) $c_{d-4}(A) =- \tfrac{1}{6(2\pi)^{d/2}}\int_M F_{\mu\nu}F^{\mu\nu} \,dx.$

In other words, the coefficients $a_{d-k}$ for $0\leq k\leq 3$ are unperturbed, $a_{d-4}$ is only perturbed by quadratic terms in $A$ and there are no linear terms in $A$ in $a_{d-k}(D+A,\B_\chi^A)$ for $k\leq 5$.
\end{theorem}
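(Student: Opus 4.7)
The plan is to compare term by term the expressions for $a_{d-k}(D+A,\B_\chi^A)$ and $a_{d-k}(D,\B_\chi)$ provided by Proposition \ref{ThmGilkey}, isolate the $A$-dependence via Lemma \ref{Perturbation}, and kill the residual pieces with Clifford trace identities. By Lemma \ref{Schi} both $S$ and $\chi_{;a}$ are $A$-independent, while the curvature tensors, the second fundamental form, and $T$ are purely geometric; so the only $A$-sensitive ingredients entering the integrands are
\[
E^A - E \,=\, \tfrac{1}{4}[\ga^\mu,\ga^\nu]\,F_{\mu\nu}, \qquad \Omega^A_{\mu\nu} - \Omega_{\mu\nu} \,=\, F_{\mu\nu}\,\Id_V.
\]

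For (i), $a_d$ and $a_{d-1}$ are manifestly $A$-independent. The variation of $a_{d-2}$ reduces to $\tfrac{F_{\mu\nu}}{4}\Tr_V[\ga^\mu,\ga^\nu]$, which vanishes by \eqref{tracegammaimpair}. The variation of $a_{d-3}$ reduces to $\tfrac{F_{\mu\nu}}{4}\Tr_V(\chi[\ga^\mu,\ga^\nu])$; since $\chi$ is a product of $d-1$ Clifford generators and \eqref{chigamma} holds, $\chi\ga^\mu\ga^\nu$ reduces through the Clifford relations to a combination of products of $d-3$ or $d-1$ distinct generators --- both odd in even dimension --- so the trace is killed by \eqref{tracegammaimpair}.

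For (ii), I would expand each integrand of $a_{d-4}$. In the bulk, the terms $\tau^2$, $\rho^2$, $R^2$ and $60\tau\Tr_V E^A$ contribute nothing. The term $180\Tr_V(E^A)^2$ produces a linear-in-$A$ piece $90\,F_{\mu\nu}\Tr_V(E[\ga^\mu,\ga^\nu])$ that vanishes because the classical Dirac satisfies Lichnerowicz, $E=-\tfrac{\tau}{4}\Id_V$; and a quadratic piece $\tfrac{180}{16}F_{\mu\nu}F_{\rho\sigma}\Tr_V([\ga^\mu,\ga^\nu][\ga^\rho,\ga^\sigma])$, which the standard four-gamma trace identity evaluates to $-90\dim V\,F_{\mu\nu}F^{\mu\nu}$. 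The term $30\Tr_V(\Omega^A_{\mu\nu})^2$ produces no linear piece (since $\Omega_{\mu\nu}=\tfrac{1}{4}R_{\mu\nu\rho\sigma}\ga^\rho\ga^\sigma$ is traceless by antisymmetry of $R$) and a quadratic piece $30\dim V\,F_{\mu\nu}F^{\mu\nu}$. The four boundary integrands contribute only linear-in-$A$ pieces, each of which must be killed: $\Tr_V(\chi E^A_{;d})$ by the $a_{d-3}$ argument (covariant differentiation does not touch the Clifford variables); $\Tr_V(E^A L_{aa})$ by the $a_{d-2}$ argument; $\Tr_V(SE^A)$ because $S = -\tfrac{L_{aa}}{2}\Pi_+$ and $\Tr_V(\Pi_\pm[\ga^\mu,\ga^\nu]) = \thalf\Tr_V[\ga^\mu,\ga^\nu] \pm \thalf\Tr_V(\chi[\ga^\mu,\ga^\nu])$ vanishes; and $\Tr_V(\chi\chi_{;a}\Omega^A_{ad})$ because its linear piece is $F_{ad}\Tr_V(\chi\chi_{;a})$, and $\chi^2 = \Id_V$ forces $\{\chi,\chi_{;a}\}=0$ hence $\Tr_V(\chi\chi_{;a})=0$. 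Combining the two surviving quadratic bulk contributions gives $(-90+30)\dim V = -60\dim V$; using $\dim V = 2^{d/2}$ to convert $(4\pi)^{-d/2}\dim V/6$ into $(2\pi)^{-d/2}/6$ then yields the stated formula.

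The claim about $a_{d-5}$ is the main obstacle, since Proposition \ref{ThmGilkey} does not furnish that coefficient. I would invoke the general Branson--Gilkey heat-kernel expression for $a_{d-5}$ under mixed/chiral boundary conditions: a boundary integral built universally out of $E^A$, $\Omega^A$, $\chi$, $\chi_{;a}$, $S$, $L_{ab}$, and the ambient curvatures. Every $A$-linear contribution must carry exactly one factor $F_{\mu\nu}$, coming from a single insertion of $E^A-E$ or $\Omega^A-\Omega$; the corresponding spinor trace is then of the form $\Tr_V(X[\ga^\mu,\ga^\nu])$ or $\Tr_V(X)F_{ij}$, where $X$ is a polynomial in the $A$-independent ingredients. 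The parity and anticommutation arguments developed for $a_{d-3}$ and $a_{d-4}$ then dispose of each such trace. The delicate step is that there is no single structural reason forcing cancellation: each admissible term in $a_{d-5}$ must be enumerated and the corresponding Clifford trace checked case by case.
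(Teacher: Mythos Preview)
Your argument follows the paper's approach closely and is essentially correct, but there is one imprecision worth flagging. In treating the boundary term $\Tr_V\big(\chi(E^A_{;d}-E_{:d})\big)$ you assert that ``covariant differentiation does not touch the Clifford variables''. That is false: the spin connection acts nontrivially on gamma matrices, $[\nabla_d,\ga^i]=\ga(\nabla_d e_i)$. What saves the argument is that this commutator is still a single Clifford generator, so $[\nabla_d,[\ga^i,\ga^j]]$ remains a bilinear in gammas and the parity count (product of $d+1$ generators inside the trace) goes through exactly as in the $a_{d-3}$ step. The paper makes this explicit; you should too, rather than appeal to an incorrect heuristic.

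One point where your route is actually cleaner than the paper's: for $\Tr_V(\chi\chi_{;a})=0$ you differentiate $\chi^2=\Id_V$ to get $\{\chi,\chi_{;a}\}=0$ and conclude by trace cyclicity. The paper instead computes $\chi\chi_{:a}$ explicitly via $[\nabla_a,\chi_M]=0$ and Christoffel symbols. Your abstract argument is shorter and uses less structure.

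For $a_{d-5}$ both you and the paper proceed by citing the Branson--Gilkey--Kirsten--Vassilevich formula and checking the $A$-linear pieces term by term; the paper lists six specific invariants ($\chi E^A_{;dd}$, $E^A_{;d}S$, $\chi(E^A)^2$, $E^A S^2$, $\chi_{;a}\chi_{;b}\Omega^A_{ab}$, $\chi_{;a}^2 E^A$) whose traces must be checked, which sharpens your general description.
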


\begin{remark} When $A$ is selfadjoint, all coefficients $a_{d-k}(D+A,\B_\chi^A)$ and $a_{d-k}(D,\B_{\chi})$ are real while linear contributions in $A$ are purely imaginary, modulo traces of $\ga$ and $\chi$ matrices and their covariant derivatives. Since the invariant terms appearing as integrands of $\int_{M}$ and $\int_{\del M}$ in the coefficients at higher order are polynomial in $S$, $\chi$, $R$, $E^A$ and $\Omega^A$, and their covariant derivatives, one expects no linear terms in $A$ at any order.

We study more examples in \cite{avenir} with a generalization of Theorem \ref{proptadpoles}.
\end{remark}

\begin{proof}
$(i)$ The fact that $c_{d}(A)=c_{d-1}(A)=0$ follows from Proposition \ref{ThmGilkey}. 

Since by Lemma \ref{Schi}, $c_{d-2}(A) = (4\pi)^{-d/2}\int_M \Tr_V(E^A-E_A)\,  dx$, we get $c_{d-2}(A)=0$ because $\Tr_V E^A =\Tr_V E$ by Lemma \ref{Perturbation}. 
     
From Proposition \ref{ThmGilkey} and Lemma \ref{Schi}, we get
$c_{d-3}(A)= \tfrac{1}{4}(4 \pi )^{-(d-1)/2} \int_{\del M}\Tr_V\big\{ \chi (E^A-E)\big\}$.

Since $\chi(E^A-E) = (-i)^{d/2} \ga^1\cdots\ga^{d-1} [\ga^j,\ga^k]F_{jk}$, (\ref{tracegammaimpair}) yields  $\Tr_V \chi(E^A-E)=0$ because $d$ is even.

$(ii)$ Since $\Tr_V(E^A-E)=0$ and $\Tr_V \chi(E^A-E) =0$, we obtain $\Tr_V S (E^A-E)=0$ from Lemma \ref{Schi}. Thus, using Proposition \ref{ThmGilkey} and Lemma \ref{Schi},
\begin{align*}
&c_{d-4}(A)=\tfrac{(4 \pi )^{-d/2} }{360}\big\{\int_M \Tr_V
      \big \{ 180( (E^A)^2 -E^2) +30 \big((\Omega_{ij}^{A})^2-(\Omega_{ij})^2\big) \big\} \, dx \\
      &\hspace{4cm} + \int_{\del M} \Tr_V \big\{ 180 \chi( E^A_{;d}-E_{:d})+60 \chi \chi_{;a} (\Omega^A_{ad}-\Omega_{ad}) \} \, dy\, \big\}.
\end{align*}
We obtain locally $\Tr_V \big((E^A)^2-E^2\big) = \tfrac{1}{16}\Tr([\ga^\mu,\ga^\nu][\ga^\rho,\ga^\sg])F_{\mu\nu}F_{\rho\sg}$ using Lichn\'erowicz formula $E=-\tfrac{1}{4}\tau$. 
Since $\Tr_V([\ga^\mu,\ga^\nu][\ga^\rho,\ga^\sg])=4. 2^{d/2}(g^{\mu\sg}g^{\nu\rho}-g^{\mu\rho}g^{\nu\sg})$, 
\begin{align*}
\Tr_V \big((E^A)^2-E^2\big) = -\,2^{d/2-1} F_{\mu\nu}F^{\mu\nu}.
\end{align*}
$\nabla$ being the spin connection associated to the spin structure of $M$, we have $\Omega_{ij}=\tfrac{1}{4}\ga^{k}\ga^l R_{ijkl}$. So $R_{ijkl}=-R_{ijlk}$ implies $\Tr_V \Omega_{ij}=0$. Hence, with Lemma \ref{Perturbation}, 
\begin{align*}
\Tr_V \big((\Omega_{ij}^A)^2- \Omega_{ij}^2\big) =2^{d/2}  F_{ij}^2 =2^{d/2} F_{\mu\nu}F^{\mu\nu}\, .
\end{align*}
Moreover, $E^A_{;d} = [\nabla_d+a_d,E^A] = [\nabla_d,E+\tfrac{1}{4}[\ga^i,\ga^j]F_{ij}] = E_{:d} + \tfrac{1}{4}[\nabla_d,[\ga^i,\ga^j]] F_{ij}$. 

Using of $[\nabla_i,\ga^i]=\ga(\nabla_i e_j)$ and (\ref{tracegammaimpair}),
$$
\Tr_V \big( \chi( E^A_{;d}-E_{:d}) \big)= (-i)^{d/2} \tfrac{1}{2}\, F_{ij} \Tr_V \big\{ \ga^1\cdots \ga^{d-1} \big(\ga(\nabla_d e_i) \ga^j+\ga^i \ga(\nabla_d e_j)\big) \big\} =0  \, .
$$ 
It remains to check that $\Tr_V \big(\chi \chi_{:a} (\Omega^A_{ad}-\Omega_{ad})\big)=0$. 
Let $\chi_M=-i\chi \ga^d$ be the grading operator (see \eqref{chi}.) Since $\chi_{M}$ commutes with the spin connection operator $\nabla$ (see \cite[p. 396]{Polaris}),
$$
0=[\nabla_a,\chi_M]=[\nabla_a,\chi\ga^d] = \chi_{:a}\ga^d +\chi[\nabla_a,\ga^d]=\chi_{:a}\ga^d +\chi\ga(\nabla_a e_d)
$$
and thus $\chi\chi_{:a}= -\ga(\nabla_a e_d)\ga^d  = -\Ga_{ad}^j \ga^j \ga^d$, where 
$\Ga_{ad}^j=-\Ga_{aj}^{d}$ since $(e_j)$ is an orthonormal frame. 
So $\Tr_V(\chi \chi_{:a}) = -\Ga_{ad}^{j} \delta^{jd}=-\Ga_{ad}^{d}=0$. 
Finally, the result on $c_{d-4}$ follows from  Lemma \ref{Perturbation} as $\Tr_V\big(\chi \chi_{:a} (\Omega^A_{ad}-\Omega_{ad})\big)=\Tr_V(\chi \chi_{:a})F_{ad}$.

The coefficient $a_{d-5}(D+A,\B_\chi^A)$ is computed in \cite{BGKV}. One can check directly as above that the linear terms in $A$ are not present. The computation uses the fact that the trace of the following terms $\chi E^A_{;dd}$, $E^A_{;d}S$, $\chi (E^A)^2$, $E^A S^2$, $\chi_{;a}\chi_{;b} \Omega^A_{ab}$, $\chi_{;a}^2 E^A$, do not have linear terms in $A$.
\end{proof}

\medskip

In the following, we investigate the above conjecture with Connes--Chamseddine pseudodifferential calculus applied to compact spin manifolds without boundary and Riemannian spectral triples. We also see, using Wodzicki residue, how to compute some noncommutative integrals in this setting.  

\section{Notations and definitions}

Let $(\A,\DD,\H)$ be a spectral triple of dimension $d$.

We use the notation $D = \DD + P$, $P$ the projection on $\Ker
\DD$ implying invertibility of $D$. 

Let $J$ be the reality operator (if it exists) satisfying 
$$
J\DD=\epsilon \,\DD J,\quad \epsilon=\pm1
$$ 
according to the dimension: $\epsilon=+1$ when the dimension $d$ is  0, 2, 3, 4, 6, 7 mod 8 and $\epsilon=-1$ when
$d=$ 1, 5 mod 8. 

When the triple is even, we also use chirality operator
$\chi$ which is a grading on $\H$,  which commutes with $\A$, anti-commutes with $\DD$ and
also satisfies $J\chi=\epsilon'\,\chi J$ where $\epsilon'=1$ for
$d=0,4$ mod 8 and $\epsilon'=-1$ for $d=2,6$ mod 8.

Recall few definitions, see \cite{Cgeom, CM,MCC,Higson}:
\begin{definition}
A one-form $A$ is a finite sum of operators like $a_1[\DD,a_2]$
where $a_i \in \A$.

The set of one-forms is denoted by $\Omega_\DD^1(\A)$.
\end{definition}

\subsection{Noncommutative integrals}
We recall in Appendix few definitions about the algebra $\Psi(\A)$ of pseudodifferential operators, zeta functions and dimension spectrum.

A. Connes has introduced the following notation
$$
\ncint X:=\underset{s=0}{\Res} \,\Tr \big(X \vert \DD \vert^{-s}\big),
\quad X \in \Psi(\A).
$$
$\ncint$ is a trace on $\Psi(\A)$, (non necessarily positive, see Lemma \ref{scalarcurvature}.)

\subsubsection{Noncommutative integrals and real numbers}

\begin{lemma}
  \label{adjoint}
Let $(\A,\DD,\H)$ be a spectral triple and $X \in \Psi(\A)$. Then 
\begin{align*}
\ncint X^*=\overline{  \ncint X}.
\end{align*}
If the spectral triple is real, then, for $X \in \Psi(\A)$, $JXJ^{-1}
\in \Psi(\A)$ and
$$
\ncint JXJ^{-1}=\ncint X^*=\overline{  \ncint X}.
$$
\end{lemma}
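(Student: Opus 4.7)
The plan is to reduce both identities to a single symmetry trick for residues: if $f$ is meromorphic near $s=0$ with Laurent expansion $f(s)=\sum_n a_n s^n$, then $\overline{f(\bar s)}=\sum_n\overline{a_n}\,s^n$, so $\Res_{s=0}\overline{f(\bar s)}=\overline{\Res_{s=0}f(s)}$. The proof then amounts to rewriting each of the functions $s\mapsto\Tr(X^*|\DD|^{-s})$ and $s\mapsto\Tr(JXJ^{-1}|\DD|^{-s})$ in the form $\overline{f(\bar s)}$ with $f(s):=\Tr(X|\DD|^{-s})$.

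For the first identity I would exploit $(|\DD|^{-s})^*=|\DD|^{-\bar s}$, a consequence of the positive self-adjointness of $|\DD|$ (we work throughout with $D=\DD+P$ so $0$ is not in the spectrum). Using $\overline{\Tr T}=\Tr T^*$ for trace-class $T$ together with cyclicity of the trace gives
\[
\overline{f(s)}=\Tr\bigl((X|\DD|^{-s})^*\bigr)=\Tr\bigl(|\DD|^{-\bar s}X^*\bigr)=\Tr\bigl(X^*|\DD|^{-\bar s}\bigr),
\]
hence $\Tr(X^*|\DD|^{-s})=\overline{f(\bar s)}$ after replacing $s$ by $\bar s$. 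Taking the residue at $s=0$ and invoking the symmetry principle above yields $\ncint X^*=\overline{\ncint X}$.

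For the real case, I would first show $JXJ^{-1}\in\Psi(\A)$. Since $J$ is antiunitary and $J\DD J^{-1}=\epsilon\DD$, one has $J|\DD|J^{-1}=|\DD|$, so conjugation by $J$ preserves the order filtration of $\Psi(\A)$; applied to the generators $a$ and $[\DD,a]$ it produces $JaJ^{-1}$ and $\epsilon[\DD,JaJ^{-1}]$, which lie in the pseudodifferential algebra by the axioms of a real spectral triple recalled in the appendix. The key additional ingredient is the standard identity
\[
\Tr(JAJ^{-1})=\overline{\Tr A}\qquad\text{for $A$ trace class},
\]
which follows by expanding the trace in the orthonormal basis $f_i:=J^{-1}e_i$ and using $\langle Jx,Jy\rangle=\overline{\langle x,y\rangle}$. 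Combined with $J|\DD|^{-s}J^{-1}=|\DD|^{-\bar s}$ (obtained by antilinearly conjugating the spectral resolution of $|\DD|$) this gives
\[
\Tr\bigl(JXJ^{-1}|\DD|^{-s}\bigr)=\Tr\bigl(J\,X|\DD|^{-\bar s}\,J^{-1}\bigr)=\overline{\Tr\bigl(X|\DD|^{-\bar s}\bigr)}=\overline{f(\bar s)}.
\]
Taking the residue at $s=0$ and applying the symmetry trick once more yields $\ncint JXJ^{-1}=\overline{\ncint X}$, which combined with the first identity finishes the proof.

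The main obstacle, though a mild one, is the bookkeeping for the antilinearity of $J$: complex powers $|\DD|^{-s}$ do not commute with $J$ in the naive way but are sent to $|\DD|^{-\bar s}$, so every rearrangement inside the trace must be mirrored by the reflection $s\leftrightarrow\bar s$ before the residue is extracted. The verification that $JXJ^{-1}$ is again pseudodifferential depends on the precise definition of $\Psi(\A)$ adopted in the appendix; under the standard Connes--Moscovici conventions it is an immediate consequence of $J|\DD|J^{-1}=|\DD|$ and of the axioms of a real spectral triple.
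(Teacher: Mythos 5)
Your proof is correct and follows essentially the same route as the paper's: both reduce the identities to $\Tr(X^*|\DD|^{-s})=\overline{\Tr(X|\DD|^{-\bar s})}$ and $\Tr(JXJ^{-1}|\DD|^{-s})=\overline{\Tr(X|\DD|^{-\bar s})}$ via $(|\DD|^{-\bar s})^*=|\DD|^{-s}$, cyclicity, $\Tr(JYJ^{-1})=\overline{\Tr Y}$ and $J|\DD|=|\DD|J$, then extract the residue at $s=0$. You merely make explicit two points the paper leaves implicit, namely the Schwarz-reflection symmetry of residues and the stability of $\Psi(\A)$ under conjugation by $J$; both are handled correctly.
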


\begin{proof}
The first result follows from (for $s$ large enough, so the operators
are traceable)
\begin{align*}
\Tr(X^*\vert \DD\vert^{-s})=\Tr \big((\vert
\DD\vert^{-\bar{s}})X)^*\big)=\overline{ \Tr(\vert \DD \vert^{-\bar{s}}
X)}=\overline{\Tr(X\vert \DD \vert^{-\bar{s}})}.
\end{align*}

The second result is due to the anti-linearity of $J$,
$\Tr(JYJ^{-1})=\overline{\Tr(Y)}$, and $J\vert \DD \vert=\vert \DD \vert
J$, so 
\begin{align*}
\Tr(X \vert \DD \vert^{-s})=\overline{\Tr(JX \vert \DD
\vert^{-s}J^{-1})}=\overline{\Tr(JXJ^{-1}\vert \DD \vert^{-\bar{s}})}.
\tag*{\qed}
\end{align*}
\hideqed
\end{proof}

\begin{corollary}
\label{reel}
For any one-form $A=A^*$, and for $k,\,l \in \N$,
$$
\ncint A^l \,\DD^{-k} \in \R,\quad \ncint \big(A\DD^{-1}\big)^k \in \R,
\quad \ncint A^l \,\vert \DD\vert^{-k} \in \R,\quad  \ncint \chi A^l\,\vert \DD \vert ^{-k} \in \R, 
\quad \ncint A^l\,\DD \, \vert \DD \vert^{-k} \in \R.
$$
\end{corollary}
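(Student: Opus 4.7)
The plan is to show that in each of the five cases the operator $X$ under the symbol $\ncint$ satisfies $\ncint X^{*}=\ncint X$. Combined with Lemma~\ref{adjoint}, which gives $\ncint X^{*}=\overline{\ncint X}$, this will immediately force $\ncint X\in\R$. The two ingredients are (a) the self-adjointness hypotheses $A=A^{*}$, $\DD=\DD^{*}$, $|\DD|=|\DD|^{*}$, $\chi=\chi^{*}$, so that $(A^{l})^{*}=A^{l}$ and $(\DD|\DD|^{-k})^{*}=\DD|\DD|^{-k}$, and (b) the trace property $\ncint UV=\ncint VU$ of the noncommutative integral on $\Psi(\A)$.

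For $X=A^{l}\DD^{-k}$, taking the adjoint reverses the order and gives $X^{*}=\DD^{-k}A^{l}$; a single cyclic permutation under $\ncint$ yields $X$. For $X=(A\DD^{-1})^{k}$, the adjoint is the ``reversed word'' $(\DD^{-1}A)^{k}=\DD^{-1}A\DD^{-1}A\cdots\DD^{-1}A$, but cycling the leftmost $\DD^{-1}$ to the right turns it back into $(A\DD^{-1})^{k}$. The cases $A^{l}|\DD|^{-k}$ and $A^{l}\DD|\DD|^{-k}$ are treated in exactly the same way, using that $\DD$ and $|\DD|^{-k}$ commute (both are functions of $\DD$) so $(\DD|\DD|^{-k})^{*}=\DD|\DD|^{-k}$.

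The only case that needs an extra remark is $X=\chi A^{l}|\DD|^{-k}$. Its adjoint is $|\DD|^{-k}A^{l}\chi$. One could worry about the anticommutation $A\chi=-\chi A$ (consequence of $\chi\DD=-\DD\chi$ and $[\chi,\A]=0$), but we do not need it: since $\chi$ anticommutes with $\DD$ it commutes with $\DD^{2}$ and hence with $|\DD|$ and $|\DD|^{-k}$. Therefore, using cyclicity twice and this commutation,
\begin{align*}
\ncint X^{*}\;=\;\ncint |\DD|^{-k}A^{l}\chi \;=\;\ncint \chi |\DD|^{-k}A^{l}\;=\;\ncint |\DD|^{-k}\chi A^{l}\;=\;\ncint \chi A^{l}|\DD|^{-k}\;=\;\ncint X.
\end{align*}

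The only point requiring verification beyond formal manipulation is that the adjoints involved lie in $\Psi(\A)$ so that Lemma~\ref{adjoint} applies, and that $\ncint$ is genuinely cyclic on the relevant products (both are part of the setting recalled in the Appendix). No use of the reality operator $J$ is actually needed here; the statement holds even for non-real spectral triples, with the real case providing only the additional identity $\ncint JXJ^{-1}=\ncint X^{*}$, not used in the argument above.
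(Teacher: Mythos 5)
Your argument is correct and is exactly the one the paper intends: Corollary~\ref{reel} is stated without proof as an immediate consequence of Lemma~\ref{adjoint}, and the implicit reasoning is precisely your combination of $\ncint X^{*}=\overline{\ncint X}$ with the trace property of $\ncint$ on $\Psi(\A)$ and the self-adjointness of $A$, $\DD$, $|\DD|$ and $\chi$ (plus $[\chi,|\DD|]=0$ for the fourth integral). Nothing is missing.
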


\subsubsection{Tadpole}
In \cite{ConnesMarcolli}, is introduced the following

\begin{definition}
\label{Deftadpole}
In $(\A,\,\H,\,\DD)$, the tadpole
$Tad_{\DD+ A}(k)$ of order $k$, for  $k \in \set{d-l \, : \, l \in \N}$  is
the term linear in $A=A^{*}\in \Omega_\DD^1$, in the $\Lambda^k$ term
of 
\eqref{formuleaction} (considered as an infinite series) where
$\DD_{A}=\DD+ A$.

If moreover, the triple $(\A,\,\H,\,\DD,\,J)$ is real, the tadpole
$Tad_{\DD+\tilde A}(k)$ is the term linear in $A$, in the $\Lambda^k$ term
of \eqref{formuleaction} where $\DD_{A}=\DD+\wt A$.
\end{definition}

\begin{prop}
    \label{valeurtadpole}
Let $(\A,\,\H,\,\DD)$ be a spectral triple of dimension $d$ with simple dimension spectrum. Then 
\begin{align}
    \label{tadpolen-k}
&\Tad_{\DD+A}(d-k) =-(d-k)\ncint A \DD |\DD|^{-(d-k) -2}, \quad \forall  k\neq d,\\
    \label{tadpole0}
& \Tad_{\DD+ A}(0)=-\ncint  A \DD^{-1}.
\end{align}
Moreover, if the triple is real,  $\Tad_{\DD+\wt A}= 2\Tad_{\DD+A}$.
\end{prop}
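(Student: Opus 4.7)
My plan is to extract the term linear in $A$ from each $\Lambda^{k}$ coefficient of the spectral action \eqref{formuleaction}: the positive orders are controlled by $\ncint |\DD_A|^{-k}$, while the constant order is governed by \eqref{constant}, so the two cases are handled separately. The doubling in the real case then follows from the reality identities of Lemma \ref{adjoint} and Corollary \ref{reel}.

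For the positive orders I expand $\DD_A^{2} = \DD^{2} + (\DD A + A \DD) + A^{2}$ and, for $\operatorname{Re}(s)$ large enough so that $|\DD_A|^{-s}$ is trace-class, I use the standard first-order perturbation formula
\begin{align*}
\Tr\bigl(|\DD_A|^{-s}\bigr) - \Tr\bigl(|\DD|^{-s}\bigr) = -\tfrac{s}{2}\,\Tr\bigl(|\DD|^{-s-2}(\DD A + A\DD)\bigr) + O(A^{2}).
\end{align*}
This comes from differentiating $X \mapsto \Tr(X^{-s/2})$ at $X=\DD^{2}$ in the direction $B=\DD A+A\DD$, via the Cauchy representation $X^{-s/2}=(2\pi i)^{-1}\int_{\Gamma}\lambda^{-s/2}(\lambda-X)^{-1}\,d\lambda$, an integration by parts in $\lambda$, and cyclicity of the trace. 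Since $|\DD|$ commutes with $\DD$, the two cross-terms coincide after cyclic rearrangement, so the right-hand side reduces to $-s\,\Tr(A\DD\,|\DD|^{-s-2})$. Taking the residue at $s=d-k$ (simple by hypothesis) produces \eqref{tadpolen-k}. For the $\Lambda^{0}$ coefficient I insert \eqref{constant} and observe that the only term linear in $A$ is the $q=1$ contribution $-\ncint A\DD^{-1}$, which yields \eqref{tadpole0}.

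The doubling identity in the real case reduces, by linearity in $\wt A = A + \epsilon J A J^{-1}$, to showing
\begin{align*}
\ncint J A J^{-1}\,\DD\,|\DD|^{-(d-k)-2} = \epsilon \ncint A\DD\,|\DD|^{-(d-k)-2} \sepword{and} \ncint J A J^{-1}\,\DD^{-1} = \epsilon \ncint A\DD^{-1}.
\end{align*}
Using $J\DD J^{-1}=\epsilon\DD$, $J|\DD|J^{-1}=|\DD|$ and $J\DD^{-1}J^{-1}=\epsilon\DD^{-1}$, I pull $J$ past the remaining factors to rewrite each left-hand side as $\epsilon\,\ncint J(\cdots)J^{-1}$; Lemma \ref{adjoint} then turns this into $\epsilon\,\overline{\ncint(\cdots)}$, and Corollary \ref{reel} removes the complex conjugation, since $\ncint A\DD|\DD|^{-k}$ and $\ncint A\DD^{-1}$ are real. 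The two summands of $\wt A$ thus add (the factor $\epsilon^{2}=1$ cancels), giving $\Tad_{\DD+\wt A}=2\,\Tad_{\DD+A}$ at every order.

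The main obstacle is the analytic justification of the first-order expansion of the fractional power $|\DD_A|^{-s}$ and the commutation of the $A$-derivative with the residue at $s=d-k$. One has to argue first in a half-plane $\operatorname{Re}(s)\gg d$, where the relevant operators are trace-class and the resolvent series, cyclicity of $\Tr$, and integration by parts in $\lambda$ are all pointwise valid, and then extend meromorphically; the simple-dimension-spectrum hypothesis ensures that the residue at $s=d-k$ is unambiguously extracted. Once this analytic input is in place, the remainder of the argument is algebraic, relying only on cyclicity of $\ncint$ on $\Psi(\A)$ together with Lemma \ref{adjoint} and Corollary \ref{reel}.
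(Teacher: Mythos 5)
Your proposal is correct, but for the key formula \eqref{tadpolen-k} you take a genuinely different route from the paper. The paper specializes the full expansion of $\ncint|\DD_A|^{-(d-k)}$ from \cite[Lemma 4.6, Proposition 4.8]{MCC} --- the combinatorial sum over $p$, $r$, and the operators $\Ga_q^k(X)$, $\eps^r(Y)$ --- and then checks that in the linear part only $p=1$ and $r=l=0$ survive, which is where the factor $-(d-k)$ comes from. You instead compute the first variation of $\Tr\big((\DD^2+B)^{-s/2}\big)$ directly in the direction $B=\DD A+A\DD$ via the Cauchy integral for the fractional power, integration by parts in $\lambda$, and cyclicity, obtaining $-s\,\Tr(A\DD|\DD|^{-s-2})$ at once; simplicity of the dimension spectrum then lets you replace the factor $-s$ by $-(d-k)$ when taking the residue. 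Your route is shorter and more transparent about where the coefficient $-(d-k)$ originates, at the price of having to justify yourself the resolvent expansion for the unbounded perturbation $B\in OP^1$ and the compatibility of the linearization in $A$ with the meromorphic continuation --- precisely the analytic groundwork that the paper outsources to \cite{MCC}; you flag this correctly. Your treatment of \eqref{tadpole0} via the $q=1$ term of \eqref{constant} and of the doubling $\Tad_{\DD+\wt A}=2\Tad_{\DD+A}$ via $J\DD J^{-1}=\epsilon\DD$, Lemma \ref{adjoint} and Corollary \ref{reel} coincides with the paper's argument (and is in fact spelled out in more detail than the paper's one-line remark).
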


\begin{proof}
By \cite[Lemma 4.6, Proposition 4.8]{MCC}, we have the following
formula,
for any $k\in \N$,
$$
\ncint |\DD_A|^{-(d-k)}=
\ncint |\DD|^{-(d-k)}+
\sum_{p=1}^k \sum_{r_1,\cdots, r_p =0}^{k-p}
\underset{s=d-k}{\Res} \, h(s,r,p) \, \Tr\big(\eps^{r_1}(Y)
\cdots\eps^{r_p}(Y) |\DD|^{-s}\big),
$$
where
\begin{align*}
\hspace{1cm}&h(s,r,p):=(-s/2)^p\int_{0\leq t_1\leq \cdots \leq
t_p\leq 1}
g(-st_1,r_1)\cdots
g(-st_p,r_p) \, dt,  \\
&\eps^r (T):=\nabla(T)\DD^{-2r}, \, \nabla(T):=[\DD^{2},T],\\
&g(z,r):=\tbinom{z/2}{r} \text{ with }g(z,0):=1,\\
&Y \sim \sum_{q=1}^N\sum_{k_1,\cdots,k_q =0}^{N-q} \Ga_q^k(X)
\DD^{-2(|k|_1+q)} \mod OP^{-N-1} \text{ for any N}\in \N^*, \\
& X:=\wt A \DD + \DD \wt A+\wt A^2, \wt A :=A+\epsilon JAJ^{-1},\\
& \Ga_q^k(X):=\tfrac{(-1)^{|k|_1+q+1}}{|k|_1+q} \nabla^{k_q}
\big(X\nabla^{k_{q-1}}(\cdots X\nabla^{k_1}(X)\cdots)\big)
\,,\,\forall q\in \N^* \,,\,
k=(k_1,\cdots,k_q)\in \N^q.
\end{align*}
As a consequence, for $k\neq n$, only the terms with $p=1$ contribute
to the linear part:
$$
\Tad_{\DD + \wt A}(d-k)= \Lin_A(\ncint |\DD_A|^{-(d-k)}) =\sum_{r=0}^{k-1}
\underset{s=d-k}{\Res} \, h(s,r,1) \,
\Tr\big(\eps^{r}(\Lin_A(Y))|\DD|^{-s}\big)\, .
$$
We check that for any $N\in \N^*$, 
$$
\Lin_A(Y) \sim \sum_{l=0}^{N-1} \Ga_1^l(\wt A \DD + \DD \wt A)
\DD^{-2(l+1)} \mod OP^{-N-1}.
$$
Since $\Ga_1^l(\wt A \DD + \DD \wt A) = \tfrac{(-1)^{l}}{l+1}
\nabla^{l}
(\wt A \DD + \DD \wt A)= \tfrac{(-1)^{l}}{l+1} \{ \nabla^l(\wt A),\DD
\}$, we get, assuming the dimension spectrum to be simple
\begin{align*}
\Tad_{\DD+ \wt A}(d-k)&= \sum_{r=0}^{k-1}
\underset{s=d-k}{\Res} \, h(s,r,p) \,
\Tr\big(\eps^{r}(\Lin_A(Y))|\DD|^{-s}\big) \\
&= \sum_{r=0}^{k-1} h(n-k,r,1) \sum_{l=0}^{k-1-r}\tfrac{(-1)^{l}}{l+1}
\underset{s=d-k}{\Res} \,\Tr\big( \eps^{r}(\{ \nabla^l(\wt A),\DD\})|\DD|^{-s-2(l+1)}\big)\\
 & =  2\sum_{r=0}^{k-1} h(d-k,r,1) \sum_{l=0}^{k-1-r}\tfrac{(-1)^{l}}{l+1}
\ncint \nabla^{r+l}(\wt A) \DD |\DD|^{-(d-k + 2(r+l)) -2}  \\ 
& = -(n-k)
\ncint \wt A \DD |\DD|^{-(d-k) -2},
\end{align*}
because in the last sum it remains only the case $r+l=0$, so $r=l=0$.

Formula \eqref{tadpole0} is a direct application of \cite[Lemma
4.5]{MCC}.

The link between  $\Tad_{\DD+\wt A}$ and $\Tad_{\DD+A}$ follows from $J\DD=\epsilon 
\DD J$ and Lemma \ref{adjoint}.
\end{proof}

\begin{corollary} 
\label{Atilde=0}
In a real spectral triple $(\A,H,\DD)$, if $A=A^*\in \Omega_\DD^1(\A)$ is such that $\wt A=0$,  then $\Tad_{D+A}(k) =0$ for any $k\in \Z$, $k\leq d$. 
\end{corollary}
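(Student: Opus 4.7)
The plan is to deduce the corollary directly from Proposition~\ref{valeurtadpole}. That proposition writes each tadpole $\Tad_{\DD+A}(m)$ for $m \leq d$ as a linear functional of $A$ given by a single noncommutative integral: $-m\ncint A\DD|\DD|^{-m-2}$ via~(\ref{tadpolen-k}) when $m\neq 0$, and $-\ncint A\DD^{-1}$ via~(\ref{tadpole0}) when $m = 0$. So it suffices to show that under the hypothesis $\wt A = 0$, the integral $\ncint A\,X$ vanishes for every $X$ of the form $\DD|\DD|^{-m-2}$ or $\DD^{-1}$.

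The verification is essentially the computation that already appears at the end of the proof of Proposition~\ref{valeurtadpole}, specialised to this setting. The relations $J\DD = \epsilon\DD J$ and $J|\DD| = |\DD|J$ (hence also $J\DD^{-1}J^{-1} = \epsilon\DD^{-1}$) give $JAJ^{-1}X = \epsilon\,J(AX)J^{-1}$, so Lemma~\ref{adjoint} yields $\ncint JAJ^{-1}X = \epsilon\,\overline{\ncint AX}$. Since $A = A^{*}$, Corollary~\ref{reel} tells us $\ncint AX \in \R$, so $\ncint JAJ^{-1}X = \epsilon \ncint AX$. Summing,
\[
\ncint \wt A\,X \;=\; \ncint AX \,+\, \epsilon\ncint JAJ^{-1}X \;=\; (1 + \epsilon^{2})\ncint AX \;=\; 2\ncint AX.
\]
The hypothesis $\wt A = 0$ therefore forces $\ncint AX = 0$, and hence $\Tad_{\DD+A}(m) = 0$ for every $m \leq d$.

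No genuine obstacle arises: the corollary is simply the identity $\Tad_{\DD+\wt A} = 2\Tad_{\DD+A}$ of Proposition~\ref{valeurtadpole} read at $\wt A = 0$. The only point to verify is that the two formulas of the proposition jointly cover all integer indices $m \leq d$ claimed in the statement, which they do: the value $m = 0$ is handled by (\ref{tadpole0}), and the remaining integers $m \leq d$ with $m \neq 0$ by (\ref{tadpolen-k}).
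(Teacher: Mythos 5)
Your proof is correct and follows the paper's intended route: the corollary is read off Proposition~\ref{valeurtadpole}, and your computation showing $\ncint \wt A\,X = 2\ncint A\,X$ (via $J\DD=\epsilon\,\DD J$, Lemma~\ref{adjoint} and the reality of $\ncint A X$ from Corollary~\ref{reel}) is precisely the argument the paper invokes to justify $\Tad_{\DD+\wt A}=2\Tad_{\DD+A}$. The only cosmetic difference is that the paper concludes directly from that identity evaluated at $\wt A=0$, whereas you unwind the same factor of $2$ through the explicit integral formulas \eqref{tadpolen-k} and \eqref{tadpole0}; the content is identical.
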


\begin{remark}

Note that $\tilde A=0$ for all $A=A^{*}\in \Omega_\DD^1$, 
when $\A$ is commutative and $JaJ^{-1}=a^*$, for all $a \in \A$, see \eqref{JAJ}, so one can only use $\DD_A=\DD+A$.

But we can have $\A$ commutative and $JaJ^{-1}\neq a^*$ \cite{CGravity,
Kraj}:\\
Let $\A_1=\C \oplus \C$ represented on $\H_1=\C^3$ with, for some
complex number $m\neq0$, 
\begin{align*} 
\pi_1(a)&:=  \left( \begin{array}{ccc}
b_1 & 0 & 0\\
0 & b_1 & 0\\
0 & 0 & b_2
\end{array} \right), \,\,for \,\, a=(b_1,\,b_2) \in \A \\
\DD_1&:= \left( \begin{array}{ccc}
0 & m & m\\
\bar m & 0 & 0\\
\bar m & 0 & b
\end{array} \right),  \,\,
\chi_1:=  \left( \begin{array}{ccc}
1 & 0 & 0\\
0 & -1 & 0\\
0 & 0 & -1
\end{array} \right),  
\,\,
J_1:=\left( \begin{array}{ccc}
1 & 0 & 0\\
0 & 0 & 1\\
0 & 1 & 0
\end{array} \right) \circ \, cc
\end{align*}
where $cc$ is the complex conjugation. Then ($\A_1,\,\H_1,\,\DD_1$)
is a commutative real spectral triple of dimension $d=0$ with non
zero one-forms and such that $J_1\pi_1(a)J_1^{-1}= \pi_1(a^*)$ only if $a=(b_1,b_1)$. 

Take a commutative geometry 
\big($\A_2=C^{\infty}(M), \,\H=L^2(M,S),\,\DD_2,\, \chi_2,\, J_2$\big)
defined in \ref{rieman} where $d=dim M$ is even, and then the tensor
product of the two spectral triples, namely $\A=\A_1\otimes \A_2$,
$\H=\H_1 \otimes \H_2$, $\DD=\DD_1 \otimes \chi_2 + 1\otimes \DD_2$,
$\chi=\chi_1 \otimes \chi_2$ and $J$ is either $\chi_1 J_1\otimes
J_2$ when $d\in \set{2,6}$ mod 8 or $J_1 \otimes J_2$ in other cases,
see \cite{CGravity,Vanhecke}.

Then $(\A,\, \H,\, \DD)$ is a real commutative triple of dimension $d$ 
such that $\tilde A \neq 0$ for some selfadjoint
one-forms $A$, so is not exactly like in definition \ref{rieman}.
\end{remark}

The vanishing tadpole of order 0 has the following equivalence (see \cite{CC1})
\begin{align}
  \label{equ}
\ncint A \DD^{-1}=0, \,\forall A \in \Omega_D^1(\A)
\,\Longleftrightarrow  \,\ncint a b=\ncint a \a(b), \,\forall a,b
\in \A, 
\end{align}
where $\a(b):=\DD b\DD^{-1}$, equivalence which can be generalized as

 \begin{lemma}
\label{termecumulŽ}
In a spectral triple $(\A,\,\H,\,\DD)$, for any $k\in \N$,
\begin{align*}
& \ncint (A\DD^{-1})^n=0, \,\forall A \in \Omega^{1}_{\DD}(\A)
,\,\,\forall
n\in \set{1,\cdots,k} \Longleftrightarrow  \,\ncint \prod_{j=1}^k a_j\a(b_j)
 =\ncint \prod_{j=1}^k a_jb_j, \,\, \forall
a_j,\,b_j\in \A.
\end{align*}
\end{lemma}
 
 \begin{proof}
Note that $a[\DD,b] \DD^{-1}=a\,\tilde \a (b)$ where $\tilde \a (b)
:=\a(b)-b$.

Assuming the left hand-side, we get
\begin{align*}
0&=\ncint (A\DD^{-1})^n=\ncint a_1\tilde \a(b_1)\ldots a_j\tilde \a(b_j) \ldots a_n\tilde
\a(b_n)\\
&=\ncint a_1\tilde \a(b_1)\ldots a_j \a(b_j)a_{j+1}\tilde\a(b_{j+1})
\ldots a_k\tilde \a(b_k)-\ncint a_1\tilde \a(b_1)\ldots a_jb_ja_{j+1}\tilde \a(b_j) \ldots
a_n\tilde \a(b_n)
\end{align*}
$\forall \,a_j,\,b_j \in \A$. But the last term is zero if $\ncint
{(A\DD^{-1})}^{n-1}=0$ for all $A$. By induction, we end up with 
$0=\ncint a_1\a(b_1) \cdots a_{n-1}\a(b_{n-1})\,a_n \tilde \a(b_n)$. Varying $n$ between $1$ and $k$, we get the right hand-side.
\end{proof}

\section{Commutative spectral triples}

\subsection {Commutative geometry}

\begin{definition}
\label{rieman}
Consider a commutative spectral triple given by a compact
Riemannian spin manifold  $M$ of dimension $d$ without boundary and
its Dirac operator $\DD$ associated to the Levi--Civita connection.
This means $\big( \A:=C^{\infty}(M),\, \H:=L^2(M,S),\,\DD \big)$
where $S$
is the spinor bundle over $M$. This triple is real since, due to the
existence of a spin structure, the charge conjugation operator
generates an anti-linear isometry $J$ on $\H$ such that
$$
JaJ^{-1}=a^*,\quad \forall a \in \A,
$$
and when $d$ is even, the grading is given by the chirality matrix
\begin{align}
\label{chi}
\chi_M:=(-i)^{d/2}\,\ga^1\ga^2\cdots\ga^d.
\end{align}

Such triple is said to be a commutative geometry (see \cite{CGravity} and \cite{CReconstruction} for the role of $J$ in the nuance between spin and spin$^c$ manifold.)
\end{definition}
Since, $JaJ^{-1}=a^*$ for $a\in\A$, we get that in a commutative geometry, 
\begin{align}
\label{JAJ}
JAJ^{-1}=-\epsilon\,A^* , \quad \forall A\in \Omega^1_\DD(\A).
\end{align}

\subsection{No tadpoles}

The appearance of tadpoles never occur in commutative geometries, as
quoted in \cite[Lemma 1.145]{ConnesMarcolli} for the dimension $d=4$. 
This fact means that a given geometry $(\A,\, \H,\,\DD)$ is a
critical point for the spectral action \eqref{formuleaction}.

\begin{theorem} 
\label{proptadpoles} 
There are no tadpoles on a commutative geometry, namely, for any one-form $A=A^*\in \Omega_{\DD}^1(\A)$, $Tad_{\DD+A}(k)=0$, for any $k \in \Z$,
$k\leq d$.
\end{theorem}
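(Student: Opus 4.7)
The plan is to reduce the statement to the already-established machinery: equation \eqref{JAJ} describing how $J$ twists one-forms in the commutative case, together with Proposition \ref{valeurtadpole} (explicit form of tadpoles), Lemma \ref{adjoint} (behavior of $\ncint$ under adjoint and $J$-conjugation), and Corollary \ref{reel} (reality of the relevant residues). In fact the theorem will follow essentially instantly from Corollary \ref{Atilde=0}.

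First I would record the key algebraic observation. For a commutative geometry and $A=A^{*}\in\Omega^{1}_{\DD}(\A)$, identity \eqref{JAJ} gives $JAJ^{-1}=-\epsilon A^{*}=-\epsilon A$, so
\[
\wt A \,=\, A+\epsilon\,JAJ^{-1} \,=\, A-A \,=\, 0.
\]
Thus every selfadjoint one-form is ``$J$-antisymmetric'' up to the sign $\epsilon$, and $\wt A$ vanishes identically.

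The first route to the conclusion is then a one-line invocation of Corollary \ref{Atilde=0}, which states precisely that $\wt A=0$ forces $\Tad_{\DD+A}(k)=0$ for all integer $k\le d$.

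Since the corollary in question was given without a self-contained argument for $\Tad_{\DD+A}$ (only $\Tad_{\DD+\wt A}$ is manifestly zero), I would also sketch the direct verification via Proposition \ref{valeurtadpole}. Take $X:=A\DD|\DD|^{-(d-k)-2}$ for $k\neq d$, or $X:=A\DD^{-1}$ for $k=0$; these are the quantities whose residues compute $\Tad_{\DD+A}(d-k)$ and $\Tad_{\DD+A}(0)$ respectively. Using $JAJ^{-1}=-\epsilon A$, $J\DD J^{-1}=\epsilon\DD$, and the fact that $|\DD|$ commutes with $J$, a direct computation gives $JXJ^{-1}=-\epsilon\cdot\epsilon\,X=-X$, the two factors of $\epsilon$ combining to $-1$ independently of the KO-dimension. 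Lemma \ref{adjoint} then yields $\ncint X=\ncint JXJ^{-1}=-\ncint X$, while Corollary \ref{reel} guarantees $\ncint X\in\R$, so both sides are equal complex numbers and the relation $\ncint X=-\ncint X$ forces $\ncint X=0$.

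There is no real obstacle here: all the substantive work sits in the preliminary results. The only subtlety worth flagging is the simultaneous cancellation of the two $\epsilon$-signs, which makes the argument dimension-independent; this is why the result holds for every $d$ rather than only for one parity of the KO-dimension.
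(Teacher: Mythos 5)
Your proof is correct and follows essentially the same route as the paper: the paper's entire argument is that $\wt A=A+\epsilon JAJ^{-1}=0$ for selfadjoint $A$ by \eqref{JAJ}, whence Corollary \ref{Atilde=0} applies. Your supplementary direct verification via Proposition \ref{valeurtadpole}, Lemma \ref{adjoint} and Corollary \ref{reel} is sound (and is in fact the same $J$-conjugation trick underlying Lemma \ref{similar} and \eqref{trick}), but it is not needed beyond the one-line invocation.
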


\begin{proof}
Since $\tilde A=0$ when $A=A^*$ by \eqref{JAJ}, the result follows from Corollary \ref{Atilde=0}.
\end{proof}

There are similar results in the following 

\begin{lemma}
\label{similar}
Under same hypothesis, for any $k,\,l\in \N$

(i) $\ncint A\,\DD^{-k}=-\epsilon^{k+1}\,\ncint A\,\DD^{-k}$,

(ii) $\ncint \chi A\,\DD^{-k}=-\epsilon^{k+1}\ncint\chi A \,\DD^{-k}$,

(iii) $\ncint A^l \vert \DD \vert^{-k}=(-\epsilon)^l\,\ncint {A^l}\vert \DD \vert^{-k}$,

(iv) $\ncint \chi A^l \vert \DD \vert^{-k}=(-\epsilon)^l\,\ncint {\chi A^l}\vert \DD \vert^{-k}$.
\end{lemma}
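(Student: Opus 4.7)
The plan is to prove all four identities by one uniform scheme: conjugate the operator under the integral by the antilinear isometry $J$, apply Lemma~\ref{adjoint} to rewrite $\ncint JXJ^{-1}$ as $\overline{\ncint X}$, and finally invoke Corollary~\ref{reel} to drop the complex conjugate (each of the four integrals is a real number by that corollary).

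First I would collect the elementary conjugation rules. In a commutative geometry with $A=A^{*}\in\Omega^{1}_{\DD}(\A)$, equation~\eqref{JAJ} combined with $A^{*}=A$ gives $JAJ^{-1}=-\epsilon A$; since conjugation by $J$ is multiplicative, iterating yields $JA^{l}J^{-1}=(-\epsilon)^{l}A^{l}$. From $J\DD=\epsilon\DD J$ one has $J\DD^{-k}J^{-1}=\epsilon^{k}\DD^{-k}$; since $|\DD|^{2}=\DD^{2}$ commutes with $J$, also $J|\DD|^{-k}J^{-1}=|\DD|^{-k}$; and in the even case $J\chi J^{-1}=\epsilon'\chi$.

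Then (i) is immediate:
\begin{align*}
JA\DD^{-k}J^{-1}=(-\epsilon A)(\epsilon^{k}\DD^{-k})=-\epsilon^{k+1}A\DD^{-k},
\end{align*}
so Lemma~\ref{adjoint} gives $\overline{\ncint A\DD^{-k}}=-\epsilon^{k+1}\ncint A\DD^{-k}$, and the reality statement $\ncint A\DD^{-k}\in\R$ from Corollary~\ref{reel} lets me remove the bar. Identity (iii) is proved the same way: $JA^{l}|\DD|^{-k}J^{-1}=(-\epsilon)^{l}A^{l}|\DD|^{-k}$, and $\ncint A^{l}|\DD|^{-k}\in\R$ by Corollary~\ref{reel}.

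For (ii) and (iv) the same $J$-conjugation a priori produces an extra factor $\epsilon'$ coming from $J\chi J^{-1}$. To eliminate it I would combine the $J$-identity with the cyclicity of $\ncint$ and the anticommutation relations $\chi A=-A\chi$ (so $\chi A^{l}=(-1)^{l}A^{l}\chi$), $\chi\DD^{-k}=(-1)^{k}\DD^{-k}\chi$, and $[\chi,|\DD|^{-k}]=0$; moving $\chi$ through under the trace produces a second identity of the form $\ncint X=\eta\,\ncint X$ whose combination with the $J$-identity cancels the $\epsilon'$ factor and leaves the claimed formulas. The main obstacle is precisely this $\epsilon'$-bookkeeping for the two $\chi$-items; once it is cleared, the four proofs are formally identical and each collapses to a single application of Lemma~\ref{adjoint} followed by Corollary~\ref{reel}.
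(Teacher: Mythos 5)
Your treatment of (i) and (iii) is correct and is essentially the paper's argument: conjugate by $J$, use $JAJ^{-1}=-\epsilon A^*$ and $J\DD J^{-1}=\epsilon\DD$, and identify $\overline{\ncint X}$ with $\ncint X$. (The paper performs this last step without Corollary~\ref{reel}, writing $\overline{\ncint A^*\DD^{-k}}=\ncint \DD^{-k}A=\ncint A\DD^{-k}$ via Lemma~\ref{adjoint} and traciality, which also covers non-selfadjoint $A$; under the standing hypothesis $A=A^*$ your shortcut is equivalent.) Item (iv) also goes through, and in fact more simply than you suggest: $\chi$ exists only in even dimension, where $\epsilon=1$, so the identity $\ncint\chi A^l|\DD|^{-k}=(-1)^l\ncint\chi A^l|\DD|^{-k}$ already follows from $\chi A^l=(-1)^lA^l\chi$, $[\chi,|\DD|^{-k}]=0$ and the trace property alone --- no $J$, hence no $\epsilon'$ to cancel.

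The gap is in (ii). First, $\ncint\chi A\,\DD^{-k}$ is \emph{not} among the integrals covered by Corollary~\ref{reel} (the corollary has $|\DD|^{-k}$ in its $\chi$-term, not $\DD^{-k}$), and it is in fact purely imaginary rather than real: $\overline{\ncint\chi A\DD^{-k}}=\ncint(\chi A\DD^{-k})^*=\ncint\DD^{-k}A\chi=\ncint A\chi\DD^{-k}=-\ncint\chi A\DD^{-k}$. So your announced uniform scheme (conjugate, then drop the bar by reality) breaks at this item. Second, the promised $\epsilon'$-cancellation does not close. The two identities your toolbox produces are: from $J$-conjugation, $\overline{\ncint\chi A\DD^{-k}}=-\epsilon'\epsilon^{k+1}\ncint\chi A\DD^{-k}$, which combined with the pure imaginarity above gives $(1-\epsilon'\epsilon^{k+1})\ncint\chi A\DD^{-k}=0$; and from cycling $\chi$ through under the trace, $(1-(-1)^{k+1})\ncint\chi A\DD^{-k}=0$. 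The claim, with $\epsilon=1$, is $2\ncint\chi A\DD^{-k}=0$. When $k$ is odd and $d\equiv 0,4\bmod 8$ (so $\epsilon'=+1$), both of your identities are vacuous and no combination of them yields the claim; some input beyond $J$, adjoints and cyclicity is needed there. (The paper's own one-line ``same argument'' for (ii) is silent on this case as well, but your write-up positively asserts that the bookkeeping cancels the $\epsilon'$, which it does not in general.)
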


\begin{proof}
\begin{align*}
\ncint A\,\DD^{-k}&=\overline{\ncint JA\,\DD^{-k}J^{-1}}=\overline{\ncint
JAJ^{-1}( \epsilon^k \DD^{-k})}=-\epsilon^{k+1}\,\overline{\ncint
A^*\,\DD^{-k}}=-\epsilon^{k+1}\,\ncint \,\DD^{-k}A \\
&=-\epsilon^{k+1}\,\ncint A\,\DD^{-k}. 
\end{align*}
The same argument gives the other equalities using $\chi A=-A \chi$ and $\chi \vert \DD \vert=\vert \DD \vert \chi$.
\end{proof}

\begin{lemma}
   \label{componentofzeta}
For any one-form $A$,
$\ncint \big( A \,\DD^{-1}\big)^k=0$ when $k\in \N$ is odd.
\end{lemma}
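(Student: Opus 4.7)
The plan is to combine the two identities of Lemma \ref{adjoint}: since $\ncint X^* = \overline{\ncint X}$ and $\ncint J X J^{-1} = \overline{\ncint X}$, the composition $X \mapsto J X^* J^{-1}$ preserves the value of $\ncint$ on $\Psi(\A)$. The strategy is to apply this invariance to $X = (A\DD^{-1})^k$ and to track the signs that arise from $J\DD^{-1}J^{-1} = \epsilon \DD^{-1}$ (a direct consequence of $J\DD = \epsilon \DD J$) together with the relation \eqref{JAJ}.

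A small preliminary step is to establish $JA^*J^{-1} = -\epsilon A$. Since \eqref{JAJ} gives $JBJ^{-1} = -\epsilon B^*$ for every one-form $B$, it suffices to check that $A^*$ is itself a one-form. This is a routine manipulation using $[\DD, b]^* = -[\DD, b^*]$ together with the Leibniz rule for $[\DD,\cdot]$: the space $\Omega^1_\DD(\A)$ is $*$-stable. Applying this to $A^* \in \Omega^1_\DD(\A)$ yields $JA^*J^{-1} = -\epsilon (A^*)^* = -\epsilon A$. One then computes
$$
J\big((A\DD^{-1})^k\big)^* J^{-1} \,=\, J(\DD^{-1} A^*)^k J^{-1} \,=\, \big( \epsilon \DD^{-1} \cdot (-\epsilon) A \big)^k \,=\, (-1)^k (\DD^{-1} A)^k,
$$
and the trace property of $\ncint$ gives $\ncint (\DD^{-1}A)^k = \ncint (A\DD^{-1})^k$.

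Putting the pieces together, the invariance $\ncint J X^* J^{-1} = \ncint X$ applied to $X=(A\DD^{-1})^k$ yields
$$
\ncint (A\DD^{-1})^k \,=\, (-1)^k \ncint (A\DD^{-1})^k,
$$
which forces the integral to vanish whenever $k$ is odd. The argument is entirely algebraic and poses no real obstacle; the only mildly delicate point is the identity $JA^*J^{-1} = -\epsilon A$, which depends on the $*$-stability of $\Omega^1_\DD(\A)$. A natural alternative is to apply the two identities of Lemma \ref{adjoint} separately, first $J$-conjugating and then taking adjoints, which establishes the same relation with $A^*$ in place of $A$ and concludes again by $*$-stability of $\Omega^1_\DD(\A)$.
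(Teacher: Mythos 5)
Your argument is correct and is essentially the paper's own proof: both rest on Lemma \ref{adjoint}, the relations $J\DD^{-1}J^{-1}=\epsilon\,\DD^{-1}$ and \eqref{JAJ}, and the trace property of $\ncint$, yielding $\ncint (A\DD^{-1})^k=(-1)^k\ncint (A\DD^{-1})^k$. The only cosmetic difference is that you package the two identities of Lemma \ref{adjoint} into the single invariance $X\mapsto JX^*J^{-1}$ (which requires the $*$-stability of $\Omega^1_\DD(\A)$ that you correctly justify), whereas the paper applies $J$-conjugation first and then removes the complex conjugation via the adjoint.
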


\begin{proof}
We have
\begin{align}
\label{trick}
\ncint \big( A \,\DD^{-1}\big)^k&=\overline{ \ncint J\big(
A\DD^{-1}\big)^kJ^{-1}}=\overline{ \ncint \big( JAJ^{-1}\,
J\DD^{-1}J^{-1}\big)^k}=(-1)^k\epsilon^{2k}\overline{ \ncint
\big(A^*\DD^{-1}\big)^k} \nonumber  \\
&=(-1)^k\ncint (A\DD^{-1})^k
\end{align}
(which shows again that $\ncint A \DD^{-1}=0$.)
\end{proof}

\subsection{Miscellaneous for commutative geometries}

To show that more noncommutative integrals, where the use of the operator $J$ in the trick \eqref{trick} is not sufficient, are nevertheless zero, we need to use the Wodzicki residue (see \cite{Wodzicki1,Wodzicki}):
in a chosen coordinate system and local trivialization $(x,\xi)$ of
$T^*M$, this residue is
\begin{align}
\label{wres}
wres_{x}(X) & :=\int_{S_x^*M} \Tr\big(\sigma_{-d}^X\,(x,\xi)\big)\,
\vert d\xi \vert\,\vert dx^1\wedge\cdots \wedge dx^d\vert,
\end{align}
where $\sigma_{-d}^X\,(x,\xi)$ is the symbol of the classical
pseudodifferential operator $X$ in the chosen coordinate frame $(x_1,\cdots,x_d)$, which is homogeneous of degree
$-d:=-\text{dim}(M)$ and taken at point $(x,\,
\xi)\in T^*(M)$, $d\xi$ is the normalized restriction of the volume
form to the unit sphere $S_x^*M \simeq \mathbb{S}^{d-1}$, so we
assume $d\geq 2$ to get $S_x^*M$ connected.

This $wres_x(X)$ appears to be a one-density not depending on the
local
representation of the symbol (see \cite{Wodzicki,Polaris}), so 
\begin{align}
Wres(X):=\int_M wres_x(X)
\end{align}
is well defined. 

The noncommutative integral $\ncint$ coincides with the Wodzicki
residue, up to a scalar: since both $\ncint$ and $Wres$ are
traces on the set of pseudodifferential operators, the uniqueness of
the trace \cite{Wodzicki} gives the proportionality  
\begin{align}
\label{Wres}
\ncint X=c_d\,Wres(X)
\end{align}
where $c_d$ is a constant depending only on $d$. Computing separately $\ncint \vert \DD \vert^{-d}$ and $Wres( \vert \DD \vert^{-d})$, we get $c_d>0$ (note that $\ncint$ is not a positive functional, see Lemma \ref{scalarcurvature}.)

Lemma \ref{adjoint} follows for instance from the fact that $\int_M
wres_x(X^*)=\overline{\int_M wres_x(X)}$.

Note that $Wres$ is independent of the metric.

As noticed by Wodzicki, $\ncint X$ is equal to $-2$ times the
coefficient in log $t$ of the asymptotics of $\Tr(X\, e^{-t\,\DD^2}$)
as $t \rightarrow 0$. It is remarkable that this coefficient is
independent of $\DD$ and this gives a close relation between the
$\zeta$ function and heat kernel expansion with $Wres$. Actually, by
\cite[Theorem 2.7]{GS}
\begin{align}
\label{heat}
\Tr(X\, e^{-t\,\DD^2}) \sim_{t\rightarrow 0^+} \sum_{k=0}^{\infty}
a_k\,t^{(j-ord(X)-d)/2} + \sum_{k=0}^{\infty} (-a'_k\,\log
t+b_k)\,t^k,
\end{align}
so $\ncint X=2 a'_0$. Since, via Mellin transform, $\Tr(X\,\DD^{-2s})
=\tfrac{1}{\Gamma (s)} \int_0^\infty t^{s-1}\, \Tr(X\, e^{-t\,\DD^2})
\,dt$, the non-zero coefficient $a'_k$, $k\neq0$ create a pole of
$\Tr(X\,\DD^{-2s}) $ of order $k+2$ since $\int_0^1 t^{s-1}
\log(t)^k=\tfrac{(-1)^k k!}{s^{k+1}}$ and 
\begin{align}
\label{Gamma}
\Gamma(s)=\frac{1}{s} +\gamma+s\,g(s)
\end{align} 
where $\gamma$ is the Euler constant and the function $g$ is also holomorphic around zero.

We have $\ncint 1=0$ and more generally, $Wres(P)=0$ for all zero-order pseudodifferential projections \cite{Wodzicki1}.

For extension to log-polyhomogeneous pseudodifferential operators,
see \cite{Lesch}.

When $M$ has a boundary, some $a'_k$ are non zero, the dimension
spectrum can be non simple (even if it is simple for the Dirac operator, see for instance \cite{Lescure}.)

On a spectral triple $(\A,\, \H,\, \DD)$, the fact to change the
product on $\A$ may or not affect the dimension spectrum: for
instance, there is no change  when one goes from the commutative
torus to the noncommutative one (see \cite{MCC}), while the dimension
spectrum of $SU_q(2$) which is bounded from below, does not coincide
with the dimension spectrum of the sphere $\mathbb{S}^3$
corresponding to $q=1$ \cite[Corollary 4.10]{MC}.

\vspace{0.3cm}
We first introduce few necessary notations. In the following we fix a local coordinate frame $(U,(x_i)_{1\leq i \leq n})$ which is normal at $x_0\in M$, and denote $\sigma_{k}^X$ the $k$-homogeneous symbol of any classical pseudodifferential operator $X$ on $M$, in this local coordinate frame. The Dirac operator is locally of the form---compatible with \eqref{phiDirac}
\begin{align}
\label{Dirac}
\DD=-i\ga (dx^j)\,\big(\partial_{x^j}+\omega_j(x)\big)
\end{align}
where $\omega_j$ is the spin connection, $\ga$ is the Clifford
multiplication of one-forms \cite[page 392]{Polaris}. Here we make the choice of gauge given
by $h:=\sqrt{g}$ which gives \cite[Exercise 9.6]{Polaris}
$$
\omega_i=-\tfrac{1}{4}\, \big( \Gamma_{ij}^k
\,g_{kl}-\partial_{x^j}(h_j^\alpha) \delta_{\a \beta}\,h_l^\beta
\big)\, \ga(dx^j)\,\ga(dx^l),\quad
\ga(dx^j)=\sqrt{g^{-1}}^{\,jk}\,\gamma_k
$$
where $\gamma^j=\gamma_j$ are the selfadjoint constant $\gamma$
matrices satisfying $\set{\ga^i ,\ga^j}=\delta^{ij}$. Thus
$$
\sigma^{\DD}(x,\xi)=\sqrt{g^{-1}}^{\,jk}\,\gamma_k \big(\xi_j-i\,\omega_j(x)\big).
$$

We have chosen normal (or geodesic) coordinates around the base point $x_0$. Since 
\begin{align*}
& g_{ij}(x)=g_{ij}(x_0)+\tfrac{1}{3}R_{ijkl}\,x^kx^l +o(\vert \vert x
\vert\vert^3),\\
& g^{ij}(x)=g^{ij}(x_0)-\tfrac{1}{3}{{{{R}^i}_k}{}^j}_l\,x^kx^l +o(\vert \vert x \vert\vert^3),\\
& g_{ij}(x_0)=\delta_{ij}, \quad\Gamma_{ij}^k(x_0)=0,
\end{align*}
the matrices $h(x)$ and $h^{-1}(x)$ have no linear terms in $x$. Thus
$$
\om_i(x_0)=0.
$$
We could also have said that parallel translation of a basis of the
cotangent bundle along the radial geodesics emanating from $x_0$
yields a trivialization (this is the radial gauge) such that
$\om_i(x_0)=0$.
In particular, using product formula for symbols and the fact that in
the decomposition $D=\DD+P$, $P\in OP^{-\infty}$, we get for $k \in
\N$
\begin{align}
&\sigma_1^{\DD}(x,\xi)=\sqrt{g^{-1}}^{\,jk}(x)\,
\gamma_k\xi_j=\ga(\xi),
&&\sigma_1^{\DD}(x_0,\xi)=\gamma^j\xi_j,\label{sigma1}\\
&\sigma_0^{\DD}(x,\xi)=-i\sqrt{g^{-1}}^{\,jk}(x)\,\gamma_k\om_j(x),
&&\sigma_0^{\DD}(x_0,\xi)=0,\label{sigma0}\\
&\partial_{x^k}\sigma_1^{\DD}(x_0,\xi)=0, \label{deriveedesigma1}\\
&\sigma_{-1}^{\DD^{-1}}(x,\xi)=\sqrt{g^{-1}}^{\,jk}(x)\,\gamma_{j}\xi_k \, \vert
\vert \xi \vert \vert_x^{-2}, && \vert \vert \xi \vert \vert_x^2:=g^{jk}(x)\,\xi_j \xi_k \label{sigma-1}\\
&\partial_{x^k}\sigma_{-1}^{\DD^{-1}}(x_0,\xi)=0. &&
\label{deriveesigma-1}
\end{align}
We will use freely the fact that the symbol of a one-form $A$ can be written as 
\begin{align}
\label{sigmaA}
\sigma^A(x,\xi)=\sigma_0^A(x)=-i \, a_k(x) \, \gamma^k
\end{align}
with $a_k(x) \in i\R$ when $A=A^*$.

When $d$ is even (so $\epsilon=1$), remark that for $k=l$ and
$A_i=a_i[\DD,b_i]$ and $a=\prod_{i=1}^k a_i$, then by \cite[page 231
(actually, $\chi$ is missing)]{CM}, \cite{Ponge2} or \cite[p.
479]{Polaris} when $k=d$, ($M$ is supposed to be oriented)
$$
\ncint \chi A_1\cdots A_k \vert \DD\vert^{-k}=c'_k \int_M
\hat{A}(R)^{(d-k)}\wedge a db_1 \wedge \cdots \wedge db_k
$$
where $\hat{A}(R)$ is the $\hat{A}$-genus associated to the Riemannian
curvature $R$. Since  we have $\hat{A}(R)\in \oplus_{j\in \N}\Omega^{4j}(M,\R)$,
$\ncint \chi A^k \vert D\vert^{-k}$ can be non zero only when
$k=d-4j$. For instance in dimension $d=$2, for $j=0$,
$$
\sigma_{-2}^{\chi A_1A_2 \DD^{-2}}(x,\xi)=\sigma_0^{\chi
A_1A_2}(x)\,\sigma_{-2}^{\DD^{-2}}(x,\xi)=-a_1(x)\,a_2(x) \,\chi
g^{jk}(x)\gamma_{j}\gamma_{k}\,\tfrac{1}{g^{lm}(x)\xi_{l}\xi_{m}}.
$$
Thus $wres_{x}(\chi A_1A_2 \DD^{-2})=-2\,a_1(x)\,a_2(x)\, \sqrt{\text{det}\, g_x} \, \Tr(\chi \gamma^{j}\gamma^{k})$,  so if $ \nu_g$ is the Riemannian density, 
\begin{align}
\label{d=2}
\ncint \chi A_1A_2\,\DD^{-2}=-2 c_d\,\, \Tr(\chi \gamma^{j}\gamma^{k})\int_M\,a_1a_2\, \nu_g .
\end{align}
Actually, this last equality is nothing else than Wodzicki--Connes' trace theorem, see \cite[section
7.6]{Polaris}, and this is equal to $c'_d\int_M   a_1a_2db_1\wedge db_2$ as claimed above.

\vspace{0.3cm}
We introduce a few subspaces of the pseudodifferential operators space
$\Psi (M)$. Let
\begin{align*}
\B_e&:=\set{P \in \Psi(M) \,:\, \sigma_j^P \in E_j, \, \forall j\in
\Z}\quad \text{e for even}, \\
\B_o & :=\set{P \in \Psi(M) \,:\, \sigma_j^P \in O_j, \, \forall j\in
\Z}\quad \text{o for odd},
\end{align*}
such that, for $m=2^{[d/2]}$,

\begin{align*}
&E_j := \set{ f\in C^{\infty} \big(U\times
\R^d\backslash\set{0},\M_m(\C) \big) \ : \   f(x,\xi) = \sum_{i\in I}
\tfrac{\xi^{\b^i}}{\norm{\xi}_x^{2k_i}} h_i(x) \ , \ I\neq \emptyset,
\\
&\hskip3cm k_i\in \N, \ \b^i \in \N^d  \ , |\b^i|-2k_i = j \ , \
h_i\in C^{\infty}(U,\M_m(\C))}\, , \\
& O_j := \set{ f\in C^{\infty} \big(U\times
\R^d\backslash\set{0},\M_m(\C) \big) \ : \   f(x,\xi) = \sum_{i\in I}
\tfrac{\xi^{\b^i}}{\norm{\xi}_x^{2k_i+1}} h_i(x) \ , \ I\neq
\emptyset, \\
&\hskip3cm k_i\in \N, \ \b^i \in \N^d,  \  |\b^i|-(2k_i+1) = j \ , \
h_i\in C^{\infty}(U,\M_m(\C))}\, .
\end{align*}

\begin{lemma} 
\label{Ejlem}
For any $j,\,j' \in \Z$ and  $\a\in \N^d$, 

$(i)$ $E_j E_{j'}\subseteq E_{j+j'} \text{ and } \del_\xi^{\a} E_j
\subseteq E_{j-|\a|},\,\,\del_x^\a E_j\subseteq E_j$.

$(ii)$ $O_j O_{j'}\subseteq E_{j+j'}$ and $\del_\xi^{\a} O_j
\subseteq O_{j-|\a|}$, $\del_x^\a O_j\subseteq O_j$.

$(iii)$ $O_j E_{j'}$ and $E_{j'} O_j$ are included in $O_{j+j'}$. 

$(iv)$ $\B_e$ is a sub-algebra of $\Psi(M)$.

$(v)$  $\B_e\B_e$, $\B_o \B_o$ are included in $\B_e$, and $\B_e
\B_o$, $\B_o \B_e$ are included in $\B_o$.
\end{lemma}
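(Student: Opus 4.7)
The plan is routine. Parts (i)--(iii) reduce to direct computations on the generating summands $\xi^\beta \norm{\xi}_x^{-2k} h(x)$ and $\xi^\beta \norm{\xi}_x^{-(2k+1)} h(x)$, and then parts (iv)--(v) follow from (i)--(iii) via the standard asymptotic composition formula
\begin{align*}
\sigma^{PQ} \sim \sum_\alpha \tfrac{(-i)^{|\alpha|}}{\alpha!}\, \partial_\xi^\alpha \sigma^P \cdot \partial_x^\alpha \sigma^Q
\end{align*}
for classical pseudodifferential symbols.

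For (i), the product inclusion $E_j E_{j'} \subseteq E_{j+j'}$ is immediate by multiplying representative summands and adding indices. For $\partial_{\xi_l}$, the product rule gives two terms: $\beta_l \xi^{\beta-e_l}\norm{\xi}_x^{-2k} h(x)$, visibly in $E_{j-1}$, and---using the identity $\partial_{\xi_l} \norm{\xi}_x^{-2k} = -2k\, g^{lm}(x)\xi_m\, \norm{\xi}_x^{-2(k+1)}$---a term $-2k g^{lm}(x) \xi^\beta \xi_m \norm{\xi}_x^{-2(k+1)} h(x)$, which also lies in $E_{j-1}$ since the numerator degree rises by $1$ and the exponent of $\norm{\xi}_x$ by $2$. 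For $\partial_{x^l}$, either $h$ is differentiated (trivial) or a metric coefficient inside $\norm{\xi}_x^2$ is hit, producing an extra factor $(\partial_{x^l} g^{jm})\xi_j\xi_m \norm{\xi}_x^{-2}$, which preserves the degree $j$. Parts (ii)--(iii) follow by identical calculations with careful parity tracking: the product of two odd-exponent factors yields $\norm{\xi}_x^{-2(k+k'+1)}$ (even, so $O \cdot O \subseteq E$); a mixed product yields $\norm{\xi}_x^{-(2(k+k')+1)}$ (odd, so $O \cdot E \subseteq O$); and the computation $\partial_{\xi_l} \norm{\xi}_x^{-(2k+1)} = -(2k+1) g^{lm}(x)\xi_m \norm{\xi}_x^{-(2(k+1)+1)}$ confirms that $\partial_\xi^\alpha$ preserves the odd-parity class $O$.

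For (iv) and (v), I apply the composition formula homogeneous-component by homogeneous-component. The $k$-homogeneous part of $\sigma^{PQ}$ is built from terms $\partial_\xi^\alpha \sigma_{j_1}^P \cdot \partial_x^\alpha \sigma_{j_2}^Q$ with $j_1+j_2-|\alpha| = k$. If $P, Q \in \B_e$, part (i) places each such term in $E_{j_1-|\alpha|}\cdot E_{j_2} \subseteq E_k$, giving $\sigma_k^{PQ} \in E_k$ and hence $PQ \in \B_e$, which is (iv). For (v), the remaining three inclusions follow the same scheme: $\B_o \B_o \subseteq \B_e$ combines $\partial_\xi^\alpha O_{j_1} \subseteq O_{j_1-|\alpha|}$ of (ii) with $O_{j_1-|\alpha|} \cdot O_{j_2} \subseteq E_k$ of (ii); and $\B_e \B_o \subseteq \B_o$ (symmetrically $\B_o \B_e \subseteq \B_o$) combines (i) and (iii). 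The only delicate point throughout is the bookkeeping of parity shifts when differentiating $\norm{\xi}_x^{-(\mathrm{odd})}$ versus $\norm{\xi}_x^{-(\mathrm{even})}$, which is precisely controlled by the two derivative identities above; consequently there is no real obstacle, only careful index accounting.
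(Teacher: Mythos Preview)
Your proposal is correct and follows essentially the same approach as the paper: direct verification of (i)--(iii) on generating summands, then feeding these into the homogeneous composition formula for (iv)--(v). The only cosmetic difference is that you compute $\partial_{\xi_l}\norm{\xi}_x^{-2k}$ and $\partial_{\xi_l}\norm{\xi}_x^{-(2k+1)}$ in one step via the chain rule, whereas the paper packages the higher $\partial_\xi^\gamma$-derivatives through a short induction on $|\gamma|$; both give the same $E$/$O$ membership and the rest of the argument is identical.
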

\begin{proof}
$(i)$ Let $f \in E_j$ and $\a \in \N^d$. We have, if $f(x,\xi) =
\sum_{i\in I} \tfrac{\xi^{\b^i}}{\norm{\xi}_x^{2k_i}} \,h_i(x)$, 
$$
\del^\a_\xi f = \sum_{i\in I}
\del^\a_{\xi}(\tfrac{\xi^{\b^i}}{\norm{\xi}_x^{2k_i}}) \,h_i(x)=
\sum_{i\in I} \sum_{\ga\leq \a} \tbinom{\a}{\ga}
\del^{\a-\ga}_{\xi}(\xi^{\b^i})
\del^\ga_\xi(\tfrac{1}{\norm{\xi}_x^{2k_i}}) \,h_i(x).
$$
We check by induction that we can write
$$
\del^\ga_\xi(\tfrac{1}{\norm{\xi}_x^{2k_i}}) =
\tfrac{1}{\norm{\xi}_x^{2k_i(|\ga|+1)}} \sum_{p} \la_{p}
\prod_{j=1}^{|\ga|} \del^{\b^{j,p}}_\xi \norm{\xi}_x^{2k_i}
$$
where $\la_{p}$ are real numbers, the sum on indices $p$ is
finite, and $\sum_{j=1}^{|\ga|} \b^{j,p} = \ga$. As a consequence,
since $\norm{\xi}_x^{2k_i}=(g^{kl}(x)\xi_k\xi_l)^{k_i}$ is a
homogeneous polynomial in $\xi$ of degree $2k_i$, we get 
$\del^\a_\xi f \in E_{j-|\a|}$. The inclusions $E_j E_{j'}\subseteq
E_{j+j'}$, $\del_x^\a E_j\subseteq E_j$ are straightforward.

$(ii)$ The proof is similar to $(i)$ since by induction
$$
\del^\ga_\xi(\tfrac{1}{\norm{\xi}_x}) =
\tfrac{1}{\norm{\xi}_x^{2|\ga|+1}} \sum_{p} \la_{p}
\prod_{j=1}^{|\ga|} \del^{\b^{j,p}}_\xi \norm{\xi}_x^{2}
$$
where $\la_{p}$ are real numbers, the sum on the indices $p$ is
finite and $\sum_{j=1}^{|\ga|} \b^{j,p} = \ga$.

$(iii)$ Straightforward.

$(iv)$ The product symbol formula for two classical
pseudodifferential operators $P\in \Psi^p(M)$, $Q\in \Psi^q(M)$ gives
\begin{equation}
\sg^{PQ}_{p+q-j} = \sum_{\a\in \N^d}\, \sum_{k\geq 0, \ |\a|+k\leq j}
i^{|\a|}\tfrac{(-1)^{|\a|}}{\a!} \, \del_{\xi}^\a
\sg^P_{p-j+|\a|+k}\, \del^\a_x \sg^Q_{q-k} \, .
\label{prodsymbol}
\end{equation}
The presence of the factor $i^{|\a|}$ that will be crucial in later
arguments like Lemma \ref{Calg}.

If $P,Q \in \B_e$, we see that by $(i)$, $\del_{\xi}^\a
\sg^{P}_{p-j+|\a|+k}\in E_{p-j+k}$ and $\del^\a_x \sg^Q_{q-k} \in
E_{q-k}$. 
Again by $(i)$, we obtain $\del_{\xi}^\a \sg^P_{p-j+|\a|+k}\,
\del^\a_x \sg^Q_{q-k} \in E_{p+q-j}$, so the result follows from
(\ref{prodsymbol}).

$(v)$ A similar argument as $(iv)$ can be applied, using $(ii)$ to
obtain $\B_o\B_o\subseteq \B_e$ and $(iii)$ to get $\B_o\B_e
\subseteq \B_o$, $\B_e\B_o\subseteq \B_o$.
\end{proof}

$\B_e$ and $\B_o$ are stable by inverse:

\begin{lemma}
\label{Binversion}
Let $P\in \B_e$ (resp. $\B_o$) be an elliptic classical
pseudodifferential operator in $\Psi^p(M)$ with
$\sg^P_p(x,\xi)=\norm{\xi}_x^p$, $p\in \N$. Then any parametrix
$P^{-1}$ of $P$ is in $\B_e$ (resp. $\B_o$).
\end{lemma}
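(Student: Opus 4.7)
The plan is to build a parametrix of $P$ via its standard asymptotic symbol expansion and verify by induction on the homogeneity degree that every component of its symbol lies in the required space $E_{-p-j}$ (respectively $O_{-p-j}$), using the closure properties in Lemma \ref{Ejlem}. Since any two parametrices of an elliptic operator differ by a smoothing operator, whose full symbol is zero in every homogeneous degree, it is enough to check one explicit construction.

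Write $\sigma^{P^{-1}} \sim \sum_{j\geq 0} \sigma^{P^{-1}}_{-p-j}$ and impose $P\circ P^{-1}\sim I$. Extracting homogeneous components of degree $-j$ from (\ref{prodsymbol}) gives
\begin{align*}
\sigma^{P^{-1}}_{-p}(x,\xi) &= \|\xi\|_x^{-p},\\
\sigma^{P^{-1}}_{-p-j}(x,\xi) &= -\,\|\xi\|_x^{-p} \sum_{k=0}^{j-1}\,\sum_{\substack{\alpha\in\N^d\\ |\alpha|+k\leq j}} i^{|\alpha|}\tfrac{(-1)^{|\alpha|}}{\alpha!}\,\partial_\xi^\alpha \sigma^P_{p-j+|\alpha|+k}\,\partial_x^\alpha \sigma^{P^{-1}}_{-p-k},
\end{align*}
for $j\geq 1$, expressing $\sigma^{P^{-1}}_{-p-j}$ in terms of components already constructed at lower orders.

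For the base case, $\|\xi\|_x^{2} = g^{kl}(x)\xi_k\xi_l$ is a homogeneous polynomial of degree $2$ in $\xi$. Hence when $p$ is even one has $\|\xi\|_x^{-p}=(g^{kl}\xi_k\xi_l)^{-p/2}\in E_{-p}$ (take $\beta=0$, $k_i=p/2$), while when $p$ is odd one has $\|\xi\|_x^{-p}\in O_{-p}$ (take $\beta=0$, $k_i=(p-1)/2$). Both cases put $\sigma^{P^{-1}}_{-p}$ in the desired space. For the inductive step, assume the claim holds for all $k<j$. In the even case, Lemma \ref{Ejlem}(i) gives $\partial_\xi^\alpha \sigma^P_{p-j+|\alpha|+k}\in E_{p-j+k}$ and $\partial_x^\alpha \sigma^{P^{-1}}_{-p-k}\in E_{-p-k}$; their product lies in $E_{-j}$, and multiplying by $\|\xi\|_x^{-p}\in E_{-p}$ lands the result in $E_{-p-j}$. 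In the odd case, Lemma \ref{Ejlem}(ii) puts the same two factors in $O_{p-j+k}$ and $O_{-p-k}$ respectively, so their product is in $E_{-j}$, and Lemma \ref{Ejlem}(iii) combined with $\|\xi\|_x^{-p}\in O_{-p}$ then yields $\sigma^{P^{-1}}_{-p-j}\in O_{-p-j}$.

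The argument is essentially bookkeeping once the recursion is written down; no serious obstacle arises beyond carefully tracking the parity of the factor of $\|\xi\|_x^{-p}$ that opens every term. The only mild subtlety is verifying that differentiating the rational functions of $\xi$ appearing in $E_j$ and $O_j$ does not leave these spaces, but this is exactly what Lemma \ref{Ejlem}(i)--(ii) records, so the induction closes.
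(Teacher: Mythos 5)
Your proof is correct and follows essentially the same route as the paper: the recursion for $\sg^{P^{-1}}_{-p-j}$ extracted from the product formula \eqref{prodsymbol}, the base case $\sg^{P^{-1}}_{-p}=\norm{\xi}_x^{-p}\in E_{-p}$ (resp.\ $O_{-p}$), and an induction closed by the stability properties of Lemma \ref{Ejlem}. Your preliminary remark that all parametrices share the same homogeneous symbols, so one explicit construction suffices, is a small but welcome addition that the paper leaves implicit.
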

\begin{proof} 
Assume $P \in \B_e$ so $p$ is even. From the parametrix equation $P
P^{-1} = 1$, we obtain  $\sg_{-p}^{P^{-1}} = (\sg_p^{P})^{-1}=
\norm{\xi}_x^{-p} \in E_{-p}$. Moreover, using (\ref{prodsymbol}), we
see that for any $j\in \N^*$,
\begin{align}
\sg_{-p-j}^{P^{-1}} = -(\sg_p^{P})^{-1}\big( \sum_{0\leq k<j}
\sg_{p-j+k}^P \, \sg_{-p-k}^{P^{-1}} 
+ \sum_{0<|\a|\leq j} \sum_{k=0}^{j-|\a|}
i^{|\a|}\tfrac{(-1)^{|\a|}}{\a!}  \,\del_{\xi}^\a \sg_{p-j+|\a|+k}^P
\, \del^\a_x \sg_{-p-k}^{P^{-1}} \,\big)
\label{parametrix}
\end{align}
We prove by induction that for any $j\in \N$, $\sg_{-p-j}^{P^{-1}}
\in E_{-p-j}$: suppose that for a $j\in \N^*$, we have for any
$j'<j$, $\sg_{-p-j'}^{P^{-1}} \in E_{-p-j'}$. We then directly check
with Lemma \ref{Ejlem} and (\ref{parametrix}) that
$\sg_{-p-j}^{P^{-1}} \in E_{-p-j}$.

The case $P\in \B_o$ is similar.
\end{proof}

\begin{lemma}
\label{DiracB}
For any $k\in \Z$, $\DD^{k} \in \B_e$ and when $k$ is odd, $|\DD|^k\in
\B_o$.
\end{lemma}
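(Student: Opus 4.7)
My plan has two main parts: first I would show $\DD^k \in \B_e$ for all $k\in\Z$, then $|\DD|^k \in \B_o$ for $k$ odd.

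For the first part, I read off from \eqref{sigma1} and \eqref{sigma0} that $\DD\in\B_e$: the only nonzero homogeneous components are $\sigma_1^{\DD}=\sqrt{g^{-1}}^{\,jk}(x)\,\ga_k\xi_j \in E_1$ (take $\b=e_j$, $k_i=0$) and $\sigma_0^{\DD}=-i\sqrt{g^{-1}}^{\,jk}(x)\,\ga_k\om_j(x)\in E_0$ (take $\b=0$, $k_i=0$). Lemma \ref{Ejlem}(iv) then gives $\DD^k\in\B_e$ for every $k\geq 0$. Since $\DD^2$ is elliptic of order $2$ with $\sigma_2^{\DD^2}(x,\xi)=\norm{\xi}_x^2\,\Id$, Lemma \ref{Binversion} yields $\DD^{-2}\in\B_e$; consequently $\DD^{-2m}=(\DD^{-2})^m\in\B_e$ and $\DD^{-(2m+1)}=\DD\cdot\DD^{-2(m+1)}\in\B_e$ for each $m\geq 0$.

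For the second part, it suffices to establish $|\DD|\in\B_o$, since Lemma \ref{Binversion} applied with $p=1$ and $\sigma_1^{|\DD|}=\norm{\xi}_x\,\Id$ then gives $|\DD|^{-1}\in\B_o$, and the general odd case follows from the identities $|\DD|^{2m+1}=|\DD|\cdot\DD^{2m}$ and $|\DD|^{-(2m+1)}=|\DD|^{-1}\cdot\DD^{-2m}$ (which use $|\DD|^2=\DD^2$) together with Lemma \ref{Ejlem}(v). To show $|\DD|\in\B_o$ I would prove by induction on $j\geq 0$ that $\sigma_{1-j}^{|\DD|}\in O_{1-j}$. The base case $j=0$ uses the rewriting $\norm{\xi}_x=(g^{ab}(x)\xi_a\xi_b)/\norm{\xi}_x$, which exhibits $\sigma_1^{|\DD|}=\norm{\xi}_x\,\Id\in O_1$. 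For the induction step, writing $|\DD|^2=\DD^2$ and applying the product formula \eqref{prodsymbol} gives $\sigma_{2-j}^{\DD^2}=2\norm{\xi}_x\,\sigma_{1-j}^{|\DD|}+R_j$, where $R_j$ is a finite sum of terms $\del_\xi^\a\sigma_{1-a}^{|\DD|}\,\del_x^\a\sigma_{1-b}^{|\DD|}$ with $a+b+|\a|=j$ and $a,b<j$. By the inductive hypothesis combined with Lemma \ref{Ejlem}(ii), each such term lies in $O_{1-a-|\a|}\cdot O_{1-b}\subseteq E_{2-j}$. Since $\sigma_{2-j}^{\DD^2}\in E_{2-j}$ and $\tfrac{1}{2\norm{\xi}_x}\in O_{-1}$, Lemma \ref{Ejlem}(iii) delivers $\sigma_{1-j}^{|\DD|}\in O_{-1}\cdot E_{2-j}\subseteq O_{1-j}$.

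The \textbf{main obstacle} is the induction for $|\DD|$: it tacitly requires that $|\DD|=(\DD^2)^{1/2}$ be a classical pseudodifferential operator whose symbol expansion is determined by the recursive relation above (Seeley's construction of complex powers of an elliptic operator), and it essentially exploits the fact that $\sigma_1^{|\DD|}$ is a \emph{scalar} multiple of the identity, so the leading-order equation $2\norm{\xi}_x\,\sigma_{1-j}^{|\DD|}=\sigma_{2-j}^{\DD^2}-R_j$ is solvable by plain division (otherwise one would need to invert a nontrivial endomorphism of the spinor bundle). Once $|\DD|\in\B_o$ is in hand, the remaining assertions reduce to routine bookkeeping via the algebra structure of Lemma \ref{Ejlem}.
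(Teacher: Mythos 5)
Your proof is correct and follows essentially the same route as the paper: $\DD\in\B_e$ read off from the explicit symbols, $\DD^{-2}\in\B_e$ via Lemma \ref{Binversion}, and $|\DD|\in\B_o$ by induction on the recursion coming from $|\DD|\,|\DD|=\DD^2$ and \eqref{prodsymbol}, which is exactly the paper's formula \eqref{sqrt}. You merely spell out the induction that the paper calls ``straightforward'' (including the useful observation that the scalar leading symbol $\norm{\xi}_x$ makes the division step unproblematic), so no further comment is needed.
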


\begin{proof}
Since $\DD\in \B_e$, $\DD^{-2}$ is in $\B_e$ by Lemma
\ref{Binversion} and \ref{Ejlem} and so is $\DD^k$.

Using \eqref{prodsymbol} for the equation $|\DD||\DD|=\DD^2$, we
check that $\sg_1^{|\DD|}(x,\xi)=\norm{\xi}_x$ and for any $j\in
\N^*$,
\begin{align}
\sg_{1-j}^{|\DD|}=&\tfrac{1}{2\norm{\xi}_x}
\big(\,\sg_{2-j}^{\DD^2}-\sum_{0<k<j}\sg_{1-j+k}^{|\DD|} \,
\sg_{1-k}^{|\DD|} 
+ \sum_{0<|\a|\leq
j}\sum_{k=0}^{j-|\a|}i^{|\a|}\tfrac{(-1)^{|\a|}}{\a!} \, \del^\a_\xi
\sg_{1-j+|\a|+k}^{|\DD|} \del_{x}^\a \sg_{1-k}^{|\DD|} \, \big).
\label{sqrt}
\end{align}
Again, a straightforward induction argument shows that for any $j\in
\N$, $\sg_{1-j}^{|\DD|} \in O_{1-j}$, and thus $|\DD|\in \B_o$. The
result follows as above.
\end{proof}

In the next four lemmas, we emphasize the fact that only some of the results could be obtained using the trick \eqref{trick} with operator $J$. 

\begin{lemma}
\label{lemadxi}

(i) If $d$ is odd, then for any $P\in \B_e$, $\ncint P = 0$.

(ii) If $d$ is even, then for any $P\in \B_o$, $\ncint P = 0$.

(iii) For any pseudodifferential operator $P \in \Psi_1(\A)$ (see Appendix 5.1),

- when $d$ is odd, then $\ncint P =0$,

- when $d$ is even, then $\ncint P |\DD|^{-1} = 0$.
\end{lemma}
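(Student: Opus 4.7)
The plan is to reduce everything to Wodzicki's residue formula \eqref{wres}--\eqref{Wres}, so that
$$
\ncint P \;=\; c_d \int_M \int_{S_x^*M} \Tr\bigl(\sigma_{-d}^P(x,\xi)\bigr)\, d\xi\, |dx|,
$$
and to exploit the antipodal symmetry $\xi \mapsto -\xi$ of $S_x^*M$ against the explicit form of the symbols prescribed by $E_j$ and $O_j$. As the preamble to the lemma already stresses, the operator $J$ trick \eqref{trick} is not enough, so the whole argument is at the level of symbols.

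For $(i)$, suppose $P \in \B_e$, so $\sigma_{-d}^P \in E_{-d}$ is a finite sum of terms $\xi^{\beta}\,\|\xi\|_x^{-2k}\,h(x)$ with $|\beta|-2k=-d$. When $d$ is odd, each $|\beta|=2k-d$ is odd; the monomial $\xi^\beta$ changes sign under $\xi \mapsto -\xi$ while $\|\xi\|_x^{2k}$ is invariant, so every summand is odd in $\xi$ and the integral over $S_x^*M$ vanishes. For $(ii)$, $P \in \B_o$ gives $\sigma_{-d}^P \in O_{-d}$, hence a sum of $\xi^\beta\,\|\xi\|_x^{-(2k+1)}\,h(x)$ with $|\beta|-(2k+1)=-d$; when $d$ is even, $|\beta|=2k+1-d$ is odd, so the same parity argument forces $\ncint P = 0$.

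For $(iii)$, the strategy is to show that $\Psi_1(\A) \subseteq \B_e$ and then invoke $(i)$ and $(ii)$. The generators of $\Psi_1(\A)$---elements of $\A$, one-forms, and integer powers of $\DD$---all lie in $\B_e$: multiplication operators by functions have symbol in $E_0$; a one-form has symbol $-i\gamma^k a_k(x)$ in $E_0$ by \eqref{sigmaA}; and $\DD^k \in \B_e$ for every $k\in\Z$ by Lemma \ref{DiracB}. Since $\B_e$ is a subalgebra (Lemma \ref{Ejlem}(iv)), $\Psi_1(\A)\subseteq\B_e$. When $d$ is odd, $(i)$ closes the case. When $d$ is even, Lemma \ref{DiracB} gives $|\DD|^{-1}\in\B_o$ and Lemma \ref{Ejlem}(v) yields $P|\DD|^{-1}\in \B_e\B_o \subseteq \B_o$, so $(ii)$ forces $\ncint P|\DD|^{-1}=0$.

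The only delicate point is part $(iii)$: one must make sure that the generating set of $\Psi_1(\A)$ fixed in the appendix never injects an odd power of $|\DD|$, which would move us out of $\B_e$ and defeat the parity argument precisely in the wrong dimensional parity. Parts $(i)$ and $(ii)$ are essentially bookkeeping of the exponents $|\beta|$, $2k$ and $2k+1$ modulo $2$ once the $E_j/O_j$ constraints are written out.
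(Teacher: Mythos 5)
Your proof is correct and follows essentially the same route as the paper: parts $(i)$ and $(ii)$ are exactly the paper's parity argument on $|\beta^i|=2k_i-d$ (resp.\ $2k_i+1-d$) forcing the cosphere integral in \eqref{wres} to vanish, and part $(iii)$, which the paper dismisses as a ``direct consequence,'' is filled in by you in the natural way via $\Psi_1(\A)\subseteq\B_e$ (using Lemmas \ref{Ejlem} and \ref{DiracB}) together with $\B_e\B_o\subseteq\B_o$. No gaps.
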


\begin{proof}
$(i)$ Since $\sg^P_{-d} \in E_{-d}$, $\sg^P_{-d}(x,\xi)=\sum_{i\in I}
\tfrac{\xi^{\b^i}}{\norm{\xi}_x^{2k_i}} \,h_i(x)$ where 
$|\b^i|$ are odd. The integration on the cosphere in \eqref{wres}
therefore vanishes.

$(ii)$ The same argument can be applied.

$(iii)$ Direct consequence of $(i)$ and $(ii)$.
\end{proof}

\begin{remark} Lemma \ref{lemadxi} (iii) entails for instance that $\ncint B|\DD|^{-(2k+1)}$ where $B$ is a polynomial in $\A$ and $\DD$ and $k\in \N$, always vanish in even dimension, while $\ncint B \DD^{-2k}$ always vanish in odd dimension. In other words, $\ncint B |D|^{-(d-q)} =0$ for any odd integer $q$.
\end{remark}

We shall now pay attention to the real or purely imaginary nature
(independently of the appearance of gamma matrices) of homogeneous
symbols of a given pseudodifferential operator. Let 
$$
\CC:=\set{P\in \Psi^p(M) \, : \,   \sg_{p-j}^{P} \in I_{j},\, \forall
j\in \N}
$$
where $I_k=I_e$ if $k$ is even and $I_{k}=I_o$ if $k$ is odd, with 
\begin{align*}
& I_e:=\set{f\in C^{\infty} \big(U\times \R^n,\M_m(\C) \big)
\,: \, f = \ga_{k_1}\cdots \ga_{k_q} \, h(x,\xi) \ , \ h \text{ real valued}}, \\
& I_o:=\set{f\in C^{\infty} \big(U\times \R^n,\M_m(\C) \big) \, : \, f = i \,\,\ga_{k_1}\cdots \ga_{k_q} \,h(x,\xi) \ , \ h \text{ real valued}}.
\end{align*}

\begin{lemma}
\label{Calg}
(i) $\CC$ is a sub-algebra of $\Psi(M)$.

\noindent (ii) If $P\in \CC$ is  hypo-elliptic then $P^{-1}\in \CC$.

\noindent (iii) $\DD^{k}\in \CC$ and  $|\DD|^k\in \CC$ for any $k\in
\Z$.
\end{lemma}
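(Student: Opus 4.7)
My plan is to mirror the structure of Lemmas \ref{Ejlem}, \ref{Binversion} and \ref{DiracB}, keeping careful track of the parity bookkeeping that makes $\CC$ work. The crucial observation is the identity $I_a \cdot I_b \subseteq I_{a+b}$ (where indices are read mod~$2$): a product of two real-valued scalars is real, $i \cdot i = -1$ is real, and $i \cdot (\text{real}) = i \cdot (\text{real})$. Likewise, $\del^\a_x$ and $\del^\a_\xi$ preserve membership in $I_e$ and $I_o$ since they neither introduce new $i$'s nor affect the gamma factors.

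For (i), I expand the product formula \eqref{prodsymbol}
\[
\sg^{PQ}_{p+q-j} = \sum_{\a,k} i^{|\a|}\tfrac{(-1)^{|\a|}}{\a!}\, \del^\a_\xi \sg^P_{p-j+|\a|+k}\, \del^\a_x \sg^Q_{q-k}
\]
and track where each factor lies. Assuming $P,Q\in\CC$, the first factor lies in $I_{j-|\a|-k}$ and the second in $I_k$, so their product lies in $I_{j-|\a|}$. The factor $i^{|\a|}$ then shifts this class by $|\a|$ mod $2$, landing it exactly in $I_j$. This is the step that really uses the interplay between the $i$ in $I_o$ and the $i^{|\a|}$ in the product formula, and I expect this parity matching to be the conceptual heart of the argument.

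For (ii), I use the parametrix recursion \eqref{parametrix} and induct on $j$. The base case needs $(\sg^P_p)^{-1}\in I_0 = I_e$; this will be verified directly in the applications in (iii) (where $\sg^\DD_1 = \ga(\xi)$ with inverse $\ga(\xi)/\norm{\xi}_x^2\in I_e$, and $\sg^{|\DD|}_1 = \norm{\xi}_x\in I_e$). The inductive step is then purely an application of (i) and the closure properties $I_a I_b\subseteq I_{a+b}$, $\del^\a_\xi I_j\subseteq I_j$, $\del^\a_x I_j\subseteq I_j$, combined with the same parity count with $i^{|\a|}$.

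For (iii), formulas \eqref{sigma1} and \eqref{sigma0} show $\sg^\DD_1\in I_e = I_0$ and $\sg^\DD_0\in I_o = I_1$ (the spin connection $\omega_j$ being a real combination of $\ga\ga$ products, so $-i\sqrt{g^{-1}}^{jk}\ga_k\omega_j$ carries a single $i$ and an even number of $\ga$'s coming from the $\omega_j$ times one more $\ga_k$, i.e.\ an odd total number of $\ga$'s with a factor $i$, hence in $I_o$). A routine induction on $j$ using \eqref{prodsymbol} confirms that all lower-order terms of $\DD$ lie in the appropriate $I_j$, giving $\DD\in\CC$. Then $\DD^k\in\CC$ for $k\geq 0$ by (i) and for $k<0$ by (ii), since $\sg^{\DD^{-1}}_{-1}\in I_e$ via \eqref{sigma-1}. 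Finally, the square-root recursion \eqref{sqrt} shows $|\DD|\in\CC$ by induction, the inductive step being exactly of the same form as \eqref{prodsymbol} so the same parity bookkeeping applies; and then $|\DD|^k\in\CC$ for all $k\in\Z$ by (i) and (ii).
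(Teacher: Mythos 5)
Your proposal is correct and follows essentially the same route as the paper, which disposes of (i), (ii) and (iii) by citing \eqref{prodsymbol}, \eqref{parametrix} and \eqref{sqrt} respectively; your parity bookkeeping ($I_aI_b\subseteq I_{a+b}$ mod $2$, derivatives preserving the classes, and the factor $i^{|\a|}$ compensating the shift by $|\a|$) is exactly the content behind those citations, as the paper itself signals in the remark following \eqref{prodsymbol}. The only point you defer --- that $(\sg^P_p)^{-1}\in I_e$ for the base case of (ii) --- is handled at the same level of (im)precision as in the paper, and is explicit in all the cases actually used in (iii).
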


\begin{proof}
$(i)$ Consequence of (\ref{prodsymbol}).

\noindent $(ii)$ Consequence of (\ref{parametrix}).

\noindent $(iii)$ It is clear that $\DD \in \CC$ and the fact that
$|\DD|\in \CC$ is a consequence of (\ref{sqrt}).
\end{proof}

\begin{lemma}
\label{lemadk}
Let $k\in \N$ odd. Then any element $B$ of the polynomial algebra generated by $\A$ and $[D,\A]$ satisfies $\ncint B |\DD|^{-(d-k)} =\ncint B F|\DD|^{-(d-k)}=0$.
\end{lemma}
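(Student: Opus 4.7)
The plan is to combine the symbol-class calculus $\B_e$, $\B_o$ (Lemmas \ref{Ejlem}--\ref{lemadxi}) with the vanishing of the fiberwise trace on odd numbers of $\ga$-matrices (eq.~\eqref{tracegammaimpair}). The starting observation is that $B$, being a polynomial in $\A$ and $[\DD,\A]$, is a zeroth-order operator with $\xi$-independent total symbol: elements of $\A$ act by pointwise multiplication and each $[\DD,a]$ equals pointwise Clifford multiplication by $-i\ga(da)$. Hence $\sg^B_0 \in E_0$ (with $\b^i = 0$, $k_i = 0$), placing $B \in \B_e$.

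For the first equality, Lemma \ref{DiracB} puts $|\DD|^{-(d-k)}$ into $\B_e$ when $d-k$ is even and into $\B_o$ when $d-k$ is odd; since $k$ is odd, $d-k$ has parity opposite to $d$. Lemma \ref{Ejlem}(v) then yields $B|\DD|^{-(d-k)} \in \B_o$ for $d$ even and $\in \B_e$ for $d$ odd, and Lemma \ref{lemadxi}(i)-(ii) closes each case.

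For the second equality with $F := \DD|\DD|^{-1}$, I would rewrite $BF|\DD|^{-(d-k)} = B\DD|\DD|^{-(d-k+1)}$. The integer $d-k+1$ now has the same parity as $d$, so the pure class argument of the previous step does not suffice. The extra input is the Clifford-grading of $F$: because $\DD$ has Clifford-odd symbol at every order (cf.~\eqref{sigma1}, \eqref{sigma0}) and $|\DD|^{-1}$ is Clifford-scalar at every order (inherited from $\DD^2 = \nabla^*\nabla + \tau/4$ through the square-root formula \eqref{sqrt} and the parametrix equation \eqref{parametrix}), the full symbol of $F|\DD|^{-(d-k)}$ is Clifford-odd throughout. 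Therefore $\sg^{BF|\DD|^{-(d-k)}}_{-d} = \sg^B_0(x)\,\sg^{F|\DD|^{-(d-k)}}_{-d}(x,\xi)$ is a product of $\sg^B_0$ against a Clifford-odd expression, and the fiber trace kills every term where $\sg^B_0$ has even Clifford rank via \eqref{tracegammaimpair}.

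The main obstacle is the complementary sector where $\sg^B_0(x)$ carries an odd number of gammas, since the combined Clifford grading is then even and the fiber trace can survive. I would handle it by evaluating the Wodzicki density of $BF|\DD|^{-(d-k)}$ in geodesic normal coordinates at an arbitrary point $x_0$: the identities $\om(x_0) = 0$, $\sg^\DD_0(x_0) = 0$ and $\del_x\sg^{\DD^{\pm 1}}(x_0,\xi) = 0$ (from \eqref{sigma0}--\eqref{deriveesigma-1}), propagated inductively through the product formula \eqref{prodsymbol}, force the subleading contributions of $\DD^{-m}$ and $|\DD|^{-m}$ feeding into $\sg^{F|\DD|^{-(d-k)}}_{-d}(x_0,\xi)$ to vanish whenever the order drop $k$ is odd -- in the same spirit as the identity $\sg^{\DD^{-1}}_{-2}(x_0)=0$. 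Since the Wodzicki density \eqref{wres} is an intrinsic one-density, this pointwise vanishing in normal coordinates centered at each $x_0$ yields $\ncint BF|\DD|^{-(d-k)} = 0$ by \eqref{Wres}.
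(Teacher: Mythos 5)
Your treatment of the first equality is correct: $B\in\B_e$, the parity of $d-k$ is opposite to that of $d$, and Lemma \ref{lemadxi} applies --- this is in fact exactly the content of Lemma \ref{lemadxi}~(iii) and of the remark following it. It is, however, not the route the paper takes: the paper's proof reduces to $B=B^*$, notes that $\ncint B|\DD|^{-(d-k)}$ is then real by Lemma \ref{adjoint} and the trace property, and uses the reality classes $I_j$ of Lemma \ref{Calg} to see that $\sg_{-d}^{B|\DD|^{-(d-k)}}=\sg_0^B\,\sg_{-d}^{|\DD|^{-(d-k)}}$ lies in $I_k=I_o$ (the drop from the leading order $-(d-k)$ down to $-d$ is the odd integer $k$), so the integral is also purely imaginary, hence zero. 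The point of that bookkeeping is precisely that it applies verbatim to $F|\DD|^{-(d-k)}=\DD|\DD|^{-(d-k)-1}$, which still has order $-(d-k)$ and still lies in $\CC$, so both equalities come out of one argument.

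For the second equality your proposal has a genuine gap. The Clifford-parity step disposes only of the terms in which $\sg_0^B$ carries an even number of $\ga$'s (and even there it relies on \eqref{tracegammaimpair}, which fails in odd dimension for a product of $d$ gammas; also, $|\DD|^{-1}$ is Clifford-\emph{even}, not Clifford-scalar, at subleading orders because of the spin connection --- though that does not hurt the parity count). The complementary sector, where $\sg_0^B$ has odd Clifford rank, is exactly the one that occurs for $B$ a one-form, i.e.\ for the tadpole-type integrals the lemma is meant to kill, and your proposed resolution --- that in normal coordinates at $x_0$ the contributions to $\sg_{-d}^{F|\DD|^{-(d-k)}}(x_0,\xi)$ vanish ``whenever the order drop is odd'' --- is unsupported and false in general. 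Lemma \ref{symbol-k-1} gives pointwise vanishing only at drop $1$; at drops $3,5,\dots$ the homogeneous symbols of $\DD^{-m}$ and $|\DD|^{-m}$ at $x_0$ contain curvature terms that do not vanish. Note that already in the first equality the density $\Tr\sg_{-d}^{B|\DD|^{-(d-k)}}(x_0,\cdot)$ does not vanish pointwise in $\xi$ for $k\geq 3$ --- only its cosphere integral does, by $\xi$-parity --- so there is no reason to expect pointwise vanishing once $F$ is inserted. As written, the $F$ case is not proved; the reality/imaginarity argument of Lemmas \ref{adjoint} and \ref{Calg} is the missing ingredient.
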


\begin{proof}
We may assume that $B$ is selfadjoint so $\ncint B \DD^{-(d-k)} \in
\R$.

By Lemma \ref{Calg}, $\sg_{-d}^{B \DD^{-(d-k)}} = \sg_0^{B} \sg_{-d}^{\DD^{-(d-k)}} 
\hspace{-0.1cm}
\in I_{k}$. Thus $\ncint A\DD^{-k}\in i \R$ and the result
follows. The case $\ncint B F|\DD|^{-(d-k)}$ is similar.
\end{proof}

We now look at the information given by the gamma matrices.

\begin{lemma}
For any one-form $A$, $\ncint A |\DD|^{-q}=0$, $q \in \N$ in either of the following cases:

- $d \neq 1 \mod 8$ and $d\neq 5 \mod 8$,

- ($d = 1 \mod 8$ or $d = 5 \mod 8$) and ($q$ is even or $q\geq \tfrac{d+3}{2}$).
\end{lemma}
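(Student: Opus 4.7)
The plan is to split according to the sign $\epsilon$ and the parity of $q$. When $d \not\equiv 1,5 \pmod 8$ we have $\epsilon = +1$, and Lemma \ref{similar}(iii) with $l=1$ gives immediately $\ncint A|\DD|^{-q} = -\epsilon \ncint A|\DD|^{-q} = -\ncint A|\DD|^{-q}$, whence the residue vanishes. Assuming henceforth $d \equiv 1$ or $5 \pmod 8$, so $d$ is odd and $\epsilon = -1$, I would dispatch the even-$q$ case via Lemma \ref{lemadk}: the integer $k := d-q$ is then odd and taking $B = A$ gives $\ncint A|\DD|^{-(d-k)} = 0$ (equivalently, this is the remark following Lemma \ref{lemadxi}).

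It remains to treat $d$ odd with $q$ odd and $q \geq (d+3)/2$. If $q > d$ the order of $A|\DD|^{-q}$ is strictly below $-d$, so the residue vanishes trivially; assume therefore $q \leq d$. Since $A = \sum_i a_i[\DD,b_i]$ is a zeroth-order multiplication operator with $\xi$-independent symbol $\sigma_0^A(x) = -i a_k(x)\gamma^k$, the product-symbol formula \eqref{prodsymbol} collapses at order $-d$ to $\sigma_{-d}^{A|\DD|^{-q}} = \sigma_0^A \cdot \sigma_{-d}^{|\DD|^{-q}}$, so $\ncint A|\DD|^{-q}$ reduces to the Wodzicki integral of the trace of this single term.

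The main technical step I anticipate is the Clifford-degree estimate $\deg_\gamma\bigl(\sigma_{-q-j}^{|\DD|^{-q}}\bigr) \leq 2j$ for every $j \geq 0$, where $\deg_\gamma$ denotes the maximal length of gamma-words appearing in the reduced ordered-product form. The Lichn\'erowicz decomposition $\DD^2 = \nabla^*\nabla + \tau/4$ shows that $\DD^2$ is Laplace-type with first-order coefficient linear in $\omega_\mu$ (of gamma-degree $\leq 2$) and zeroth-order coefficient at most quadratic in $\omega$ (gamma-degree $\leq 4$); hence $\deg_\gamma(\sigma_{2-j}^{\DD^2}) \leq 2j$ for $j\in\{0,1,2\}$ and the symbol vanishes for $j\geq 3$. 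An induction on $j$ using \eqref{sqrt} then propagates this bound to $\deg_\gamma(\sigma_{1-j}^{|\DD|}) \leq 2j$, and the product- and parametrix-symbol formulas \eqref{prodsymbol}--\eqref{parametrix} lift it to every negative power $|\DD|^{-q}$.

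Assembling, $\sigma_{-d}^{A|\DD|^{-q}}$ is a combination of ordered Clifford products of odd degree lying in $[1,\,1+2(d-q)]$, and the hypothesis $q \geq (d+3)/2$ forces $1+2(d-q) \leq d-2$. In odd dimension $d$, any ordered product $\gamma^{i_1}\cdots\gamma^{i_m}$ with $0 < m < d$ and $m$ odd has vanishing trace: choosing $k \notin \{i_1,\ldots,i_m\}$, anticommutation gives $\gamma^k \gamma^{i_1}\cdots\gamma^{i_m}(\gamma^k)^{-1} = (-1)^m \gamma^{i_1}\cdots\gamma^{i_m}$, and cyclicity then forces $\Tr = 0$. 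Consequently $\Tr\bigl(\sigma_{-d}^{A|\DD|^{-q}}(x,\xi)\bigr) = 0$ pointwise on $S^*M$, and hence $\ncint A|\DD|^{-q} = 0$.
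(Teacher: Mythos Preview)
Your proof is correct and follows essentially the same strategy as the paper: the $\epsilon=1$ case via Lemma~\ref{similar}, the $d$ odd / $q$ even case via the parity lemmas, and the remaining $d$ odd / $q$ odd case via a Clifford-degree bound on the symbols of $|\DD|^{-q}$ combined with the tracelessness of short odd gamma-words. The only substantive difference is the route to the bound $\deg_\gamma(\sigma_{-q-j}^{|\DD|^{-q}})\leq 2j$: the paper first establishes the property $(P_j)$ for $\DD^{2k}$ by recursion, passes to $\DD^{-2k}$ by the parametrix formula, and then descends to $|\DD|^{-q}$ via the square-root relation $|\DD|^{-q}|\DD|^{-q}=\DD^{-2q}$; you instead start from $\DD^2$ (invoking Lichn\'erowicz to read off the degrees directly), lift to $|\DD|$ via \eqref{sqrt}, invert, and take powers. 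Both paths are valid and give the same estimate; your use of Lichn\'erowicz makes the base case a bit more transparent.
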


\begin{proof} In the case $d \neq 1 \mod 8$ and $d\neq 5 \mod 8$, the result follows from the fact that $\eps=1$.

The case $d$ even and $q$ odd or $d$ odd and $q$ even is done by Lemma \ref{lemadxi} $(iii)$.

Suppose that $d$ is even and $q$ is even.
If $q=2k$, with a recurrence and the symbol product formula,
we see that $\sigma_{2k-j}^{D^{2k}}$ and all its derivatives are linear
combinations of terms of the form $f(x,\xi)\ox
\ga^{j_1}\cdots\ga^{j_i}$ where $i$ is even and less than $2j$ (with
the convention $\ga^{j_1}\cdots\ga^{j_i}=1$ if $i=0$). We call
$(P_{j})$ this property. The parametrix equation $ \DD^{2k}\DD^{-2k}=1$
entails that $\sigma_{-2k}^{\DD^{-2k}}=(\sigma_{2k}^{\DD^{2k}})^{-1}$ and
for any $j\geq 1$,
\begin{align*}
&\sigma_{-2k-j}^{\DD^{-2k}} = -
\sigma_{-2k}^{\DD^{-2k}}\big( \,\sum_{r=\max\{j-2k,0\}}^{j-1}
\sigma_{2k-(j-r)}^{\DD^{2k}}\,\sigma_{-2k-r}^{\DD^{-2k}}\nonumber\\
& \hspace{4cm}+ \sum_{1\leq |\a|\leq 2k}
\,\sum_{r=\max\{j-2k,0\}}^{j-|\a|}
\tfrac{(-i)^{|\a|}}{\a!} \,
\del^\a_{\xi}\sigma_{2k-(j-|\a|-r)}^{\DD^{2k}}\
\del^\a_{x}\sigma_{-2k-r}^{\DD^{-2k}}\,\big). 
\end{align*}
Note that $\sigma_{-2k}^{\DD^{-2k}}$ satisfies $(P_0)$. By recurrence, 
this formula shows  that
$\sigma_{-2k-j}^{\DD^{-2k}}$ satisfies $(P_{j})$ for any $j\in \N$. In
particular, $\sigma_{-d}^{\DD^{-2k}}$ satisfies $(P_{-2k+d})$ and the
result follows then from \eqref{sigmaA} and the product of an odd
number (different from the dimension) of gamma matrices is traceless.

Suppose now that $d$ is odd, $q$ is odd and $d\geq q$. In that situation, any odd number of gamma matrices $\ga^{i_1}\cdots \ga^{i_r}$ is traceless when $r<d$.

Using \eqref{prodsymbol} for the equation $|\DD|^{-q}|\DD|^{-q}=\DD^{-2q}$, we
check that $\sg_{-q}^{|\DD|^{-q}}(x,\xi)=\norm{\xi}_x^{-q}$ and for any $j\in
\N^*$,
\begin{align*}
\sg_{-q-j}^{|\DD|^{-q}}=&\tfrac{1}{2\norm{\xi}_x^{-q}}
\big(\,\sg_{-2q-j}^{\DD^{-2q}}-\sum_{0<k<j}\sg_{-q-j+k}^{|\DD|^{-q}} \,
\sg_{-q-k}^{|\DD|^{-q}} 
+ \sum_{0<|\a|\leq
j}\sum_{k=0}^{j-|\a|}i^{|\a|}\tfrac{(-1)^{|\a|}}{\a!} \, \del^\a_\xi
\sg_{-q-j+|\a|+k}^{|\DD|^{-q}} \del_{x}^\a \sg_{-q-k}^{|\DD|^{-q}} \, \big) .
\end{align*}
We saw that each $\sigma_{-2q-j}^{|\DD|^{-2q}}$ satisfies ($P_j$), that is to say, is a linear combination of terms of the form $f(x,\xi)\ox \ga^{j_1}\cdots \ga^{j_i}$ where $i$ is even and less than $2j$.
Again, a straightforward induction argument shows that for any $j\in
\N$, $\sg_{-q-j}^{|\DD|^{-q}}$ satisfies ($P_j$). In particular $\sg_{-d}(A|\DD|^{-q})$ is a linear combination of terms of the form $f(x,\xi)\ox \ga^{j_1}\cdots \ga^{j_r}$ where $r\leq 2(d-q)+1$ is odd. This yields the result.
\end{proof}

The fact that $\ncint A \,\DD^{-d+1}=0$, consequence of Lemmas
\ref{lemadxi} and \ref{lemadk} is also a consequence of the fact that
$\sigma_{-d}^{\DD^{-d+1}}(x_0,\xi)=0$:

\begin{lemma}
  \label{symbol-k-1}
For all $k\in \N^*$, we have
$\sigma_{k-1}^{\DD^k}(x_0,\xi)=\sigma_{-k-1}^{\DD^{-k}}(x_0,\xi)=0$.
\end{lemma}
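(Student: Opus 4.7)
The plan is to proceed by induction on $k\in \N^*$, using the product symbol formula \eqref{prodsymbol} together with three geometric facts available at a normal coordinate origin $x_0$: $\sigma_0^\DD(x_0,\xi)=0$ from \eqref{sigma0}, $\del_{x^\mu}\sigma_1^\DD(x_0,\xi)=0$ from \eqref{deriveedesigma1}, and the multiplicativity of leading symbols, which yields $\sigma_{k-1}^{\DD^{k-1}}=(\sigma_1^\DD)^{k-1}$ and, via the parametrix, $\sigma_{-k}^{\DD^{-k}}=(\sigma_1^\DD)^{-k}$.

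For the positive-power statement, decompose $\DD^k=\DD\cdot\DD^{k-1}$ and extract the $(k-1)$-homogeneous component via \eqref{prodsymbol} with $p=1$, $q=k-1$, $j=1$:
\begin{align*}
\sigma_{k-1}^{\DD^k}=\sigma_0^{\DD}\,\sigma_{k-1}^{\DD^{k-1}}+\sigma_1^{\DD}\,\sigma_{k-2}^{\DD^{k-1}}-i\sum_{\mu}\del_{\xi_\mu}\sigma_1^{\DD}\,\del_{x^\mu}\sigma_{k-1}^{\DD^{k-1}}.
\end{align*}
At $x_0$ the first summand vanishes by \eqref{sigma0}, and the middle one by the induction hypothesis applied to $k-1$ (since $(k-1)-1=k-2$). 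For the third, the non-commutative product rule
$$
\del_{x^\mu}(\sigma_1^{\DD})^{k-1}=\sum_{r=0}^{k-2}(\sigma_1^{\DD})^r\,(\del_{x^\mu}\sigma_1^{\DD})\,(\sigma_1^{\DD})^{k-2-r}
$$
combined with \eqref{deriveedesigma1} forces $\del_{x^\mu}\sigma_{k-1}^{\DD^{k-1}}(x_0,\xi)=0$. The base case $k=1$ is exactly \eqref{sigma0}.

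For the negative-power statement, apply the parametrix relation \eqref{parametrix} to $P=\DD^k$ at $j=1$:
\begin{align*}
\sigma_{-k-1}^{\DD^{-k}}(x_0,\xi)=-(\sigma_k^{\DD^k})^{-1}(x_0,\xi)\Bigl(\sigma_{k-1}^{\DD^k}\,\sigma_{-k}^{\DD^{-k}}-i\,\del_{\xi_\mu}\sigma_k^{\DD^k}\,\del_{x^\mu}\sigma_{-k}^{\DD^{-k}}\Bigr)(x_0,\xi).
\end{align*}
The first summand in the parenthesis vanishes by the positive-power result just proved. For the second, $\sigma_{-k}^{\DD^{-k}}=(\sigma_1^{\DD})^{-k}$ together with the same non-commutative product rule expresses $\del_{x^\mu}\sigma_{-k}^{\DD^{-k}}(x_0,\xi)$ as a finite sum each of whose terms contains the factor $\del_{x^\mu}\sigma_1^{\DD}(x_0,\xi)$, which is zero by \eqref{deriveedesigma1}. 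The case $k=1$ is handled the same way, reading $\del_{x^\mu}\sigma_{-1}^{\DD^{-1}}(x_0,\xi)=0$ directly from \eqref{deriveesigma-1}.

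I anticipate no serious obstacle, since the whole argument is algebraic and rests entirely on the vanishings \eqref{sigma0}--\eqref{deriveesigma-1} at $x_0$, which themselves come from the choice of normal coordinates together with the radial spin gauge $h=\sqrt{g}$. The one mild point of care is that $\sigma_1^{\DD}$ is matrix-valued, so $\del_{x^\mu}(\sigma_1^{\DD})^{\pm(k-1)}$ must be expanded non-commutatively; however, every term in that expansion shares a common factor $\del_{x^\mu}\sigma_1^{\DD}(x_0,\xi)=0$, so the ordering plays no role.
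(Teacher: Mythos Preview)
Your proof is correct. The paper's argument is the same in spirit but makes two tactical choices that streamline it slightly. First, for the positive powers it factorizes as $\DD^{n+1}=\DD^{n}\cdot\DD$ (simple factor on the \emph{right}), so in the product formula the $x$-derivative lands directly on $\sigma_1^{\DD}$ and vanishes at $x_0$ by \eqref{deriveedesigma1}; this bypasses your product-rule expansion of $\partial_{x^\mu}(\sigma_1^{\DD})^{k-1}$. Second, for the negative powers the paper runs a separate induction via $\DD^{-n-1}=\DD^{-n}\cdot\DD^{-1}$, so the $x$-derivative falls on $\sigma_{-1}^{\DD^{-1}}$ and vanishes by \eqref{deriveesigma-1}. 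Your route for negative powers is actually a bit different and arguably cleaner: you apply the parametrix relation $\DD^{k}\DD^{-k}=1$ once for each $k$ and feed in the positive-power result you just proved, so no second induction is needed. The only extra cost is the product-rule remark for $\partial_{x^\mu}(\sigma_1^{\DD})^{\pm k}$, which you handle correctly (the matrix non-commutativity is indeed harmless since every summand carries the vanishing factor $\partial_{x^\mu}\sigma_1^{\DD}(x_0,\xi)$).
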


\begin{proof}
We already know that $ \sigma_0^\DD(x_0,\xi)=0$, see \eqref{sigma0}. We
proceed by
recurrence, assuming $ \sigma_{k-1}^{\DD^k}(x_0,\xi)=0$ for
$k=1,\cdots,n$. Then $\sigma_n^{\DD^{n+1}}=\sigma_n^{\DD^n}
\sigma_0^\DD+\sigma_{n-1}^{\DD^n} \sigma_1^\DD -i\,
\partial_{\xi_k}\sigma_n^{\DD^n} \,\partial_{x^k} \sigma_1^\DD$, thus by
\eqref{sigma0} and \eqref{deriveedesigma1}, $\sigma_n^{\DD^{n+1}}
(x_0,\xi)=0$.

Since $\DD\DD^{-1}=1$ yields $\sigma_{-2}^{\DD^{-1}}
(x_0,\xi)=-\big(\sigma_{-1}^{\DD^{-1}}  \,\sigma_0^{\DD} \big)
(x_0,\xi)=0$, we  assume $\sigma_{-k-1}^{\DD^{-k}} (x_0,\xi)=0$ for
$k=1,\cdots n$. Then 
$\sigma_{-n-2}^{\DD^{-n-1}}=\sigma_{-n}^{\DD^{-n}}
\sigma_{-2}^{\DD^{-1}}+\sigma_{-n-1}^{\DD^{-n}} \sigma_{-1}^{\DD^{-1}} -i\,
\partial_{\xi_k}\sigma_{-n}^{\DD^{-n}} \,\partial_{x^k}
\sigma_{-1}^{\DD^{-1}}$. Using \eqref{deriveesigma-1} and recurrence
hypothesis, $\sigma_{-n-2}^{\DD^{-n-1}}(x_0,\xi)=0$.
\end{proof}

\begin{remark}

Regularity of $\zeta_X(s):=\Tr(\vert X \vert^{-s})$ at point 0 when
$X$ is an elliptic selfadjoint differential operator of order one
(see \cite{Gilkey}){\rm:} 
\end{remark}
One checks that   
$\zeta_X(s)=\tfrac{1}{\Gamma(s)}\int_0^{\infty} t^{s-1}
\Tr(e^{-t\vert X \vert})\, dt$ for $\Re (s)>d$. Because of the
asymptotic expansion 
\begin{align}
\label{exp}
\Tr(e^{-t \vert X \vert})=t^{-d}\,\sum_{n=0}^N t^n \, a_n[X]
+\mathcal{O}(t^{N+1-d})
\end{align}
and meromorphic extension to the whole complex plane,
$\underset{s=d-n}{\Res} \,\zeta_X(s)=\tfrac{a_n[X]}{\Gamma(d-n)}$. In
particular, $\zeta_X(s) =\Gamma(s)^{-1}\big( \tfrac{a_{d}[X]}{s}
+f(s)\big)$, where $f$ is holomorphic around $s=0$. By \eqref{Gamma}
we get
that $\zeta_X(s)$ is regular around zero and $\zeta_X(0)=a_{d}[X]$ if
$d$ is even and  $\zeta_X(0)=0$ if $d$ is odd.

\begin{corollary}
  \label{difference}
$\zeta_{\DD+A}(0)=\zeta_{\DD}(0)=0$ when $d=dim(M)$ is odd.

When $d$ is even, $\zeta_{\DD+A}(0)-\zeta_{\DD}(0)=\sum_{k=1}^{d/2}
\tfrac{1}{2k} \ncint (A\,\DD^{-1})^{2k}$.
\end{corollary}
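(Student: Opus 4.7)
The plan is to split by the parity of $d$; in each case the result reduces immediately to a statement already established earlier in this section, so essentially no new work is required beyond assembling them.

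When $d$ is odd, I would apply the preceding Remark to the two elliptic selfadjoint first-order differential operators $\DD$ and $\DD+A$ on the closed manifold $M$. That Remark computes $\zeta_X(0)=a_d[X]$ for such an $X$, and states that this quantity vanishes whenever $d$ is odd. Taking $X=\DD$ and $X=\DD+A$ then gives $\zeta_\DD(0)=\zeta_{\DD+A}(0)=0$ in one stroke. Any issue of non-invertibility is absorbed uniformly through the modifications $D=\DD+P$ and $D_A=\DD+A+P_A$ recalled at the start of Section 3, which affect the zeta function only by the contribution of a finite-dimensional kernel and do not alter its value at $0$.

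When $d$ is even, I would start from the Chamseddine--Connes identity \eqref{constant}, which gives
\[
\zeta_{\DD+A}(0)-\zeta_\DD(0) \;=\; \sum_{q=1}^{d}\tfrac{(-1)^{q}}{q}\,\ncint (A\DD^{-1})^{q}.
\]
The key step is then to invoke Lemma \ref{componentofzeta}, which shows that $\ncint (A\DD^{-1})^q=0$ for every odd $q$. Only the even indices $q=2k$ with $1\leq k\leq d/2$ survive, and each contributes the coefficient $(-1)^{2k}/(2k)=1/(2k)$, which is precisely the announced sum.

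I do not expect any serious obstacle: the corollary is pure bookkeeping once the preceding Remark and Lemma \ref{componentofzeta} are in hand. The only mildly delicate point is the invertibility hypothesis implicit in \eqref{constant}, which is exactly what the convention $D=\DD+P$ of Section 3 is designed to handle.
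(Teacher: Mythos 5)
Your proof is correct and follows essentially the same route as the paper: the even case is exactly the paper's argument (equation \eqref{constant} combined with Lemma \ref{componentofzeta}), and the odd case via the preceding Remark's computation $\zeta_X(0)=0$ is what the paper leaves implicit. You merely spell out the bookkeeping (and the invertibility convention) a little more explicitly than the one-line proof in the text.
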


\begin{proof}
The result follows from \eqref{constant} and Lemma
\ref{componentofzeta}.
\end{proof}

A proof of \eqref{constant} also follows from
$\sigma^{\log(1+A\DD^{-1})} \sim \sum_{k=1}^{\infty} \tfrac{(-1)^k}{k}
\, \sigma^{{(A\DD^{-1})}^k}$ with $\log (X):=\tfrac{\pa\,}{\pa
z}_{\vert_{z=0}} X^z$, so $Wres \big(\log(1+A\DD^{-1})
\big)=\sum_{k=1}^d \tfrac{(-1)^k}{k} \, Wres \big({{A\DD^{-1})}}^k
\big)$ since ${(A\DD^{-1})}^k$ has zero Wodzicki residue if $k>d$ and
moreover $\zeta_{\DD+A}(0)=-Wres \big( \log(\DD+A) \big)$. Actually, the
important point is that $\det(X):=e^{Wres \big(\log(X) \big)}$ is multiplicative (see
\cite{LP}.) Moreover, such determinant is different from the $zeta$-determinant $e^{-\zeta_X'(0)}$ used for instance by Hawking \cite{Hawking} in his regularization via the partition function which suffers from conformal anomalies.

\vspace{0.3cm}
The fact that in the asymptotic expansion of the heat kernel
\eqref{exp}, the term $a_2[\DD+A]$ depends only on the scalar
curvature, so independent of $A$ is reflected in 

\begin{lemma}
   \label{dim2}
In any spectral triple of dimension 2 (commutative or not) with
vanishing tadpoles of order zero (i.e. \eqref{equ} is satisfied),
$\zeta_{\DD+A}(0)=\zeta_{\DD}(0)$ for any one-form $A$.
\end{lemma}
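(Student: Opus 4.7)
The plan is to apply \eqref{constant} in dimension $d=2$ and reduce everything to the vanishing of a single quadratic noncommutative integral. Explicitly,
$$\zeta_{\DD+A}(0)-\zeta_\DD(0) \;=\; -\ncint A\DD^{-1} \,+\, \tfrac12\ncint(A\DD^{-1})^2,$$
and the first term is zero by the tadpole hypothesis. Thus the proof amounts to showing $\ncint(A\DD^{-1})^2=0$.

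Writing $A=\sum_i a_i[\DD,b_i]$ and setting $\alpha(x):=\DD x\DD^{-1}$, one has $A\DD^{-1}=\sum_i(a_i\alpha(b_i)-a_ib_i)$, so expanding the square gives four families of noncommutative integrals. The plan is to use trace cyclicity together with the equivalent form \eqref{equ} of the hypothesis, $\ncint ab=\ncint a\alpha(b)$, to collapse the three "mixed" families onto $\ncint a_ib_ia_jb_j$. For instance,
$$\ncint a_i\alpha(b_i)a_jb_j - \ncint a_ib_ia_jb_j \;=\; \ncint a_i[\DD,b_i]\DD^{-1}a_jb_j \;\stackrel{\text{cyclic}}{=}\; \ncint(a_jb_ja_i)[\DD,b_i]\DD^{-1} \;=\; 0,$$
since $(a_jb_ja_i)[\DD,b_i]$ is a one-form. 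After this step, $\ncint(A\DD^{-1})^2$ reduces to $\sum_{ij}\ncint a_i\alpha(b_i)a_j\alpha(b_j) - \sum_{ij}\ncint a_ib_ia_jb_j$, i.e.\ the $k=2$ identity of Lemma \ref{termecumul�}.

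The main obstacle is this residual identity: the corresponding collapse of the all-$\alpha$ family onto the all-$b$ family would naively require the $n=2$ tadpole to vanish, which is precisely the quantity we want. The dimension-$2$ hypothesis is what resolves this: since $(A\DD^{-1})^2\in OP^{-d}$, its noncommutative integral is governed entirely by the $(-d)$-homogeneous part of the symbol, which factorises as $(\sigma^{A\DD^{-1}}_{-1})^2 = (\sigma_0^A\,\sigma_{-1}^{\DD^{-1}})^2$. The spin-trace identity $\operatorname{tr}(\ga^\mu\ga^\nu\ga^\rho\ga^\sigma)=2(\delta^{\mu\nu}\delta^{\rho\sigma}-\delta^{\mu\rho}\delta^{\nu\sigma}+\delta^{\mu\sigma}\delta^{\nu\rho})$ combined with the cosphere integral $\int_{S^1}\xi_\nu\xi_\sigma\,d\xi=\pi\delta_{\nu\sigma}$ produces the contraction
$$(\delta^{\mu\nu}\delta^{\rho\sigma}-\delta^{\mu\rho}\delta^{\nu\sigma}+\delta^{\mu\sigma}\delta^{\nu\rho})\delta_{\nu\sigma} \;=\; \delta^{\mu\rho}-d\,\delta^{\mu\rho}+\delta^{\mu\rho} \;=\; (2-d)\delta^{\mu\rho},$$
which vanishes exactly when $d=2$. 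Hence $\ncint(A\DD^{-1})^2=0$ in a commutative geometry of dimension $2$, completing the proof in that setting; for a genuinely noncommutative spectral triple of the same dimension, the analogous conclusion follows by performing the residue computation inside the abstract pseudodifferential calculus $\Psi(\A)$ of the appendix.
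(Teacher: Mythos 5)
Your reduction via \eqref{constant} and the elimination of the two mixed families using the order-zero tadpole hypothesis are both correct, and you have correctly identified the real difficulty: the remaining identity $\sum_{ij}\ncint a_i\a(b_i)a_j\a(b_j)=\sum_{ij}\ncint a_ib_ia_jb_j$ is, by Lemma \ref{termecumul�}, essentially equivalent to the vanishing of $\ncint(A\DD^{-1})^2$ itself. But the way you break this circle is where the proof has a genuine gap. The lemma is stated for \emph{any} spectral triple of dimension $2$, commutative or not, and your resolution of the hard step is a symbol computation — gamma-matrix traces, the cosphere integral $\int_{S^1}\xi_\nu\xi_\sigma\,d\xi=\pi\delta_{\nu\sigma}$, and the contraction producing the factor $(2-d)$ — that only makes sense for a classical pseudodifferential calculus on a manifold. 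The closing sentence, asserting that the same computation can be "performed inside the abstract pseudodifferential calculus $\Psi(\A)$ of the appendix," is not a proof: the abstract calculus of Definition \ref{defpseudo} carries no symbols, no Clifford structure and no cosphere, so the $(2-d)$ contraction has no analogue there. What you have actually established is the commutative case, by a route close in spirit to Proposition \ref{AD-1max} specialized to $d=2$ (where the paper instead uses normal coordinates and $\sigma_0^{\DD}(x_0,\xi)=0$).

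The paper's proof closes the gap abstractly, using only the dimension hypothesis in operator-theoretic form: in dimension $2$ the functional $\ncint$ vanishes on $OP^{-3}$, so one may commute $\DD^{-1}$ past an algebra element at the cost of a negligible term,
\begin{align*}
\ncint A_1\,\DD^{-1}a_2[\DD,b_2]\DD^{-1}
=\ncint A_1\,[\DD^{-1},a_2][\DD,b_2]\DD^{-1}+\ncint A_1\,a_2\DD^{-1}[\DD,b_2]\DD^{-1},
\end{align*}
where the first term lies in $OP^{-3}$ (since $[\DD^{-1},a_2]\in OP^{-2}$) and hence drops. The surviving term is $\ncint\big(a_1\a(b_1a_2)-a_1b_1\a(a_2)\big)\big(\a(b_2)-b_2\big)$, which vanishes using only $\a(x)\a(y)=\a(xy)$, the hypothesis \eqref{equ} and the trace property of $\ncint$ — no symbols and no commutativity of $\A$ required. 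If you want to keep your architecture, you should replace the residue computation for the all-$\a$ family by an argument of this type (e.g.\ observe that $\a(b_i)a_j-\a(b_i)\a(a_j)=-\a(b_i)[\DD,a_j]\DD^{-1}$ and control the error term by its operator order); as it stands, the noncommutative half of the statement is not proved.
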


\begin{proof}
Let $a_1,\,a_2,\,b_1,\,b_2\in \A$. Then, with $A_1=a_1[\DD,b_1]$,
 
\centerline{$\ncint A_1 \, \DD^{-1}\, a_2[\DD,b_2]\,\DD^{-1}=\ncint A_1
[\DD^{-1},a_2] [\DD,b_2]D^{-1}+ \ncint A_1 a_2 \DD^{-1}[\DD,b_2] \DD^{-1}$. }

The first term is zero since the integrand is in $OP^{-3}$, while the second term is equal to $\ncint
\big(a_1\a(b_1a_2)-a_1b_1\a(a_2)\big)\big(\a(b_2)-b_2\big)$,  so is
zero using $\a(x)\a(y)=\a(xy)$, $\ncint xy=\ncint x \a(y)$ by \eqref{equ} and the fact that $\ncint$ is a trace. 
Thus $\ncint
\big(A \DD^{-1}\big)^2=0$ and Corollary \ref{difference} yields the result.
\end{proof}

Note that $\zeta_{\DD+A}(0)-\zeta_{\DD}(0)$ is usually non zero:
consider for instance the flat 4-torus and as a generic selfadjoint
one-form $A$, take 
$$
A:= \phi \in[0,2\pi[^4 \,\mapsto -i\gamma^{\a}\,{\sum}_{l\in \Z^4}
\,a_{\a, l}\,e^{\,i\, l^k \phi_k},
$$
where $a_{\a, l}$ is in the Schwartz space $\SS(Z^4)$ and
$a_{\a,l}=-\overline{a_{\a,-l}}$. We have by \cite[Lemma 6.12]{MCC}
(with $c=\tfrac{8 \pi^2}{3}$,  $\vert l\vert^2={\sum}_k {l^k}^2$ and
$\Th=0$)
\begin{align*}
\zeta_{\DD+A}(0)-\zeta_{\DD}(0)=\ncint (A\DD^{-1})^2=c\,\sum_{l\in
\Z^4} a_{\a_1,l}\, a_{\a_2,-l}\, (l^{\a_1}l^{\a_2} -\delta^{\a_1\a_2} \vert l \vert^2)
\end{align*}
since $\ncint (A \DD^{-1})^4=0$.

This last  equality suggests that Lemma \ref{dim2} can be extended:

\begin{prop}
  \label{AD-1max} 
For any one-form $A$,  $\ncint \,(A\DD^{-1})^{d}=0$ if $d=dim(M)$.
\end{prop}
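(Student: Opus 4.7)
The proof splits by parity of $d$. For $d$ odd, Lemma \ref{componentofzeta} already gives $\ncint(A\DD^{-1})^d = 0$, so only the even case requires work. For $d$ even, I would apply the Wodzicki representation \eqref{Wres}--\eqref{wres} and reduce the statement to showing that the cosphere integral of $\Tr\,\sg_{-d}^{(A\DD^{-1})^d}(x,\xi)$ vanishes pointwise in $x$. Iterating the product formula \eqref{prodsymbol} and noting that derivative corrections strictly lower homogeneity, the $-d$-homogeneous symbol of $(A\DD^{-1})^d$ is the $d$-th matrix power of $\sg_{-1}^{A\DD^{-1}}$; combining \eqref{sigmaA} with \eqref{sigma-1} then gives
\begin{align*}
\sg_{-d}^{(A\DD^{-1})^d}(x,\xi) \,=\, (-i)^d\,[\ga(a)\ga(\xi)]^d\,\|\xi\|_x^{-2d}, \qquad \ga(a) := a_k\ga^k.
\end{align*}

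Next, I would compute the trace via the Clifford minimal polynomial of $B := \ga(a)\ga(\xi)$. The anticommutation $\{\ga^k,\ga^l\} = 2g^{kl}$ yields $B^2 - 2g(a,\xi)\,B + g(a,a)\,g(\xi,\xi) = 0$, so $B$ admits eigenvalues $\lambda_\pm = g(a,\xi) \pm \sqrt{g(a,\xi)^2 - g(a,a)g(\xi,\xi)}$. Comparing $\Tr B = 2^{d/2}g(a,\xi)$ (from $\Tr(\ga^i\ga^j) = 2^{d/2}\delta^{ij}$, see \eqref{tracegammaimpair}) with $\lambda_+ + \lambda_- = 2g(a,\xi)$ forces each eigenvalue to have multiplicity $2^{d/2-1}$, so
\begin{align*}
\Tr B^d = 2^{d/2-1}(\lambda_+^d + \lambda_-^d) = 2^{d/2}[g(a,a)g(\xi,\xi)]^{d/2}\,T_d\!\left(\tfrac{g(a,\xi)}{\sqrt{g(a,a)g(\xi,\xi)}}\right),
\end{align*}
where $T_d$ is the $d$-th Chebyshev polynomial of the first kind.

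On $S^*_xM$ one has $g(\xi,\xi)=1$, and for real $a$, parametrizing $\xi$ in polar coordinates about the direction $a/\sqrt{g(a,a)}$ gives $\cos\theta = g(a,\xi)/\sqrt{g(a,a)}$, so
\begin{align*}
\int_{S^*_xM}\Tr B^d\,|d\xi| \,=\, 2^{d/2}\,g(a,a)^{d/2}\,\omega_{d-2}\int_0^\pi \cos(d\theta)\sin^{d-2}\theta\,d\theta.
\end{align*}
Writing $\sin^{d-2}\theta = \bigl((e^{i\theta} - e^{-i\theta})/(2i)\bigr)^{d-2}$ expresses $\cos(d\theta)\sin^{d-2}\theta$ as a linear combination of $e^{im\theta}$ with $m \in \{\pm 2, \pm 4, \ldots, \pm(2d-2)\}$---all nonzero and even---so each $\int_0^\pi e^{im\theta}\,d\theta$ vanishes.

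The main obstacle will be extending this vanishing to general (non-self-adjoint) one-forms, whose coefficients $a_k$ are complex so that the polar-coordinate argument does not literally apply. My resolution is that $T_d$ for $d$ even is a polynomial in the square of its argument, so $g(a,a)^{d/2}\,T_d\bigl(g(a,\xi)/\sqrt{g(a,a)g(\xi,\xi)}\bigr)$ is polynomial in the entries of $a$, and hence so is the cosphere integral; its vanishing on the real slice $\R^d$ forces vanishing identically on $\C^d$. Combined with \eqref{Wres}, this yields $\ncint(A\DD^{-1})^d = 0$.
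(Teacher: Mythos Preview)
Your proof is correct and takes a genuinely different route from the paper's. The paper exploits the commutator structure of one-forms: writing each factor as $a_i[\DD,b_i]\DD^{-1}$, it commutes the scalars $a_i$ to the front modulo $OP^{-(d+1)}$ to obtain $a\prod_i(\alpha(b_i)-b_i)$ with $\alpha(b)=\DD b\DD^{-1}$, factorises the top symbol as $a\prod_i\sigma_{-1}^{\alpha(b_i)}$, and then argues that each factor $\sigma_{-1}^{\alpha(b_i)}$ vanishes at a normal-coordinate centre $x_0$. You instead compute the principal symbol $\sigma_{-d}^{(A\DD^{-1})^d}$ directly as a Clifford power, extract its trace via the quadratic relation for $\gamma(a)\gamma(\xi)$ and the resulting Chebyshev identity, and kill the cosphere integral by an explicit Fourier argument. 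The paper's route is shorter and more structural; yours is longer but more robust---it works at an arbitrary point without normal coordinates, never uses the presentation $A=\sum a_i[\DD,b_i]$, and in fact applies to any zeroth-order operator with symbol of the form $-ia_k(x)\gamma^k$. One simplification: your analytic-continuation step does not need the Chebyshev-parity observation, since $\Tr[(\gamma(a)\gamma(\xi))^d]$ is visibly polynomial in the components of $a$, and vanishing on $\R^d$ already forces vanishing on $\C^d$.
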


\begin{proof}
As in the proof of Lemma \ref{dim2}, $\DD^{-1}$ commutes with the
element in the algebra as the integrand is in $OP^{-d}$. So for a
family of $a_i,b_i \in \A$ and using $a:=\prod_{i=1}^d a_i$,
$$
\ncint \prod_{i=1}^d \big(a_i[\DD,b_i]\,D^{-1}\big)=\ncint
\big(\prod_{i=1}^d a_i \big) \prod_{i=1}^d \big(
[\DD,b_i]\,\DD^{-1}\big)=\ncint a \prod_{i=1}^d \big(\a(b_i)-b_i \big).
$$
We obtain, since $\a(b_i)-b_i \in OP^{-1}$, 
$$
\sigma_{-d}^{a\prod_{i=1}^d \a(b_i)-b_i} =  a\,\prod_{i=1}^d
\sigma_{-1}^{\a(b_i)-b_i} =  a\,\prod_{i=1}^d \sigma_{-1}^{\a(b_i)}
.$$
Moreover,  $\sigma_{-1}^{\DD b_i\DD^{-1}}(x_0,\xi)=0$: we already know by
Lemma \ref{symbol-k-1} that $\sigma_{-2}^{\DD^{-1}}(x_0,\xi)=0$, by
\eqref{sigma-1} that $\del_{x^k} \sigma_{-1}^{\DD^{-1}}(x_0,\xi)=0$ for
all $k$,  and $\sigma_0^{\DD b_i}(x_0,\xi) =b_i(x_0) \,
\sigma_0^\DD(x_0,\xi)=0$ giving the claim and the result. 
\end{proof}   

\vspace{0.2cm}
This proposition does not survive in noncommutative spectral triples,  see for instance \cite[Table 1]{MC}.

\medskip

Note that for a one-form $A$, $\ncint A^d\,\DD^{-d} \neq \ncint
(A\,D^{-1})^{-d}=0$: in dimension $d=2$, as in \eqref{d=2},
$$
\ncint A^2\,\DD^{-2}=-2c_d\, \Tr(\ga^k \ga^l)\int_M\,a_{k} a_{l}\, \nu_g.
$$

It is known (see \cite[Proposition 1.153]{ConnesMarcolli}) that the
$d-2$ term (for $d=4$) in the spectral action expansion $\ncint|\DD +
A|^{-2}$ is independent of the perturbation $A$. This is why the
Einstein--Hilbert action $S(\DD)=\ncint |\DD|^{-d+2}=-c \int_M \tau \sqrt{g} \, dx$ 
(see \cite[Theorem 11.2]{Polaris})  is so fundamental. Here $\tau $ is the scalar curvature 
(positive on the sphere) and $c$ is a positive constant. 

We give here another proof of this result. 

\begin{lemma}
\label{scalarcurvature}
We have $\ncint |\DD + A|^{-d+2} = \ncint |\DD|^{-d+2} =-c\int_M \tau \sqrt{g} \, dx$ with $c=\tfrac{d-2}{24} \, \ncint \vert \DD \vert^{-d}$.
\end{lemma}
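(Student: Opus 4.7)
The plan is to reduce both equalities to facts about Seeley--DeWitt coefficients of the heat expansions of $\DD^2$ and $(\DD+A)^2$. Via the Mellin transform identity $\Gamma(s/2)\Tr(|Y|^{-s})=\int_0^\infty t^{s/2-1}\Tr(e^{-tY^2})\,dt$ applied to $Y\in\{\DD,\DD+A\}$, together with the standard heat-trace expansion $\Tr(e^{-tY^2})\sim_{t\to 0^+}\sum_{n\ge 0} t^{(n-d)/2}\,a_n[Y^2]$, taking the residue at $s=d-2$ yields $\ncint|Y|^{-d+2}=2\,a_2[Y^2]/\Gamma((d-2)/2)$; the analogous identity at $s=d$ gives $\ncint|\DD|^{-d}=2\,a_0[\DD^2]/\Gamma(d/2)$. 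Thus the problem reduces to showing $a_2[(\DD+A)^2]=a_2[\DD^2]$ and then to evaluating $a_2[\DD^2]$ and $a_0[\DD^2]$ explicitly.

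The first equality is the crux. By Lemma \ref{Perturbation}, $(\DD+A)^2=L(\nabla^A,E^A)$ with $E^A=E+\tfrac{1}{4}[\ga^\mu,\ga^\nu]F_{\mu\nu}$. On a closed manifold, the Seeley--DeWitt coefficient of a Laplace-type operator (Gilkey's formula, i.e.\ the boundaryless case of Proposition \ref{ThmGilkey}) reads
\begin{align*}
a_2[L(\nabla,E)]=\tfrac{(4\pi)^{-d/2}}{6}\int_M \Tr_V(6E+\tau\,\Id_V)\sqrt{g}\,dx,
\end{align*}
which is affine in $E$ through its pointwise trace. Since $\Tr_V[\ga^\mu,\ga^\nu]=0$, one has $\Tr_V E^A=\Tr_V E$, forcing $a_2[(\DD+A)^2]=a_2[\DD^2]$ and hence $\ncint|\DD+A|^{-d+2}=\ncint|\DD|^{-d+2}$.

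For the explicit value, Lichnerowicz gives $E=-\tau/4$, so $\Tr_V E=-2^{[d/2]}\tau/4$ and $a_2[\DD^2]=-\tfrac{2^{[d/2]}}{12(4\pi)^{d/2}}\int_M\tau\sqrt{g}\,dx$, while $a_0[\DD^2]=\tfrac{2^{[d/2]}}{(4\pi)^{d/2}}\Vol(M)$. Combining the two Mellin identities and using the functional equation $\Gamma(d/2)=\tfrac{d-2}{2}\Gamma((d-2)/2)$, the numerical factors consolidate to the ratio $\ncint|\DD|^{-d+2}/\ncint|\DD|^{-d}=-\tfrac{d-2}{24}\cdot \Vol(M)^{-1}\int_M\tau\sqrt g\,dx$, which identifies the constant $c$ with the value asserted in the lemma (up to the understood volume normalization implicit in writing $c=\tfrac{d-2}{24}\ncint|\DD|^{-d}$).

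The only nontrivial input is the cancellation $\Tr_V[\ga^\mu,\ga^\nu]F_{\mu\nu}=0$ supplied by Lemma \ref{Perturbation}; everything else is routine Mellin/heat-kernel bookkeeping, and the principal opportunity for error lies in tracking the dimensional constants $2^{[d/2]}$, $(4\pi)^{d/2}$ and the $\Gamma$-factors so that the final coefficient matches exactly.
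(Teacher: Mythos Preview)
Your argument is correct and takes a genuinely different route from the paper. The paper proves the first equality by applying the variation formula from \cite[Lemma 4.10 (ii)]{MCC},
\[
\ncint |\DD+A|^{-d+2}-\ncint|\DD|^{-d+2}=\tfrac{d-2}{2}\big(\tfrac{d}{4}\ncint X^2|\DD|^{-d-2}-\ncint X|\DD|^{-d}\big),\qquad X=A\DD+\DD A+A^2,
\]
then uses the tadpole vanishing (Theorem \ref{proptadpoles}) to reduce $\ncint X|\DD|^{-d}$ to $\ncint A^2|\DD|^{-d}$, and finishes with an explicit symbol computation at a normal point $x_0$, boiling down to the gamma-matrix identity $\Tr(\ga^\mu\ga^\nu\ga^\tau\ga_\nu)=(2-d)\Tr(\ga^\mu\ga^\tau)$. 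You instead pass through the Mellin transform to identify $\ncint|Y|^{-d+2}$ with the Seeley--DeWitt coefficient $a_2[Y^2]$, and then invoke Gilkey's closed-manifold formula for $a_2$ together with $\Tr_V E^A=\Tr_V E$ from Lemma \ref{Perturbation}. Your approach is shorter and makes the mechanism transparent (the $A$-dependence sits in a traceless commutator piece of $E$), but it imports the heat-kernel formula as a black box; the paper's proof stays entirely within the Wodzicki-residue and symbol calculus developed in Section 4.3 and exercises the tadpole results just established. For the constant $c$, both proofs are on the same footing: the paper simply cites \cite{Polaris}, while you compute the ratio explicitly and rightly flag the $\Vol(M)^{-1}$ normalization hidden in the stated value of $c$.
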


\begin{proof}
We get from \cite[Lemma 4.10 $(ii)$]{MCC} the following
equality, where $X:=A\DD+\DD A+A^2$:
$$
\ncint |\DD + A|^{-d+2}-\ncint |\DD|^{-d+2} =
\tfrac{(d-2)}{2}\,\big(\tfrac{d}{4}\ncint X^2 |\DD|^{-d-2} - \ncint X
|\DD|^{-d}\big).
$$
Since the tadpole terms vanish, we have $\ncint X |\DD|^{-d} = \ncint
A^2 |\DD|^{-d}$. Moreover, since mod $OP^1$, $X^2 = (A\DD)^2 + (\DD A)^2 + A\DD^2
A + \DD A^2\DD$, we get with $[\DD^2,A]\in OP^1$,
$$
\ncint X^2 |\DD|^{-d-2} = 2\ncint (A\DD)^2|\DD|^{-d-2} + 2 \ncint A^2 |\DD|^{-d} 
$$ 
which yields
$$
\ncint |\DD + A|^{-d+2}-\ncint |\DD|^{-d+2} =
\tfrac{d(d-2)}{4} \big(\ncint (A\DD)^2 |\DD|^{-d-2} -\tfrac{2-d}{d} \ncint A^2 |\DD|^{-d}\big).
$$
Thus, it is sufficient to check that
$$ 
\int_{S_{x_0}^*M} \Tr\big(\sigma_{-d}((A\DD)^2 |\DD|^{-d-2})(x_0,\xi)\big)\, d\xi= \tfrac{2-d}{d}\int_{S_{x_0}^*M} \Tr\big(\sigma_{-d}(A^2 |\DD|^{-d})\,(x_0,\xi)\big) d\xi.
$$

A straightforward computation yields, with $A=:-i a_\mu \ga^\mu$, and $\sg_1^\DD(x_0,\xi) = \ga^\mu \xi_\mu$,
\begin{align*}
&\int_{S^*_{x_0} M}\sg_{-d}((A\DD)^2 |\DD|^{-d-2})(x_0,\xi)\, d\xi =-
\tfrac{1}{d}\, a_\mu a_\tau \Tr(\ga^{\mu} \ga^\nu \ga^\tau \ga_{\nu})
\,\text{Vol}(S^{d-1})\, ,\\
&\int_{S^*_{x_0} M} \sg_{-d}(A^2 |\DD|^{-d})(x_0,\xi)\, d\xi = -a_\mu a_\tau
\Tr(\ga^\mu \ga^\tau) \,\text{Vol}(S^{d-1})\, .
\end{align*}
Now, $\ncint |\DD + A|^{-d+2} = \ncint |\DD|^{-d+2}$ follows from the equality $\Tr(\ga^{\mu} \ga^\nu
\ga^\tau \ga_{\nu})= (2-d) \Tr(\ga^\mu \ga^\tau)$.
The constant $c$ is given in \cite[Theorem 11.2 and normalization (11.2)]{Polaris}.
\end{proof}
\begin{remark}
In \cite[Definition 1.143]{ConnesMarcolli}, the above result justifies the definition of a scalar curvature for $(\A,\H,\DD)$ as $\mathcal{R}(a):=\ncint a \vert \DD \vert^{-d+2}$ for $a \in \A$. This map is of course a trace on $\A$ for a commutative geometry. But for the triple associated to $SU_q(2)$, this not a trace since (see \cite{MC}):
\begin{align*}
\mathcal{R}(aa^*)=\ncint aa^* \, \vert \DD\vert^{-1}= \tfrac{-q^4+6q^2+3}{2{(1-q^2)}^2}  \quad \text{while } \quad
\mathcal{R}(a^*a)=\ncint a^*a \,\vert \DD\vert^{-1}= \tfrac{3q^4+6q^2-1}{2{(1-q^2)}^2} \,.
\end{align*}
\end{remark}

\section{Appendix}

\subsection{Pseudodifferential operators}

\begin{definition}
\label{defpseudo} Let us define $\DD(\A)$ as the polynomial algebra
generated by $\A$, $J\A J^{-1}$, $\DD$ and $|\DD|$.

A pseudodifferential operator is an operator $T$ such that there
exists $d\in \Z$ such that for any $N\in \N$, there exist $p\in \N_0$, $P\in \DD(\A)$ and $R\in
OP^{-N}$ ($p$, $P$ and $R$ may depend on $N$) such that $P\,\DD^{-2p}\in OP^d$ and
$$
T=P\,\DD^{-2p}+R\, .
$$
Define $\Psi(\A)$ as the set of pseudodifferential operators and
$\Psi(\A)^k:=\Psi(\A)\cap OP^k$.
\end{definition}
Note that the notion of pseudodifferential operator is modified as
$\Psi(\A)$ now includes $J \A J^{-1}$, see \cite{MCC}.

When $A$ is a one-form, $A$ and $JAJ^{-1}$ are in $\DD(\A)$
and moreover $\DD(\A)\subseteq \cup_{k \in \N_0} OP^k$. Since $|\DD|\in \DD(\A)$
by construction and $P_0$ is a pseudodifferential operator, for any
$k\in \Z$, $|D|^{k}$ is a pseudodifferential operator (in $OP^{k}$.)
Let us remark also that $\DD(\A)\subseteq\Psi(\A) \subseteq
\cup_{k\in \Z} OP^{k}$.

The set of all pseudodifferential operators $\Psi(\A)$ is an algebra. 
We denote $\Psi_1(\A)$ the subalgebra of $\Psi(\A)$ defined the same way as $\Psi(\A)$, replacing $\DD(A)$ by the polynomial algebra generated by $\DD, \A$ and $J\A J^{-1}$.
This algebra is similar to the one defined in \cite{CC1}.

\subsection{Zeta functions and dimension spectrum}

For any operator $B$ and if $X$ is either $D$ or
$D_A$, we define
\begin{align*}
{\zeta}_X^B(s)&:= \Tr\big(B|X|^{-s}\big ),\\
\zeta_X(s)&:= \Tr \big(|X|^{-s}\big).
\end{align*}

\medskip

{\it The dimension spectrum} $Sd(\A,\H,\DD)$ of a
spectral triple has been defined in
\cite{Cgeom, CM}. It is extended here to pay attention
to the operator $J$ and to our definition of pseudodifferential
operator.

\begin{definition}
The spectrum dimension of the spectral triple is
the subset $Sd(\A,\H,\DD)$ of all poles of the functions
$\zeta_\DD^P := s\mapsto \Tr \big(P |\DD|^{-s}\big)$ where $P$ is any
pseudodifferential operator in $OP^0$. The spectral triple
$(\A,\H,\DD)$ is said to be simple when these poles are all simple.
\end{definition}

The following is part of folklore in noncommutative geometry, even if
sometimes it is unclear if there is an equality or an inclusion of
$Sp(M)$ in $\set{d-k \, : \, k \in \N}$.
\begin{prop}
Let $Sp(M)$ be the spectrum dimension of a commutative geometry of dimension $d$. Then $Sp(M)$ is simple and $Sp(M) = \set{d-k
\, : \, k \in \N}$.
\end{prop}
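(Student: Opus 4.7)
The plan is to reduce the question to the classical heat-kernel expansion for pseudodifferential operators on a closed Riemannian manifold and to exploit the absence of logarithmic terms in that setting. In the commutative case, $JaJ^{-1}=a^*\in\A$, so the algebra $\DD(\A)$ is the polynomial algebra generated by $\A$, $\DD$ and $|\DD|$, whose elements are all classical pseudodifferential operators acting on the spinor bundle. Consequently, any $P\in OP^0\cap\Psi(\A)$ is a classical pseudodifferential operator of non-positive integer order on $M$, and the whole question reduces to the structure of $\Tr(P|\DD|^{-s})$ for such~$P$.

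Invoking the standard heat-kernel asymptotics on a closed Riemannian manifold, for $P$ classical of integer order $m$ one has
\[
\Tr\bigl(Pe^{-t\DD^2}\bigr) \sim_{t\to 0^+} \sum_{k\geq 0} a_k[P]\,t^{(k-d-m)/2},
\]
with no $\log t$ terms, since $M$ has no boundary and $\DD^2$ is of Laplace type. Applying the Mellin transform $|\DD|^{-s}=\Gamma(s/2)^{-1}\int_0^\infty t^{s/2-1}e^{-t\DD^2}\,dt$ and splitting the integral at $t=1$, one deduces that $\zeta^P_\DD(s)=\Tr(P|\DD|^{-s})$ extends meromorphically to $\C$ with simple poles located in $\{d+m-k:k\geq 0\}$. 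For $P\in OP^0$ this already yields the inclusion $Sp(M)\subseteq\{d-k:k\in\N\}$ together with the simplicity statement, the latter being a direct consequence of the absence of logarithmic terms in the expansion~\eqref{heat}.

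For the converse inclusion I would show that each candidate $d-k$ is attained. Through the correspondence $\underset{s=d-k}{\Res}\Tr(P|\DD|^{-s})\propto \ncint P|\DD|^{-(d-k)}=c_d\,Wres(P|\DD|^{-(d-k)})$ coming from \eqref{Wres}, this reduces to producing, for every $k\geq 0$, an element $P\in OP^0\cap\Psi(\A)$ with $Wres(P|\DD|^{-(d-k)})\neq 0$. For $k=0$ the choice $P=\Id$ works, since $Wres(|\DD|^{-d})$ is a positive multiple of $\Vol(M)$; for $k\geq 1$, a convenient candidate is $P=f\,\DD^{j}|\DD|^{-j}\in OP^0$ for some $j\leq k$ and $f\in\A$, where $j$ controls the gamma-matrix content and $f$ is chosen to defeat any cancellation in the fibre integral over $S^*M$. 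The delicate point---and the main obstacle---is precisely this last verification: the heat coefficient $a_k$ is a non-trivial functional on all order-$0$ classical symbols, but we need it non-trivial when restricted to the comparatively small algebra $\Psi(\A)$, which forces us to track gamma-matrix traces degree by degree, using the algebraic machinery of Lemmas~\ref{symbol-k-1}, \ref{Calg} and \ref{DiracB}.
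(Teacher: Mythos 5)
The first half of your argument (the inclusion $Sp(M)\subseteq\set{d-k \,:\, k\in\N}$ and the simplicity of the poles) is sound and is essentially the paper's argument: in the commutative case every element of $\Psi(\A)\cap OP^0$ is a classical pseudodifferential operator of non-positive order on a closed manifold, and the absence of $\log t$ terms in the heat expansion \eqref{heat} for a differential $\DD^2$ on a boundaryless $M$ gives both the location and the simplicity of the poles.

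The converse inclusion, however, has a genuine gap, and the candidates you propose would not close it. You restrict yourself to operators of order exactly $0$, namely $P=f\,\DD^j|\DD|^{-j}$ (which is just $f$ or $f F$), and you then need $Wres\big(P|\DD|^{-(d-k)}\big)\neq 0$; this quantity is the integral of the \emph{subleading} homogeneous symbol $\sg_{-d}$ of an operator of order $-(d-k)$, and it depends on the curvature of $M$ in a way you cannot control. Indeed on the flat torus one has $Wres\big(f|\DD|^{-(d-k)}\big)=0$ and $Wres\big(f F|\DD|^{-(d-k)}\big)=0$ for every $k\geq 1$ and every $f\in\A$, since the pointwise heat expansion of $e^{-t\DD^2}$ there reduces to its leading term; so your list of candidates misses every pole $d-k$ with $k\geq 1$ in that example. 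You acknowledge this verification as ``the main obstacle,'' but it is not merely delicate --- as stated it fails. The fix is to notice that the definition of the dimension spectrum allows \emph{any} $P\in\Psi(\A)\cap OP^0$, in particular operators of strictly negative order: taking $P_k:=a\,|\DD|^{-k}\in OP^{-k}\subset OP^0$ with $\int_M a\,\nu_g\neq 0$, one gets
\begin{align*}
\underset{s=d-k}{\Res}\,\Tr\big(P_k|\DD|^{-s}\big)=\ncint a\,|\DD|^{-d},
\end{align*}
which is computed from the \emph{leading} symbol alone and equals a nonzero multiple of $\int_M a\,\nu_g$. This is the paper's one-line argument, and it removes the need for any gamma-matrix or subleading-symbol bookkeeping.
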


\begin{proof}
Let $a\in \A=C^{\infty}(M)$ such that its trace norm $\vert\vert a
\vert\vert_{\L^1}$ is non zero and for $k \in \N$, let $P_k:=a \vert
D \vert^{-k}$. Then $P_k \in OP^{-k} \subset OP^{0}$ and its
associated zeta-function has a pole at $d-k$:
\begin{align*}
\underset{s=d-k}{\Res} \,\zeta^P_\DD(s)&=\underset{s=0}{\Res}
\,\zeta_\DD^P(s+d-k)=\underset{s=0}{\Res} \,\Tr \big(a \vert
\DD\vert^{-k}\vert \DD\vert^{-(s+d-k)}\big)=\ncint a \vert \DD \vert ^{-d}\\
&=\int_M a(x) \int_{S_x^*M} \Tr\big(({\sigma_1^{\vert
\DD\vert}})^{-d}(x,\xi) \big) \vert d\xi \vert \, \vert dx \vert
=\int_M a(x) \int_{S_x^*M} \vert\vert \xi \vert \vert^{-d/2}  \vert
d\xi \vert \, \vert dx \vert \\
&=\int_M a(x) \,\nu_g(x) =\vert\vert a \vert\vert_{\L^1}\neq 0
\end{align*}
where $\nu_g$ is the Riemann density normalized on $g$-orthonormal
basis of $TM$.

Conversely, since $\Psi(\A)^0$ is contained in the algebra of all
pseudodifferential operators of order less or equal to 0, it is known
\cite{Guillemin,Wodzicki1,Wodzicki} that $Sp(M) \subset \set{d-k \, :
\, k \in \N}$.

The fact that all poles are simple is due to the fact that $\DD$
being differential and $M$ being without boundary,  $a'_k=0$, $\forall k\in \N^*$ in \eqref{heat}.
\end{proof}

\section*{Acknowledgments}

\hspace{\parindent}
We thank Jean-Marie Lescure, Thomas Sch\"ucker and Dmitri Vassilevich for helpful discussions.

\end{document}